\DeclarePairedDelimiter{\ceil}{\lceil}{\rceil}
\DeclareMathOperator*{\argmax}{arg\,max}
\DeclareMathOperator*{\argmin}{arg\,min} 
\renewcommand\tocchapter[3]{%
  \indentlabel{\@ifnotempty{#2}{\ignorespaces#2.\quad}}#3%
}
\newcommand\@dotsep{4.5}
\def\@tocline#1#2#3#4#5#6#7{\relax
  \ifnum #1>\c@tocdepth 
  \else
    \par \addpenalty\@secpenalty\addvspace{#2}%
    \begingroup \hyphenpenalty\@M
    \@ifempty{#4}{%
      \@tempdima\csname r@tocindent\number#1\endcsname\relax
    }{%
      \@tempdima#4\relax
    }%
    \parindent\z@ \leftskip#3\relax \advance\leftskip\@tempdima\relax
    \rightskip\@pnumwidth plus1em \parfillskip-\@pnumwidth
    #5\leavevmode\hskip-\@tempdima{#6}\nobreak
    \leaders\hbox{$\m@th\mkern \@dotsep mu\hbox{.}\mkern \@dotsep mu$}\hfill
    \nobreak
    \hbox to\@pnumwidth{\@tocpagenum{#7}}\par
    \nobreak
    \endgroup
  \fi}
\renewcommand\csname r@tocindent0\endcsname{0pt}
\let\mathbb=\mathds
\DeclareMathOperator{\Tr}{Tr}
\newcommand{\be}{{\mathbf e}}
\newcommand{\tr}{\operatorname{Tr}}
\newcommand{\al}{{\alpha}}
\newcommand{\ten}{\otimes}
\newcommand{\pl}{\hspace{.1cm}}
\newcommand{\norm}[2]{\parallel \! #1 \! \parallel_{#2}}
\def\0{{\mathbf{0}}}
\def\1{{\mathbf{1}}}
\def\2{{\mathbf{2}}}
\def\3{{\mathbf{3}}}
\def\4{{\mathbf{4}}}
\def\5{{\mathbf{5}}}
\def\6{{\mathbf{6}}}
\def\7{{\mathbf{7}}}
\def\8{{\mathbf{8}}}
\def\9{{\mathbf{9}}}
\def\be{\begin{equation}}
\def\ee{\end{equation}}
\def\bea{\begin{eqnarray}}
\def\eea{\end{eqnarray}}
\def\eps{\varepsilon}
\theoremstyle{plain}
\newtheorem{fact}{Fact} 
\newtheorem*{lemm1}{Lemma~\ref{lemm:key}}
\newtheorem*{lemm2}{Lemma~\ref{lemm:converse}}
\newtheorem*{prop_add}{Proposition~\ref{prop:properties}}
\theoremstyle{definition}
\newtheorem{defn}[theo]{Definition} 
\theoremstyle{remark}
\newtheorem{remark}[theo]{Remark}
\newcommand{\opnorm}{\@ifstar\@opnorms\@opnorm}
\newcommand{\@opnorms}[1]{%
	$\left|\mkern-1.5mu\left|\mkern-1.5mu\left|
	#1
	\right|\mkern-1.5mu\right|\mkern-1.5mu\right|$
}
\newcommand{\@opnorm}[2][]{%
	\mathopen{#1|\mkern-1.5mu#1|\mkern-1.5mu#1|}
	#2
	\mathclose{#1|\mkern-1.5mu#1|\mkern-1.5mu#1|}
}
\begin{document}

\let\origmaketitle\maketitle
\def\maketitle{
	\begingroup
	\def\uppercasenonmath##1{} 
	\let\MakeUppercase\relax 
	\origmaketitle
	\endgroup
}

\title{\bfseries \Large{
Tight One-Shot Analysis for Convex Splitting\\
with Applications in Quantum Information Theory
}}

\author{ \normalsize {Hao-Chung Cheng$^{1\text{--}5}$ and Li Gao$^6$}}
\address{\small  	
$^1$Department of Electrical Engineering and Graduate Institute of Communication Engineering,\\ National Taiwan University, Taipei 106, Taiwan (R.O.C.)\\
$^2$Department of Mathematics, National Taiwan University\\
$^3$Center for Quantum Science and Engineering,  National Taiwan University\\
$^4$Physics Division, National Center for Theoretical Sciences, Taipei 10617, Taiwan (R.O.C.)\\
$^5$Hon Hai (Foxconn) Quantum Computing Center, New Taipei City 236, Taiwan (R.O.C.)\\
$^6$Department of Mathematics, University of Houston, Houston, TX 77204, USA
}

\email{\href{mailto:haochung.ch@gmail.com}{haochung.ch@gmail.com}}
 \email{\href{mailto:gaolimath@gmail.com}{gaolimath@gmail.com}}

\date{\today}
\begin{abstract}
	Convex splitting is a powerful technique in quantum information theory used in proving the achievability of numerous information-processing protocols such as quantum state redistribution and quantum network channel coding.
	In this work, we establish 
	a one-shot error exponent and a one-shot exponential strong converse
	for convex splitting with trace distance as an error criterion. 
	Our results show that the derived error exponent (strong converse exponent) is positive if and only if the rate is in (outside) the achievable region.  This leads to new one-shot exponent results in various tasks such as communication over quantum wiretap channels, secret key distillation, one-way quantum message compression, quantum measurement simulation, and quantum channel coding with side information at the transmitter. We also establish a near-optimal one-shot characterization of the sample complexity for convex splitting, which yields matched second-order asymptotics. This then leads to stronger one-shot analysis in many quantum information-theoretic tasks. 

\end{abstract}
\maketitle

\vspace{-29pt}
{
  \hypersetup{linkcolor=black}
  \tableofcontents
}

\newpage
\section{Introduction} \label{sec:introduction}
\emph{Convex splitting} \cite{HJM+10, ADJ17} is a powerful technique to decouple a bipartite quantum system, which has various  applications in quantum information theory such as achievability in quantum state redistribution, port-based teleportation \cite{ADJ17},  and quantum  channel coding \cite{HJM+10, YHD11, SW15, Dup10, DHL10, RSW16, ADJ17, WDW17, Wil17b, AJW19a, KKG+19, SHA20, KW20}.
\emph{Quantum covering} \cite{AW02,Hay15,Hay17, CG22, SGC22b}, as another useful technique in quantum information,  aims to approximate a quantum state by sampling from a prior distribution and querying a quantum channel.
Quantum covering was first studied by Ahlswede and Winter \cite{AW02} under the name of \emph{Operator Chernoff Bound} and later developed by Hayashi \cite{Hay15, Hay17}; it has further applications in source coding \cite{DW03}, channel identification \cite{AW02}, channel resolvability \cite[\S 9]{Hay17}, and quantum channel simulation \cite{WHB+12, 6757002}.

In this paper, we approach these two substantial tasks using a unified technique from noncommutative $L_p$ spaces, and obtain  tight one-shot error exponents, one-shot exponential strong converses, and the corresponding \emph{near-optimal} characterizations of the sample complexity. We will show that the proposed tight analysis improves upon existing achievability results in quantum information theory

We consider the following problems.
For density operators  $\rho_{AB}$, $\tau_A$, and any integer $M\in\mathds{N}$,
we let
\begin{align} 
\begin{split}\label{eq:total_state}
\omega_{A_1\ldots A_M B} := \frac{1}{M}\sum_{m=1}^M
\rho_{A_m B} \otimes \tau_{A_1} \otimes \tau_{A_2} \otimes \cdots \otimes \tau_{A_{m-1}} 
\otimes \tau_{A_{m+1}} \otimes \cdots \otimes \tau_{A_M}, 
\end{split}
\end{align}
where $\rho_{A_m B} = \rho_{AB}$ and $\tau_{A_m} = \tau_A$ for all $m\in[M]:=\{1,\ldots,M\}$.
The statistical mixture in \eqref{eq:total_state} is a \emph{convex-splitting} operation that attempts to decouple the system $A$ from $B$ via increasing the integer $M$ \cite{HJM+10, ADJ17}; see Figure~\ref{fig:convex_splitting} below for an illustration.
This operation serves as a technical primitive in quantum information theory and has
a wealth of applications therein \cite{HJM+10, ADJ17, AJW19a, Wil17b, KKG+19, SHA20, KW20}.
The goal of this paper is to provide tight one-shot characterizations on how well the convex splitting can accomplish decoupling, and accordingly, its applications of improving analysis for the existing quantum information-theoretic tasks.

\begin{figure}[h!]
	\centering
	\resizebox{1\columnwidth}{!}{ 
		\includegraphics{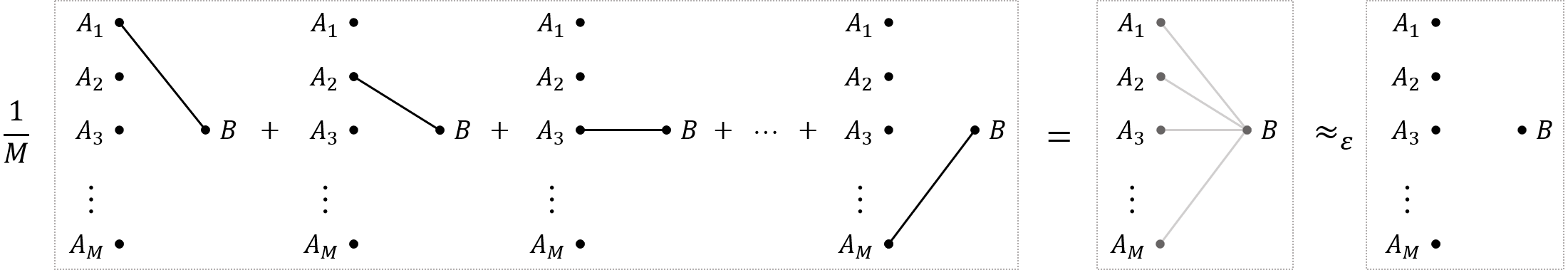}     		   		
	}
	\caption{
    \small
    Depiction of a convex splitting.
    On the left part, solid lines connected each system is a bipartite (correlated) state, while the other systems are left isolated. 
    On the middle part is the statistical mixture of the $M$ states on the left; we can see there are still correlations in the presence between system $B$ and each system $A$ (depicted in light-gray lines).
    When $M$ is sufficiently large, then the statistical mixture is close to the right part of product state (in trace norm), where each isolated dot represents an independent quantum system.
	}
	\label{fig:convex_splitting}
\end{figure}

To measure the closeness between the mixture $\omega_{A_1\ldots A_M B}$ and the product its marginal states, we adopt trace distance as the error criterion\footnote{
We note that some literature adopted quantum relative entropy as an criterion \cite{ADJ17, AJW19a}. In this paper we consider the trace distance, which can be viewed as a generalization of the classical total variation distance \cite{anshu2017unified}.
}, i.e.~denoting $\|\cdot\|_1$ as the trace norm,
\begin{align}
	\begin{split} \label{eq:covering}
&\Delta^\texttt{c}_M\left( \rho_{AB} \, \Vert\, \tau_A\right) :=
\frac12\left\| \omega_{A_1\ldots A_M B} - \bigotimes_{m=1}^M \tau_{A_m} \otimes \rho_B \right\|_1.
	\end{split}
\end{align}
It is natural to ask the following questions:
\begin{align}
&\textbf{Given an $\eps\in(0,1)$, how small $M$ can be to achieve $\Delta^\texttt{c}_M\left( \rho_{AB} \, \Vert\, \tau_A\right)\leq \eps$?} \tag{\text{Q1}} \label{eq:Q2} \\
&\textbf{Given an integer $M$, how small is $\Delta^\texttt{c}_M\left( \rho_{AB} \, \Vert\, \tau_A\right)$? } \tag{\text{Q2}} \label{eq:Q1}
\end{align}
Regarding Question \eqref{eq:Q2}, the minimum integer $M$ to achieve $\Delta _M\left( \rho_{AB} \, \Vert\, \tau_A\right)\leq \eps$ is  called the \emph{sample complexity} for achieving an $\eps$-convex-splitting, for which we denoted by $M_\varepsilon^{\textnormal{\texttt{c}}}$.
We obtain a tight one-shot characterization for an $\eps$-convex-splitting in terms of the \emph{hypothesis-testing information} (Theorems~\ref{theo:direct} \& \ref{theo:converse} of Section~\ref{sec:sample}):
	\begin{align} \label{eq:intro_sample}
	\log M_\varepsilon^{\textnormal{\texttt{c}}}\left(\rho_{AB} \,\Vert\, \tau_A\right)
	\approx D_\textnormal{h}^{1-\varepsilon }\left(\rho_{AB}\,\|\,\tau_A\otimes \rho_B \right).
	\end{align}
	Here, $\eps$-hypothesis-testing divergence is $\displaystyle D_\textnormal{h}^\eps(\rho\,\Vert\, \sigma) := \sup_{0\leq T\leq \mathds{1} } \left\{ -\log \Tr[\sigma T] : \Tr[\rho T] \geq 1 - \varepsilon \right\}$ \cite{WR13, TH13, Li14}.
	The approximation ``$\approx$'' is up to some logarithmic terms.
%
We argue that the established characterization is \emph{near optimal} in the sense that the second-order rate obtained are matched in the identical and independently distributed (i.i.d.) asymptotic scenario, i.e.~every $\eps\in(0,1)$,
	\begin{align}
		\log M_\varepsilon^{\textnormal{\texttt{c}}}\left(\rho_{AB}^{\otimes n} \,\Vert\, \tau_A^{\otimes n} \right)
		&= n I(\rho_{AB}\,\Vert\, \tau_{A}) - \sqrt{n V(\rho_{AB}\,\Vert\, \tau_{A} \otimes \rho_B) } \,  \Phi^{-1}(\eps) + O(\log n),
	\end{align}
	where $ \Phi^{-1}$ a inverse cumulative function of standard normal distribution, $I(\rho_{AB}\,\Vert\, \tau_{A})$ is a \emph{generalized mutual information} \cite{HT14}, and $V$ is a generalized relative information variance \cite{TH13, Li14}.

Regarding \eqref{eq:Q1}, we are concerned with whether the error, $\Delta _M^{\texttt{c}}$, can be upper bounded by an exponentially decaying quantity and when the associated (one-shot) error exponent is positive.
We term this an \emph{one-shot error exponent} for convex splitting.
On the other hand, we call an \emph{one-shot strong converse exponent} if the error can be lower bounded by one minus an exponentially decaying quantity.
We show that  (Theorems~\ref{theo:exponent} \& \ref{theo:sc} of Section~\ref{sec:exponet}):
\begin{subnumcases}{}
\Delta^\texttt{c}_M\left( \rho_{AB} \Vert \tau_A\right) \leq \mathrm{e}^{ - \sup_{\alpha \in [1,2]} \frac{\alpha-1}{\alpha} ( \log M -  I_\alpha^*(\rho_{AB}\,\|\,\tau_A) ) },
& $\log M > I(\rho_{AB}\,\|\,\tau_A)$, \label{eq:intro_error_exponent} \\
\Delta^\texttt{c}_M\left( \rho_{AB} \Vert \tau_A\right) \geq 1 - 4\, \mathrm{e}^{ - \sup_{\alpha \in [\sfrac{1}{2},{1}]} \frac{1-\alpha}{\alpha} ( D_{2 - \sfrac{1}{\alpha}}\left(\rho_{AB}\,\|\,\tau_A\otimes \rho_B\right) - \log M )}, & $\log M < I(\rho_{AB}\,\|\,\tau_A)$, \label{eq:intro_sc} 
\end{subnumcases}
where
$
I_\alpha^*\left(\rho_{AB}\,\|\,\tau_A\right)$ is a \emph{generalized sandwiched R\'enyi information} \cite{HT14} defined via the quantum sandwiched R\'enyi divergence $D_\alpha^*$ \cite{MDS+13, WWY14}, and $D_\alpha$ is the Petz--R\'enyi divergence~\cite{Pet86} (see Section~\ref{sec:exponet} for the precise definitions).
The  error exponent established in \eqref{eq:intro_error_exponent} is expressed in terms of $D_\alpha^*$ with order $\alpha \in [1,2]$, as opposed to previous results using the max-relative entropy $D_\infty^*$ \cite{Dat09,ADJ17}.
Then, the positivity of the error immediately implies an \emph{achievable rate region}: $\log M > I(\rho_{AB}\,\|\,\tau_A) = \lim_{\alpha \to 1} I_\alpha^*\left(\rho_{AB}\,\|\,\tau_A\right)$ without \emph{smoothing} \cite[Corollary 1]{AJW19a}, \cite[Lemma 2]{Wil17b}, \cite[Lemma 8]{KKG+19}.
On the other hand, the exponent in \eqref{eq:intro_sc} is positive if and only if $\log M < I(\rho_{AB}\,\|\,\tau_A)$.
This indicates that the (even in the one-shot setting) the {generalized mutual information} $I(\rho_{AB}\,\|\,\tau_A)$ is a fundamental limit that determines a  \emph{sharp} phase transition\footnote{In Ref.~\cite{ADJ17}, the authors also provided a \emph{weak converse}. Namely, the convex-splitting error is bounded away from $0$ once $\log M < I(\rho_{AB}\,\|\,\tau_A)$, but it did not characterize if the error will converge to $1$.} for the convex-splitting error $\Delta^\texttt{c}_M\left( \rho_{AB} \Vert \tau_A\right)$.
Notably, both the exponents in \eqref{eq:intro_error_exponent} and \eqref{eq:intro_sc} are additive for $n$-fold product $\rho_{AB}^{\otimes n}$ and $\tau_A^{\otimes n}$, demonstrating an exponential convergence of the error and the \emph{exponential strong converse}, for \emph{any} integer $n$.


Our results then apply to various quantum information-processing tasks such as private communication over quantum wiretap channels, secret key distillation, one-way quantum message compression, quantum measurement simulation, and entanglement-assisted classical communication with channel state at the transmitter (see Section~\ref{sec:app}).
Specifically, we demonstrate that (as observed in many literature \cite{WKT15, anshu2017unified, AJW19a, Wil17b, Wilde2}) the error of a quantum information-theoretic task is usually composed of two parts: the \emph{packing error} and the \emph{covering error}.
A \emph{one-shot quantum packing lemma} (which slightly improves upon the analysis of the \emph{position-based coding} \cite{AJW19a}) by one of the author \cite{Cheng2022} describes the packing error, while the analysis of convex splitting provided in this work characterizes the covering error.

\medskip
This paper is organized as follows. In Section~\ref{sec:notation}, we formulate the convex splitting via noncommutative $L_p$ norms and present one-shot upper and lower analysis.
Section~\ref{sec:sample} devotes to a near-optimal characterization of the sample complexity regarding Question~\eqref{eq:Q2}.
Section~\ref{sec:exponet} contains a one-shot error exponent and one-shot exponential strong converse regarding Question~\eqref{eq:Q1}.
In Section~\ref{sec:app}, we collect applications of the analysis of convex splitting in quantum information theory.
We conclude the paper in Section~\ref{sec:conclusions}.

\section{ Preliminaries and Convex Splitting} \label{sec:notation}
For $p\geq 1$, the Schatten $p$-norm of an operator $X$ is defined as
\begin{align}
	\|X\|_p := \left( \Tr\left[ \left|X\right|^p \right] \right)^{\sfrac{1}{p}},
\end{align}
where $\Tr$ is the standard matrix trace. Let $\mathsf{A}$ and $\mathsf{B}$ be Hilbert spaces.
We denote $\mathcal{S}(\mathsf{A})$ as the set of all density operators (positive and trace $1$, denoted with subscript, as $\rho_A $) on $\mathsf{A}$. Given a natural number $M\in\mathds{N}^+$, a density $\tau_A\in \mathcal{S}(\mathsf{A})$ and a bipartite density operator $\rho_{AB}\in \mathcal{S}(\mathsf{A}\otimes \mathsf{B})$, we define
\begin{align} \label{eq:total_state}
	\omega_{A_1\ldots A_M B} := \frac{1}{M}\sum_{m=1}^M
	\rho_{A_m B} \otimes \tau_{A_1} \otimes \tau_{A_2} \cdots \otimes \tau_{A_{m-1}} \otimes \tau_{A_{m+1}} \otimes \cdots \otimes \tau_{A_M} \in \mathcal{S}(\mathsf{A}_1\otimes \cdots \otimes \mathsf{A}_M \otimes \mathsf{B}),
\end{align}
where for each $m\in[M]:=\{1,\ldots,M\} $, $\rho_{A_m B} = \rho_{AB}$ and $\tau_{A_m} = \tau_A$.
Without loss of generality, we will further assume that the density operator $\tau_A$ is invertible; otherwise, we can always restrict the associated Hilbert space $\mathsf{A}$ to the support of $\tau_A$.

Throughout this paper, the following quantities in terms of the trace distance are considered as an error criterion for convex splitting:
\begin{align}
\begin{split} \label{eq:covering}
\Delta^\texttt{c}_M\left( \rho_{AB} \, \Vert\, \tau_A\right) &:=
\frac12\left\| \omega_{A_1\ldots A_M B} - \tau_{A}^{\otimes M} \otimes \rho_B \right\|_1; \\
\Delta^\texttt{c}_M\left(A:B\right)_{\rho} &:=
\Delta^\texttt{c}_M\left( \rho_{AB} \, \Vert\, \rho_A\right).
\end{split}
\end{align}
Here, the superscript `$\texttt{c}$' stands for ``covering''.



Our approach is to formulate the error, $\Delta^\texttt{c}_M\left( \rho_{AB} \, \Vert\, \tau_A\right)$, as Kosaki's \emph{weighted noncommutative  $L_p$ norm} \cite{Kos84}, and then adopt proper functional analytic tools for convex splitting.
We start with rewriting the quantities in \eqref{eq:covering} using the noncommutative $L_p$-norm.
For every density operator $\tau_A$, we define the following maps for  $m\in[M]$:
\begin{align}
	\begin{split} \label{eq:Theta}
	&\pi_m:  X_{AB} \mapsto \mathds{1}_{A_1}\otimes \cdots \otimes \mathds{1}_{A_{m-1}} \otimes  X_{A_m B} \otimes \mathds{1}_{A_{m+1}} \otimes \cdots \otimes \mathds{1}_{A_M};\\
	&\mathrm{E}_m : X_{AB} \mapsto \mathds{1}_{A_1}\otimes \cdots \otimes \mathds{1}_{A_{m-1}} \otimes \Tr_{A_m}\left[ X_{A_m B} \tau_{A_m} \right] \otimes  \mathds{1}_{A_m} \otimes \mathds{1}_{A_{m+1}} \otimes \cdots \otimes \mathds{1}_{A_M};\\
	&\Theta:=
	\frac{1}{M}\sum_{m=1}^M \left( \pi_m - \mathrm{E}_m \right).
	\end{split}
\end{align}
Given $p\geq 1$ and $\gamma\in[0,1]$, we introduce the following \emph{asymmetric noncommutative weighted $L_p$-norm} with respective to a density operator $\sigma$ and the associated \emph{noncommutative weighted $L_p$-space} \cite{Kos84, JX03} as
\begin{align}
	\begin{split} \label{eq:norm}
	\left\|X\right\|_{p,\gamma,\sigma} &:= \left( \Tr\left[ \left| \sigma^{\frac{1-\gamma}{p}} X \sigma^{\frac{\gamma}{p}} \right|^p \right] \right)^{\sfrac1p};\\
	L_{p, \gamma} (\sigma)&:= \left\{ X: \|X\|_{p, \gamma, \sigma} < \infty  \right\}.
	\end{split}
\end{align}
For two positive operators $X$ and $Y$ 
we introduce the notation of \emph{asymmetric noncommutative quotient for $X$ over $Y$} as\footnote{If $Y$ is not invertible, we then take the Moore--Penrose pseudo-inverse of $Y$ instead.}
\begin{align} \label{eq:quotient}
	\frac{X}{Y} := Y^{\gamma-1} X Y^{-\gamma}.
\end{align}
This notation of noncommutative quotient depends on the parameter $\gamma \in[0,1]$. To ease the burden of notation, the parameter $\gamma$ will be omitted whenever there is no confusion.

Given $M\in\mathds{N}$ and density operators $\rho_{AB}$ and $\tau_A > 0$,
our key observation is the following formulation: 
 for \emph{any} density operator $\sigma_B \in \mathcal{S}(\mathsf{B})$:
\begin{align} \label{eq:formulation}
\boxed{
	\;\;\left\| \omega_{A_1\ldots A_M B} - \tau_A^{\otimes M}\otimes \rho_{B} \right\|_1
	= \left\| \Theta\left( \frac{\rho_{AB}}{\tau_A\otimes \sigma_B} \right) \right\|_{1,\gamma, \tau_A^{\otimes M}\otimes \sigma_B}.\;\;
 }
\end{align}
Note that the above identity is for any $\gamma \in [0,1]$ with the corresponding asymmetric noncommutative quotient $\frac{\rho_{AB}}{\tau_A\ten \sigma_B}$ (which also depends on $\gamma$), and for any $\sigma_B\in\mathcal{S}(\mathsf{B})$ in the weight.

\begin{remark}
For a classical-quantum state $\rho_{XB} = \sum_{x\in\mathcal{X}} p_X(x) |x\rangle \langle x| \otimes \rho_B^x$, by substituting $A \leftarrow X$, i.e.~a classical system with probability distribution $p_X$, and $\tau_A \leftarrow \rho_X$ in \eqref{eq:formulation},
	then the convex splitting reduces to the \emph{quantum soft covering} studied in \cite{CG22}:
	\begin{align}
	\Delta^\texttt{c}_M\left(X:B\right)_{\rho} = \frac12\left\| \omega_{X_1\ldots X_M B} - \rho_X^{\otimes M} \otimes \rho_{B} \right\|_1 = \mathds{E}_{x_m \sim p_X} \frac12 \left\| \frac1M \sum_{m\in[M]} \rho_B^{x_m} - \rho_B \right\|_1.
	\end{align}
\end{remark}

Our first result is the following map norm of convex splitting.
\begin{lemm}[Map norm of convex splitting] \label{lemm:key}
	For any $M\in \mathds{N}$ and density operator $\tau_A$, let $\Theta$ be defined as in \eqref{eq:Theta}. Then, for every density operator $\sigma_B$ and $\gamma\in[0,1]$, we have
	\begin{align}
		\left\| \Theta: L_{p, \gamma}(\tau_A \otimes \sigma_B) \to L_{p, \gamma}(\tau_A^{\otimes M}\otimes \sigma_B) \right\| \leq 2^{\frac{2}{p}-1} M^{ \frac{1-p}{p} }, \quad \forall \, p\in[1,2].
	\end{align}
\end{lemm}
Given the formulation \eqref{eq:formulation}, Lemma~\ref{lemm:key} provides an upper bound for the error $\Delta^\texttt{c}_M\left( \rho_{AB} \, \Vert\, \tau_A\right)$, which leads to achievability bounds (or called direct bound) in Sections~\ref{sec:direct_bound} and \ref{sec:error_exponent}.
The proof of Lemma~\ref{lemm:key} employs Kosaki's complex interpolation relation of the weighted $L_p$ space \cite{Kos84}.
We defer the proof to Section~\ref{sec:proofs_one-shot}.

The second result is a one-shot converse bound, namely, a lower bound to the error $\Delta^\texttt{c}_M\left( \rho_{AB} \, \Vert\, \tau_A\right)$.
\begin{lemm}
	[One-shot converse] \label{lemm:converse}
	For any density operators $\rho_{AB}$, $\tau_A$, and $\omega_{A_1\ldots A_M B}$ defined in \eqref{eq:total_state},
	we have, for every $M \in \mathds{N}$ and $c>0$,
	\begin{align} 
		\frac12\left\| \omega_{A_1\ldots A_M B} - \tau_A^{\otimes M} \otimes \rho_{B} \right\|_1 \geq
		1 - (1+c) \Tr\left[ \rho_{AB} \wedge (1+c^{-1}) M \tau_A \otimes \rho_B \right].
	\end{align}
	Here, ``$ A \wedge B = \frac12(A+B-|A-B|)$" denotes the noncommutative minimal between $A$ and $B$ (see \eqref{eq:minimal} in Fact~\ref{fact:minimal} of Section~\ref{sec:Preliminaries}).
\end{lemm}
The proof of Lemma~\ref{lemm:converse} is presented in Section~\ref{sec:proofs_one-shot}.


\subsection{Preliminaries} \label{sec:Preliminaries}
We collect the preliminary facts that will be used in our discussion
\begin{fact}[Complex interpolation of Noncommutative $L_p$-space \cite{Kos84}] \label{fact:norm}
	Fix a $\gamma\in[0,1]$ and an invertible density operator $\sigma$. The noncommutative weighted $L_p$ spaces satisfies the follow interpolation relation
\[ L_{p_\theta,\gamma}(\sigma)=[L_{p_0,\gamma}(\sigma), L_{p_1,\gamma}(\sigma) ]_\theta\]
for any $1\le p,p_0,p_1\le \infty$ , $\theta\in [0,1]$ and $\frac{1}{p}=\frac{1-\theta}{p_0}+\frac{1}{p_1}$.
In particular, the following interpolation inequality holds:
	\begin{align}
		\left\|X\right\|_{p_\theta,\gamma,\sigma} \leq
		\left\|X\right\|_{p_0,\gamma,\sigma}^{1-\theta} \left\|X\right\|_{p_1,\gamma,\sigma}^{\theta}. 
	\end{align}
As a consequence, for every operator $X$, the map
	\begin{align}
		p\mapsto \left\|X\right\|_{p,\gamma,\sigma}
	\end{align}
	is non-decreasing over $p\in [1,\infty]$.
\end{fact}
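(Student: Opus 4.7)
The plan is to follow Kosaki's original strategy, which identifies each weighted space $L_{p,\gamma}(\sigma)$ isometrically with the standard Schatten space equipped with $\|\cdot\|_p$ via a sandwiching multiplication map, and then transports the classical complex interpolation of Schatten classes to the weighted setting via an analytic-family version of the Stein--Hirschman interpolation theorem.

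First, I would note that for each fixed $p \in [1,\infty]$ the map $T_p : X \mapsto \sigma^{(1-\gamma)/p} X \sigma^{\gamma/p}$ is by the very definition \eqref{eq:norm} an isometric bijection from $(L_{p,\gamma}(\sigma), \|\cdot\|_{p,\gamma,\sigma})$ onto the standard Schatten space equipped with $\|\cdot\|_p$. Since the maps $T_p$ depend on $p$, a static isometric transport does not suffice; I would instead introduce the analytic family
\[
T_z(X) \;:=\; \sigma^{(1-\gamma)\,a(z)}\, X\, \sigma^{\gamma\,a(z)}, \qquad a(z) \;:=\; \tfrac{1-z}{p_0} + \tfrac{z}{p_1},
\]
indexed by the closed strip $\mathcal{S} := \{z : 0 \le \operatorname{Re} z \le 1\}$. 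Because $\sigma$ is invertible and we are in finite dimensions, $T_z$ is entire in $z$, uniformly bounded on $\mathcal{S}$, and coincides with $T_{p_j}$ on the boundary line $\operatorname{Re} z = j$ for $j \in \{0,1\}$. I would then apply Hadamard's three-lines lemma to the bounded analytic scalar function $z \mapsto \operatorname{tr}[T_z(X)\,G(z)^*]$, where $G$ is a dual analytic family whose boundary values live in the unit balls for $\|\cdot\|_{p_0'}$ and $\|\cdot\|_{p_1'}$ and whose value at $z = \theta$ realizes the dual pairing computing $\|T_{p_\theta}(X)\|_{p_\theta}$. This produces the interpolation inequality $\|X\|_{p_\theta,\gamma,\sigma} \le \|X\|_{p_0,\gamma,\sigma}^{1-\theta}\|X\|_{p_1,\gamma,\sigma}^{\theta}$.

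To promote this inequality to the full equality of interpolation spaces, $L_{p_\theta,\gamma}(\sigma) = [L_{p_0,\gamma}(\sigma), L_{p_1,\gamma}(\sigma)]_\theta$, I would invoke Kosaki's duality identity $L_{p,\gamma}(\sigma)^* = L_{p',1-\gamma}(\sigma)$ together with the reiteration and density machinery of complex interpolation. The main obstacle is the construction of the auxiliary dual analytic family $G$ satisfying both the prescribed boundary values and uniform boundedness on $\mathcal{S}$, which is the classically delicate step of Stein--Hirschman interpolation; in the present finite-dimensional setting this is handled by writing $\sigma^{a(z)}$ spectrally and using that the resulting fractional powers are entire in $z$ with explicit bounds in terms of the spectrum of $\sigma$. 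Finally, the monotonicity of $p \mapsto \|X\|_{p,\gamma,\sigma}$ follows from the established interpolation inequality in combination with the normalization $\|\mathds{1}\|_{p,\gamma,\sigma} = (\operatorname{tr}[\sigma])^{1/p} = 1$ for every $p \in [1,\infty]$: inserting $\mathds{1}$ as one of the factors in a Kosaki-type H\"older inequality bounds $\|X\|_{p_0,\gamma,\sigma}$ by $\|X\|_{p_1,\gamma,\sigma}$ whenever $p_0 \le p_1$.
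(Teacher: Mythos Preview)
The paper does not supply a proof of this statement: it is recorded as a preliminary \emph{Fact} with a citation to Kosaki \cite{Kos84}, and is used as a black box in the proof of Lemma~\ref{lemm:key} and Theorem~\ref{theo:exponent}. So there is no ``paper's own proof'' to compare against.

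Your sketch is nonetheless a correct outline of how the result is established, and it follows Kosaki's original strategy. A few minor comments. For the interpolation \emph{inequality}, your analytic-family/three-lines argument is exactly right; in finite dimensions the construction of the dual family $G$ is routine (take $G(z)=\sigma^{(1-\gamma)b(z)}\,U|Y|^{c(z)}\,\sigma^{\gamma b(z)}$ from the polar decomposition of $Y=T_{p_\theta}(X)$ with suitable affine $b,c$), so the ``classically delicate step'' you flag is not an obstacle here. For the full interpolation-space \emph{equality}, invoking Kosaki's duality and reiteration is valid but heavier than necessary in finite dimensions: one can directly exhibit, for each $X$, an analytic family $F(z)$ in the Calder\'on space with $F(\theta)=X$ and boundary norms matching $\|X\|_{p_\theta,\gamma,\sigma}$, again via powers of $\sigma$. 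Finally, your monotonicity argument is correct in spirit but slightly opaque as stated; the clean version is the three-factor Schatten H\"older inequality applied to $\sigma^{(1-\gamma)/p_0}X\sigma^{\gamma/p_0}=\sigma^{(1-\gamma)s}\bigl(\sigma^{(1-\gamma)/p_1}X\sigma^{\gamma/p_1}\bigr)\sigma^{\gamma s}$ with $s=\tfrac{1}{p_0}-\tfrac{1}{p_1}$, where the outer factors have unit norm precisely because $\Tr[\sigma]=1$.
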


\begin{fact}[Properties of noncommutative quotient] \label{fact:quotient}
	Consider noncommutative quotient defined in \eqref{eq:quotient} with $\gamma = \sfrac12$. The following hold.
\begin{enumerate}[(i)]
	\item\label{item:quotient_linear} $\displaystyle\frac{A+B}{C} = \frac{A}{C} + \frac{B}{C}$ for all $A,B\geq 0$ and $C>0$;

	\item\label{item:quotient_positive} $\displaystyle \frac{A}{B} \geq 0$ for all $A\geq 0$ and $B>0$;

	\item\label{item:quotient_upper} $\displaystyle\frac{A}{A+B} \leq \mathds{1}$ for all $A\geq 0$ and $B>0$;
\end{enumerate}
\end{fact}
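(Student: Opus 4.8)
The plan is to unwind Definition~\eqref{eq:quotient} at the value $\gamma=\tfrac12$, where it reads $\frac{X}{Y} = Y^{\gamma-1}XY^{-\gamma} = Y^{-1/2}XY^{-1/2}$, and then verify each of the three items by elementary Loewner-order arguments. Note first that under the respective hypotheses ($C>0$ in (i); $B>0$ in (ii); $B>0$ in (iii)) the operator in the denominator is positive definite, hence invertible, so the quotient is the genuine sandwiched inverse square root and the Moore--Penrose convention of the footnote never enters; moreover $C^{-1/2}$ and $B^{-1/2}$ are themselves positive invertible (in particular self-adjoint) operators.

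For item (i), I would simply invoke linearity of the map $X\mapsto C^{-1/2}XC^{-1/2}$, giving $\frac{A+B}{C} = C^{-1/2}(A+B)C^{-1/2} = C^{-1/2}AC^{-1/2} + C^{-1/2}BC^{-1/2} = \frac{A}{C}+\frac{B}{C}$. For item (ii), I would use that congruence preserves positive semidefiniteness: for any $A\geq 0$ and any operator $S$ one has $SAS^{*}\geq 0$, so taking $S=B^{-1/2}=(B^{-1/2})^{*}$ yields $\frac{A}{B} = B^{-1/2}AB^{-1/2}\geq 0$. For item (iii), the one substantive step is the operator inequality $A\leq A+B$, which holds because $B>0\geq 0$; since also $A+B\geq B>0$ is invertible, conjugation by the fixed self-adjoint operator $(A+B)^{-1/2}$ is order-preserving, whence $\frac{A}{A+B} = (A+B)^{-1/2}A(A+B)^{-1/2} \leq (A+B)^{-1/2}(A+B)(A+B)^{-1/2} = \mathds{1}$.

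I do not expect a genuine obstacle here: the entire content is bookkeeping with the Loewner order, and the only point requiring a moment's care is checking that the strict-positivity hypotheses make the relevant inverse square roots well defined (so the $\gamma=\tfrac12$ quotient is the ordinary sandwiched expression, not the pseudo-inverse variant) and that all operators act on a common finite-dimensional space so that ``$\leq$'' unambiguously denotes the Loewner order. No interpolation or other deeper input from the preceding material is needed for this Fact.
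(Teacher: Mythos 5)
Your proof is correct, and since the paper states this as a Fact without supplying any argument, your elementary verification (linearity of congruence for (i), $SAS^{*}\geq 0$ for (ii), and order-preservation of conjugation by $(A+B)^{-1/2}$ applied to $A\leq A+B$ for (iii)) is precisely the standard reasoning the authors leave implicit. No gaps.
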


\begin{fact}[Properties of noncommutative minimal {\cite{Hel67, Hol72, Hol78, ANS+08, AM14, Cheng2022}}] \label{fact:minimal}
	For any positive semi-definite operators $A$ and $B$, we define the noncommutative minimal of $A$ and $B$ as
	\begin{align}\label{eq:minimal}
		A \wedge B := \argmax\left\{ \Tr[X] : X\leq A, \, X\leq B  \right\}.
	\end{align}
	It satisfies the following properties.
	\begin{enumerate}[(i)]
		\item\label{item:infimum} (Infimum representation) $\displaystyle \Tr\left[ A\wedge B \right] = \inf_{0\leq T\leq \mathds{1}} \Tr\left[ A(\mathds{1}-T) + BT \right]$.
		
		\item\label{item:closed-form} (Closed-form expression) $\displaystyle A\wedge B=  \tfrac{1}{2}\left(A+B - |A-B|\right)$.
		
		
		\item\label{item:order} (Monotone increase in the Loewner ordering) $\displaystyle\Tr[A\wedge B] \leq \Tr[A'\wedge B']$ for $A\leq A'$ and $B\leq B'$.
		
		
		\item\label{item:monotone} (Monotone increase under positive trace-preserving maps) $\displaystyle\Tr[A\wedge B] \leq \Tr[\mathcal{N}(A)\wedge \mathcal{N}(B)]$ for any positive trace-preserving map $\mathcal{N}$.
		
		
		\item\label{item:concave} (Trace concavity) The map $(A,B)\mapsto \Tr[A\wedge B]$ is jointly concave.
		
		
		\item\label{item:direct} (Direct sum) $\displaystyle(A\oplus A')\wedge (B\oplus B') = (A\wedge B) \oplus (A'\wedge B')$ for any self-adjoint $A'$ and $B'$.
		
		
		\item\label{item:upper} (Upper bound)  $\displaystyle\Tr[A\wedge B] \leq \Tr[ A^{1-s} B^s]$ for any $A,B\geq 0$ and $s\in(0,1)$.
		
		\item\label{item:lower} (Lower bound\footnote{In case that $A+B$ is not invertible, one just use the Moore--Penrose pseudo-inverse of $A+B$ in the definition of the noncommutative quotient \eqref{eq:quotient}.})
		$\displaystyle\Tr[A\wedge B] \geq \Tr\left[ A \frac{B}{A+B} \right]$ for any $A,B\geq 0$.
	\end{enumerate}
\end{fact}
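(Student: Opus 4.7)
The plan is to translate the trace-distance lower bound into an upper bound on $\Tr[\omega\wedge\Sigma]$ and then construct a single witness test via the infimum representation. By Fact~\ref{fact:minimal}(\ref{item:closed-form}), applied to the two density operators $\omega_{A_1\ldots A_MB}$ and $\Sigma:=\tau_A^{\otimes M}\otimes\rho_B$, one has $\tfrac12\|\omega-\Sigma\|_1 = 1 - \Tr[\omega\wedge\Sigma]$, so with $\lambda:=(1+c^{-1})M$ the claim is equivalent to $\Tr[\omega\wedge\Sigma]\leq (1+c)\alpha$, where $\alpha := \Tr[\rho_{AB}\wedge\lambda\tau_A\otimes\rho_B]$. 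By Fact~\ref{fact:minimal}(\ref{item:infimum}), it then suffices to exhibit a single test $0\leq T\leq I$ on the full system $\mathsf{A}_1\otimes\cdots\otimes \mathsf{A}_M\otimes \mathsf{B}$ verifying
\[
\Tr[\omega(I-T)]\;+\;\Tr[\Sigma\,T]\;\leq\;(1+c)\,\alpha.
\]

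The witness I would use is built from the single-copy Helstrom projector $F:=\{\rho_{AB}>\lambda\,\tau_A\otimes\rho_B\}$, which attains $\alpha=\Tr[\rho_{AB}(I-F)]+\lambda\Tr[(\tau_A\otimes\rho_B)F]$. Writing $F_m:=F_{A_mB}\otimes I_{A_{\setminus m}}$, take
\[
T\;:=\;\Bigl(\textstyle\sum_{m=1}^{M}F_m\Bigr)\wedge I,
\]
where $\wedge$ is the noncommutative minimum of Fact~\ref{fact:minimal}; then $T\leq\sum_m F_m$ and $I-T=(I-\sum_m F_m)_{+}$ by Fact~\ref{fact:minimal}(\ref{item:closed-form}). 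The convex-splitting marginal identity $\Tr[\Sigma F_m]=\Tr[(\tau_A\otimes\rho_B)F]$, combined with $\lambda\Tr[(\tau_A\otimes\rho_B)F]\leq\alpha$, yields $\Tr[\Sigma\,T]\leq M\Tr[(\tau_A\otimes\rho_B)F]\leq (M/\lambda)\,\alpha = \tfrac{c}{c+1}\,\alpha$. For the $\omega$-side I would split $\rho_{AB}=\rho^{\texttt{s}}+\rho^{\texttt{b}}$ with $\rho^{\texttt{s}}:=\rho_{AB}\wedge\lambda\tau_A\otimes\rho_B$ and $\rho^{\texttt{b}}:=(\rho_{AB}-\lambda\tau_A\otimes\rho_B)_{+}$, and lift by linearity of the convex-splitting map to $\omega=\omega^{\texttt{s}}+\omega^{\texttt{b}}$; the bounds $\omega^{\texttt{s}}\leq\lambda\Sigma$ (so $\Tr[\omega^{\texttt{s}}(I-T)]\leq\Tr[\omega^{\texttt{s}}]=\alpha$) and the fact that $\omega^{\texttt{b}}$ is supported in $\mathrm{range}(\sum_m F_m)$ (each term $\rho^{\texttt{b}}_{A_mB}$ lies in $\mathrm{range}(F_m)$) together produce $\Tr[\omega(I-T)]+\Tr[\Sigma T]\leq \tfrac{2c+1}{c+1}\,\alpha\leq (1+c)\,\alpha$, where the last step is the elementary inequality $(2c+1)/(c+1)\leq 1+c$.

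The main obstacle, and the technical heart of the argument, is that the lifted projections $F_m$ do not commute because they all act non-trivially on the shared system $\mathsf{B}$, so the classical union-bound identity $\mathrm{supp}(\sum_m F_m)\leq\sum_m F_m$ fails in the Loewner order: eigenvectors of $\sum_m F_m$ with eigenvalues strictly less than $1$ can lie in $\mathrm{range}(\sum_m F_m)$ and contribute a nonzero pairing between $\omega^{\texttt{b}}$ and $I-T$ that must be controlled quantitatively. Closing this gap requires either a Hayashi--Nagaoka-type operator inequality that substitutes a better test for the naive $T$, or a quantitative use of the monotonicity/sub-additivity properties of $\wedge$ from Fact~\ref{fact:minimal}(\ref{item:order})/(\ref{item:monotone})/(\ref{item:concave}) applied to the decomposition $\omega=\omega^{\texttt{s}}+\omega^{\texttt{b}}$, in order to absorb the residual crossing terms $\Tr[(\rho_{A_mB}\otimes\tau_{A_{\setminus m}})F_{m'}]$ for $m\neq m'$ into the multiplicative constant $(1+c)$.
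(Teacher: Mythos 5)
Your proposal is not a proof of the listed properties of the noncommutative minimal; it is an attempted proof of the one-shot converse, Lemma~\ref{lemm:converse}, so I assess it against the paper's proof of that lemma. Your reduction is sound as far as it goes: for density operators $\tfrac12\|\omega-\Sigma\|_1=1-\Tr[\omega\wedge\Sigma]$ by the closed form, the infimum representation reduces the claim to exhibiting one test $T$, and the $\Sigma$-side estimate $\Tr[\Sigma T]\le M\Tr[(\tau_A\otimes\rho_B)F]\le\tfrac{c}{c+1}\alpha$ is correct.

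The gap is exactly where you locate it, and as written it is fatal: with $T=(\sum_m F_m)\wedge\mathds{1}$ you have no control over $\Tr[\omega^{\texttt{b}}(\mathds{1}-T)]$, because $\mathds{1}-T=(\mathds{1}-\sum_m F_m)_+$ is not orthogonal to $\mathrm{range}(\sum_m F_m)$ when the $F_m$ fail to commute — and they do fail to commute, since every $F_m$ acts nontrivially on the shared system $\mathsf{B}$. A Hayashi--Nagaoka patch would replace $T$ by a different test at the cost of multiplicative constants on both error terms, and it is not clear this recovers the target constant $(1+c)$ without redoing the whole bookkeeping. There is also a secondary flaw: $\rho^{\texttt{s}}=\rho_{AB}\wedge\lambda\,\tau_A\otimes\rho_B$ need not be positive semi-definite for noncommuting arguments (the closed form $\tfrac12(A+B-|A-B|)$ can have negative eigenvalues), so the step $\Tr[\omega^{\texttt{s}}(\mathds{1}-T)]\le\Tr[\omega^{\texttt{s}}]$, which requires $\Tr[\omega^{\texttt{s}}T]\ge 0$, is unjustified. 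The paper's proof takes a materially different route that sidesteps both issues: it uses the single ratio test $\Pi=\frac{\omega}{\omega+c^{-1}\Sigma}$ in the symmetrized quotient sense, splits $\omega=\frac1M\sum_m\omega^{(m)}$ \emph{inside the quotient} and drops the nonnegative cross terms there (Fact~\ref{fact:quotient}), then invokes the lower bound $\Tr[A\wedge B]\ge\Tr\bigl[A\tfrac{B}{A+B}\bigr]$ (Fact~\ref{fact:minimal}-\ref{item:lower}) together with monotonicity of $\Tr[\,\cdot\wedge\cdot\,]$ under partial trace and the Loewner order; no union bound over noncommuting projections is ever needed, and the factor $(1+c)$ arises from bounding $\Tr[\omega\Pi]$ and $-\Tr[\Sigma\Pi]$ separately. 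To salvage your argument you would need either to switch to this ratio test or to prove a quantitative substitute for the missing operator union bound.
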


\begin{fact}[Variational formula of the trace distance \cite{Hel67, Hol72}, {\cite[\S 9]{NC09}}] \label{fact:1norm}
    The following properties hold
	for any density operators $\rho$ and $\sigma$. 
	\begin{enumerate}[(i)]
	    \item\label{item:infimum} (Infimum representation)  
	    \begin{align}
		\frac{1}{2}\left\|\rho- \sigma\right\|_1 = \sup_{0\leq \Pi \leq \mathds{1}} \Tr\left[\Pi(\rho-\sigma)\right].
	    \end{align}
	
	\item\label{item:change} (Change of measure) For every test $0\leq \Pi\leq \mathds{1}$,
	\begin{align}
    \Tr[\rho\Pi] \leq \Tr[\sigma\Pi] + \frac{1}{2}\left\|\rho- \sigma\right\|_1.
\end{align}
	
	\end{enumerate}
	
\end{fact}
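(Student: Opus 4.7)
The plan is to reduce both parts to the Jordan--Hahn decomposition of the self-adjoint, trace-zero operator $\rho-\sigma$. Using the spectral theorem on the finite-dimensional space, write
\[
\rho-\sigma = P - Q, \qquad P,Q\geq 0,\qquad PQ=0,
\]
so that $|\rho-\sigma|=P+Q$ and $\|\rho-\sigma\|_1=\Tr[P]+\Tr[Q]$. Because $\rho$ and $\sigma$ are density operators, $\Tr[\rho-\sigma]=0$ forces $\Tr[P]=\Tr[Q]$, and hence $\tfrac12\|\rho-\sigma\|_1=\Tr[P]=\Tr[Q]$. This identity is the engine behind both statements.

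For part (i), first I would establish the upper bound: for any $0\leq \Pi\leq \mathds{1}$, linearity of the trace and the positivity $P,Q\geq 0$ give
\[
\Tr[\Pi(\rho-\sigma)] = \Tr[\Pi P] - \Tr[\Pi Q] \leq \Tr[\Pi P] \leq \Tr[P] = \tfrac12\|\rho-\sigma\|_1,
\]
where the first inequality uses $\Tr[\Pi Q]\geq 0$ and the second uses $\Pi \leq \mathds{1}$ applied inside the trace against the positive operator $P$. For the matching lower bound, take $\Pi^{\star}$ to be the spectral projector onto the range of $P$, i.e.\ onto the subspace where $\rho-\sigma$ is nonnegative. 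Since $PQ=0$, we have $\Pi^{\star} P = P$ and $\Pi^{\star} Q = 0$, so
\[
\Tr[\Pi^{\star}(\rho-\sigma)] = \Tr[P] = \tfrac12\|\rho-\sigma\|_1,
\]
which shows the supremum is attained and equal to the stated value. This gives (i).

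Part (ii) then follows immediately as a corollary of (i): for every $0\leq \Pi\leq \mathds{1}$,
\[
\Tr[\rho\Pi] - \Tr[\sigma\Pi] = \Tr[\Pi(\rho-\sigma)] \leq \sup_{0\leq \Pi'\leq \mathds{1}}\Tr[\Pi'(\rho-\sigma)] = \tfrac12\|\rho-\sigma\|_1,
\]
and rearranging yields $\Tr[\rho\Pi]\leq \Tr[\sigma\Pi]+\tfrac12\|\rho-\sigma\|_1$.

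There is no real obstacle here; the only point that merits care is the trace-zero bookkeeping, namely using $\Tr[P]=\Tr[Q]$ to convert $\Tr[P]+\Tr[Q]$ into $2\Tr[P]$, so that the variational supremum matches $\tfrac12\|\rho-\sigma\|_1$ rather than $\|\rho-\sigma\|_1$. Everything else is the spectral theorem plus one-line trace manipulations.
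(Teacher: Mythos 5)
Your proof is correct and is exactly the standard argument from the cited references (e.g.\ Nielsen--Chuang \S 9): Jordan--Hahn decomposition $\rho-\sigma=P-Q$, the trace-zero identity $\Tr[P]=\Tr[Q]$, the two-step upper bound, and attainment by the projector onto the support of $P$. The paper states this as a Fact without proof, so there is nothing further to compare; part (ii) indeed follows as an immediate corollary of part (i).
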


For a self-adjoint operator $X$, we have the spectrum decomposition $X = \sum_i \lambda_i e_i$, where $\lambda_i\neq \lambda_j$ are distinct eigenvalues and $e_i$ is the spectrum projection onto $i$th distinct eigenvalues.
We define the set $\textnormal{\texttt{spec}}(X):= \{\lambda_i\}$ to be the eigenvalues of $H$, and $|\textnormal{\texttt{spec}}(X)|$ to be the number of distinct eigenvalues of $H$. Recall that the \emph{pinching map} with respect to $X$ is defined as
\begin{align} \label{eq:pinching}
	\mathscr{P}_X : L\mapsto \sum_{i=1} e_i L e_i\, .
\end{align}

\begin{fact}[{Pinching inequality} {\cite{Hay02}}] \label{fact:pinching}
	For every positive semi-definite operator $L$,
	\begin{align} \label{eq:pinching_inequality}
		\mathscr{P}_H[L] \geq \frac{1}{|\textnormal{\texttt{spec}}(H)|} L\,.
	\end{align}
	Moreover, it holds that for every $d$-dimensional self-adjoint operator and $H$ and every $n\in\mathds{N}$,
	\begin{align}
		\left|\textnormal{\texttt{spec}}\left(H^{\otimes n}\right)\right| \leq (n+1)^{d-1}\,.
	\end{align}
\end{fact}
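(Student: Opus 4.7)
The plan is to prove the two claims of the fact independently, each by a short and standard argument.

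For the pinching inequality, the strategy is to realize $\mathscr{P}_H$ as a uniform average of unitary conjugations, which automatically yields a positive operator lower bound. Writing $H = \sum_{i=1}^{N} \lambda_i e_i$ with $N := |\textnormal{\texttt{spec}}(H)|$ distinct eigenvalues and mutually orthogonal spectral projections $\{e_i\}_{i=1}^N$ summing to $\mathds{1}$, I will introduce the diagonal unitaries $U_k := \sum_{i=1}^{N} e^{2\pi \mathrm{i} k i / N}\, e_i$ for $k \in \{0,1,\ldots,N-1\}$. A short discrete Fourier computation using the orthogonality identity $\frac{1}{N}\sum_{k=0}^{N-1} e^{2\pi \mathrm{i} k (i-j)/N} = \delta_{ij}$ will yield $\mathscr{P}_H[L] = \frac{1}{N}\sum_{k=0}^{N-1} U_k L U_k^*$. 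Since $L \geq 0$ makes every conjugate $U_k L U_k^*$ positive semi-definite, dropping all but the $k = 0$ term (which equals $U_0 L U_0^* = L$) then yields $\mathscr{P}_H[L] \geq L/N$, exactly as required.

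For the spectrum count, I will use the fact that when $H$ has eigenvalues $\lambda_1,\ldots,\lambda_d$ (with repetitions allowed), the eigenvalues of $H^{\otimes n}$ are precisely the products $\lambda_{i_1}\cdots \lambda_{i_n}$ indexed by $(i_1,\ldots,i_n) \in \{1,\ldots,d\}^n$. By commutativity of scalar multiplication, each such product is determined by the multiplicity vector $(k_1,\ldots,k_d)$ of nonnegative integers with $k_1 + \cdots + k_d = n$, where $k_j := |\{\ell : i_\ell = j\}|$. The number of distinct eigenvalues of $H^{\otimes n}$ is therefore at most the number of such multiplicity vectors; specifying $k_1,\ldots,k_{d-1} \in \{0,1,\ldots,n\}$ freely (with $k_d = n - \sum_{j<d} k_j$ then determined whenever nonnegative) gives the bound $(n+1)^{d-1}$. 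Any accidental coincidences among products indexed by different multiplicity vectors only strengthen the bound.

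Neither step poses a real obstacle: both arguments are clean and self-contained. The only point requiring care is the exact definition of $N$ in the unitary representation, namely that $\{e_i\}_{i=1}^N$ must be the \emph{distinct} spectral projections summing to $\mathds{1}$ (not the rank-one eigenprojections of some fine-grained diagonalization), so that the character-sum cancellation runs over exactly $N$ terms and matches the prefactor $|\textnormal{\texttt{spec}}(H)|$ appearing in the statement.
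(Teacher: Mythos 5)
The paper states this as a cited Fact from Hayashi and gives no proof of its own, so there is nothing to diverge from: your argument is precisely the standard one. Both steps check out — the discrete-Fourier identity $\frac{1}{N}\sum_{k=0}^{N-1} U_k L U_k^* = \mathscr{P}_H[L]$ with $U_k = \sum_i \mathrm{e}^{2\pi \mathrm{i} k i/N} e_i$ correctly yields the lower bound by keeping only the $k=0$ term, and the type-counting bound $(n+1)^{d-1}$ on multiplicity vectors is the usual method-of-types estimate.
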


\begin{fact}[Uhlmann's theorem \cite{Uhl76}, \cite{FG99}, {\cite[Lemma 2.2]{DHW08}}] \label{fact:Uhlmann}
	Let $|\psi\rangle_{AB}$ and $|\varphi\rangle_{AC}$ be two pure quantum states.
	Then, there exists an isometry $\mathcal{V}_{B\to C}$ satisfying the following:
	\begin{enumerate}[(i)]
		\item\label{item:purified}	$\displaystyle \mathrm{P}(\psi_{A}, \varphi_{A}) =
		\mathrm{P}(\mathcal{V}_{B\to C}( \psi_{AB} ), \varphi_{AC} )$.
		
		\item\label{item:trace} $\displaystyle \frac12\| \psi_{A} - \varphi_{A} \|_1 \leq \varepsilon \Rightarrow
		\frac12\| \mathcal{V}_{B\to C}( \psi_{AB} ) - \varphi_{AC} \|_1 =
		\mathrm{P}(\mathcal{V}_{B\to C}( \psi_{AB} ), \varphi_{AC} ) \leq \sqrt{ 2\varepsilon - \varepsilon^2 }$.
	\end{enumerate}
	Here, $\mathrm{P}(\rho, \sigma) := \sqrt{ 1 - \|\sqrt{\rho}\sqrt{\sigma}\|_1}$ is the purified distance \cite{TCR10}.
\end{fact}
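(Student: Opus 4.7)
The plan is to deduce both parts from Uhlmann's classical variational formula for the fidelity together with the Fuchs--van de Graaf inequalities. Write $F(\rho, \sigma) := \|\sqrt{\rho}\sqrt{\sigma}\|_1$ for the root fidelity, so that $\mathrm{P}(\rho, \sigma)$ is a monotonically decreasing function of $F(\rho, \sigma)$ alone; it therefore suffices, for part (i), to exhibit an isometry $\mathcal{V}_{B\to C}$ for which $F(\mathcal{V}_{B\to C}(\psi_{AB}), \varphi_{AC}) = F(\psi_A, \varphi_A)$.

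The core step is Uhlmann's variational identity
\[
    F(\psi_A, \varphi_A) \;=\; \max_{\mathcal{V}} \bigl|\, \langle \varphi |_{AC}\, (I_A \otimes \mathcal{V}_{B\to C})\, |\psi\rangle_{AB} \,\bigr|,
\]
with the maximum running over all isometries $\mathcal{V}_{B\to C}$ (if $\dim \mathsf{C} < \dim \mathsf{B}$, enlarge $\mathsf{C}$ by an ancilla and pad $\varphi_{AC}$ with a fiducial pure state, which alters neither $\varphi_A$ nor any purified distance). I would prove this by the standard canonical-purification argument: fix $|\rho\rangle\!\rangle_{AA'} := (\sqrt{\rho} \otimes I_{A'}) \sum_i |i\rangle|i\rangle$, observe that every purification of $\rho_A$ equals $(I_A \otimes W)|\rho_A\rangle\!\rangle$ for some isometry $W$ on the purifying system, compute $\langle\!\langle \psi_A | \varphi_A \rangle\!\rangle = \Tr[\sqrt{\psi_A}\sqrt{\varphi_A}]$, and invoke the polar decomposition $\sqrt{\varphi_A}\sqrt{\psi_A} = U|\sqrt{\varphi_A}\sqrt{\psi_A}|$ to identify the optimal isometry, which becomes the $\mathcal{V}_{B\to C}$ in the statement. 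For the resulting pure states $\mathcal{V}_{B\to C}(\psi_{AB})$ and $\varphi_{AC}$, the root fidelity collapses to the modulus of the inner product, so $F(\mathcal{V}_{B\to C}(\psi_{AB}), \varphi_{AC}) = |\langle \varphi|_{AC}(I_A\otimes \mathcal{V})|\psi\rangle_{AB}| = F(\psi_A, \varphi_A)$, proving (i).

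For (ii), I combine (i) with the Fuchs--van de Graaf bounds $1 - F(\rho,\sigma) \leq \tfrac12 \|\rho - \sigma\|_1 \leq \sqrt{1 - F(\rho,\sigma)^2}$. The lower bound applied to $\psi_A, \varphi_A$ gives $F(\psi_A, \varphi_A) \geq 1 - \varepsilon$, whence $\mathrm{P}(\psi_A, \varphi_A) \leq \sqrt{1 - (1-\varepsilon)^2} = \sqrt{2\varepsilon - \varepsilon^2}$; by (i) this bound transfers directly to $\mathrm{P}(\mathcal{V}_{B\to C}(\psi_{AB}), \varphi_{AC})$. The equality between trace distance and purified distance asserted in the middle of the displayed line is the saturated case of the upper Fuchs--van de Graaf inequality: when both $\chi$ and $\xi$ are pure one has $\tfrac12 \|\chi - \xi\|_1 = \sqrt{1 - |\langle \chi|\xi\rangle|^2} = \mathrm{P}(\chi, \xi)$, a direct $2\times 2$ computation in $\mathrm{span}\{|\chi\rangle, |\xi\rangle\}$ using that $\chi - \xi$ has two nonzero eigenvalues of equal modulus.

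The only nontrivial content sits in Uhlmann's identity itself; the rest is bookkeeping. The main points I would be careful about are the dimension conventions on $\mathsf{B}$ and $\mathsf{C}$ (handled uniformly by ancilla padding so that an isometry in the prescribed direction $B \to C$ always exists) and the convention for the purified distance, making sure that the ``pure-state'' collapse of the root fidelity to the inner-product modulus is applied consistently so that (i) feeds correctly into (ii).
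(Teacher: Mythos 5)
The paper states this as a Fact imported from the literature (\cite{Uhl76, FG99, DHW08}) and gives no proof of its own, so there is nothing to compare against; your write-up supplies the standard argument and it is correct. The three ingredients you use — Uhlmann's variational formula via the canonical purification and polar decomposition, the Fuchs--van de Graaf lower bound to pass from the trace-distance hypothesis to a fidelity bound, and the saturation of the upper Fuchs--van de Graaf inequality for pure states to get the asserted equality in (ii) — are exactly what is needed, and your handling of the $\dim\mathsf{B}$ versus $\dim\mathsf{C}$ issue by ancilla padding is the right fix for the edge case where no isometry $B\to C$ would otherwise exist. One small point worth flagging: your computation uses $\mathrm{P}(\rho,\sigma)=\sqrt{1-\|\sqrt{\rho}\sqrt{\sigma}\|_1^2}$, whereas the paper's displayed definition reads $\sqrt{1-\|\sqrt{\rho}\sqrt{\sigma}\|_1}$ without the square; since the target bound $\sqrt{2\varepsilon-\varepsilon^2}$ and the pure-state identity $\tfrac12\|\chi-\xi\|_1=\sqrt{1-|\langle\chi|\xi\rangle|^2}$ only come out under the squared convention of \cite{TCR10}, you have implicitly (and correctly) read the paper's definition as containing a typo, but it would be worth saying so explicitly so the conventions in (i) and (ii) are visibly consistent.
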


\begin{fact}[Invariance of trace distance] \label{fact:invariance}
	Let $\rho_{XB} := \sum_{x\in\mathcal{X}} p_X(x) |x\rangle \langle x|\otimes \rho_B^x$ be a classical-quantum state with marginals $\rho_{X} := \sum_{x\in\mathcal{X}} p_X(x) |x\rangle \langle x|$ and $\rho_{B} := \sum_{x\in\mathcal{X}} p_X(x) \rho_B^x$.
	Then,
	\begin{align}
		\left\| \rho_{XB} - |x\rangle\langle x|\otimes \rho_B^x \right\|_1
		= \left\|\rho_X - |x\rangle\langle x| \right\|_1, \quad \forall\, x\in\mathcal{X}.
	\end{align}
\end{fact}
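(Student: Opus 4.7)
The plan is to exploit the block-diagonal structure of both sides in the classical register $X$, which reduces the equality to a direct calculation. First I would expand
\[
\rho_{XB} - |x\rangle\langle x|\otimes \rho_B^x \;=\; \sum_{x'\neq x} p_X(x')\,|x'\rangle\langle x'|\otimes \rho_B^{x'} \;+\; \bigl(p_X(x)-1\bigr)\,|x\rangle\langle x|\otimes \rho_B^x,
\]
noting that the projectors $\{|x'\rangle\langle x'|\}_{x'\in\mathcal X}$ are mutually orthogonal, so the expression is a direct sum of operators on $\mathsf{B}$. Since the absolute value commutes with direct sums of self-adjoint operators, the trace norm is additive on this decomposition, and each block of the form $c\,\rho_B^{x'}$ contributes $|c|\cdot \|\rho_B^{x'}\|_1=|c|$ because each $\rho_B^{x'}$ is a density operator.

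Summing the block contributions yields
\[
\tfrac12\|\rho_{XB}-|x\rangle\langle x|\otimes \rho_B^x\|_1 \;=\; \tfrac12\Bigl(\sum_{x'\neq x}p_X(x') + (1-p_X(x))\Bigr) \;=\; 1-p_X(x).
\]
The same decomposition applied to $\rho_X-|x\rangle\langle x|$, which is diagonal in $\{|x'\rangle\}_{x'\in\mathcal X}$ with entries $p_X(x')$ for $x'\neq x$ and $p_X(x)-1$ for $x'=x$, gives the identical value $1-p_X(x)$. Matching the two values concludes the proof.

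The statement is essentially a one-line verification and poses no real obstacle: the only care needed is correct bookkeeping of signs and using $\|\rho_B^{x'}\|_1=1$ throughout. Conceptually, this fact records the intuition that in a classical-quantum state, all distinguishability between $\rho_{XB}$ and the ``perfectly informative'' product $|x\rangle\langle x|\otimes\rho_B^x$ is already captured at the level of the classical marginal, since conditionally on the classical value the $B$ registers agree exactly.
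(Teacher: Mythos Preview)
Your proposal is correct and follows essentially the same approach as the paper: both exploit the block-diagonal structure in the classical register $X$ and the additivity of the trace norm on direct sums, reducing each side to $2(1-p_X(x))$.
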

\begin{proof}
	This can be proved by inspection and the direct-sum structure of trace-norm:
	\begin{align}
			\left\| \rho_{XB} - |x\rangle\langle x|\otimes \rho_B^x \right\|_1
			&= \left\| p(x) \rho_B^x - \rho_B \right\|_1+\sum_{\bar{x}\neq x} p_X(\bar{x})  \\
			&= \left\| \sum_{\bar{x}\neq x} p_X(\bar{x}) \rho_B^{\bar{x}} \right\|_1+\sum_{\bar{x}\neq x} p_X(\bar{x})  \\
			&= 1 - p_X(x)+\sum_{\bar{x}\neq x} p_X(\bar{x})  \\
			&=\left\|\rho_X - |x\rangle\langle x| \right\|_1.
 	\end{align}
 	We note that this property might have been known in literature; see e.g.~\cite[\S 4.2]{Wil17b}.
\end{proof}


\subsection{Proofs of One-Shot Bounds for Convex Splitting} \label{sec:proofs_one-shot}
\begin{lemm1}[Map norm of convex splitting] 
	For any $M\in \mathds{N}$ and density operator $\tau_A$, let $\Theta$ be defined as in \eqref{eq:Theta}. Then, for every density operator $\sigma_B$ and $\gamma\in[0,1]$, we have
	\begin{align}
		\left\| \Theta: L_{p, \gamma}(\tau_A \otimes \sigma_B) \to L_{p, \gamma}(\tau_A^{\otimes M}\otimes \sigma_B) \right\| \leq 2^{\frac{2}{p}-1} M^{ \frac{1-p}{p} }, \quad \forall \, p\in[1,2].
	\end{align}
\end{lemm1}
\begin{proof}[Proof of Lemma~\ref{lemm:key}]
	Fix an arbitrary $\gamma\in[0,1]$ throughout this proof.
For $p=1$,
	using triangle inequality we have
	\begin{align}
		&\left\| \Theta: L_{1, \gamma}(\tau_A\otimes \sigma_B) \to L_{1,\gamma}(\tau_A^{\otimes M}\otimes \sigma_B) \right\| \notag\\
		&\leq \frac1M\sum_{m=1}^M \| \pi_m: L_{1, \gamma}(\tau_A\otimes \sigma_B) \to L_{1,\gamma}(\tau_A^{\otimes M}\otimes \sigma_B)  \|
			+ \| \mathrm{E}_m: L_{1, \gamma}(\tau_A\otimes \sigma_B) \to L_{1,\gamma}(\tau_A^{\otimes M}\otimes \sigma_B) \| \\
		&\leq 2, \label{eq:L1}
	\end{align}
	since for every $m\in [M]$,
	\begin{align}
		\left\| \pi_m: L_{1, \gamma}(\tau_A\otimes \sigma_B) \to L_{1,\gamma}(\tau_A^{\otimes M}\otimes \sigma_B) \right\|
		&= \sup_{X_{AB} \neq 0} \frac{\norm{X_{A_m B} \otimes \mathds{1}_A^{[M]\backslash \{m\}}}{L_{1,\gamma}(\tau_A^{\otimes M}\otimes \sigma_B)} }{ \norm{ X_{AB} }{L_{1, \gamma}(\tau_A\otimes \sigma_B)} } = 1;\\
		\left\| E_m: L_{1, \gamma}(\tau_A\otimes \sigma_B) \to L_{1,\gamma}(\tau_A^{\otimes M}\otimes \sigma_B) \right\|
		&= \sup_{X_{AB} \neq= 0} \frac{\norm{ \mathds{1}_{A_m}\ten \Tr_{A_m}\left[ \tau_{A_m}X_{A_m B}   \right]}{L_{1,\gamma}(\tau_A^{\otimes M}\otimes \sigma_B)}  }{ \norm{ X_{AB} }{L_{1, \gamma}(\tau_A\otimes \sigma_B)} } \le  1.
	\end{align}
For $p=2$,
	define the inner product \[\langle X, Y\rangle_{\gamma,\sigma} := \Tr\left[ X^\dagger \sigma^{1-\gamma} Y \sigma^\gamma\right]\
\, .\]
	For every $m\in [M]$, write $\mathring{\pi}_m := \pi_m - \mathrm{E}_m$.
	Then, for any $X_{AB},Y_{AB}$ and denote $\mathring{X}_{AB} := X_{AB} - \Tr_A[X_{AB}\tau_A]\otimes \mathds{1}_B$ and $\mathring{Y}_{AB} := Y_{AB} - \Tr_A[Y_{AB}\tau_A]\otimes \mathds{1}_A$,
	we calculate that for $m\neq \bar{m}$,
	\begin{align}
		&\langle \mathring{\pi}_m(X_{AB}), \mathring{\pi}_{\bar m}(Y_{AB}) \rangle_{\gamma, \tau_A^{\otimes M} \otimes \sigma_B} \\
		&= \Tr\left[ \mathds{1}_{A}^{\otimes (m-1)} \otimes \mathring{X}_{A_m B}^\dagger \otimes \mathds{1}_{A}^{\otimes (M-m)}
		\left(\tau_A^{\otimes M} \otimes \sigma_B \right)^{1-\gamma}
		\mathds{1}_{A}^{\otimes (\bar{m}-1)} \otimes \mathring{Y}_{A_{\bar m} B}\otimes \mathds{1}_{A}^{\otimes (M-\bar{m})}
		\left( \tau_A^{\otimes M} \otimes \sigma_B \right)^\gamma
		\right]\\
		&\overset{\textnormal{(a)}}{=} \Tr \left[ \mathds{1}_{A_{\bar m}} \otimes \mathring{X}_{A_m B}^\dagger \left(\tau_{A_{\bar m}} \otimes {\sigma_B} \otimes {\tau_{A_m}} \right)^{1-\gamma}
		\mathring{Y}_{A_{\bar m} B} \otimes \mathds{1}_{A_m} \left(\tau_{A_{\bar m}} \otimes {\sigma_B} \otimes {\tau_{A_m}} \right)^{\gamma} \right] \\
		&= \Tr \left[ \mathring{X}_{A_m B}^\dagger \left({\sigma_B} \otimes {\tau_{A_m}} \right)^{1-\gamma} \Tr_{A_{\bar m}}\left[\mathring{Y}_{A_{\bar m} B}\tau_{A_{\bar m}} \right] \otimes \mathds{1}_{A_{m}} \left({\sigma_B} \otimes {\tau_{A_m}} \right)^{\gamma} \right] \\
		&\overset{\textnormal{(b)}}{=} 0, \label{eq:Lp_1}
	\end{align}
	where in (a) we trace out all the $\tau_{A_{\tilde m}}$ for $\tilde{m}\neq m, \bar{m}$;
	and in (b) we compute the term $\Tr_{A_{\bar m}}\left[\mathring{Y}_{A_{\bar m} B}\tau_{A_{\bar m}} \right]$ as:
	\begin{align}
		\Tr_{A_{\bar m}}\left[\mathring{Y}_{A_{\bar m} B}\tau_{A_{\bar m}} \right]
		&=  \Tr_{A}\left[Y_{AB}\tau_{A} - \Tr_A [Y_{AB} \tau_A] \otimes \tau_A\right]
		= 0.
	\end{align}
	
	Then, for any $X_{AB}$,
	\begin{align}
		\left\| \frac1M \sum_{m=1}^M \mathring{\pi}_m \left( X_{AB} \right) \right\|_{2, \gamma, \tau_A^{\otimes M} \otimes \sigma_B}^2
		&= \frac{1}{M^2} \sum_{m, \bar{m}=1}^M  \left\langle \mathring{\pi}_m \left( X_{AB} \right), \mathring{\pi}_{\bar m}\left( X_{AB} \right) \right\rangle_{\gamma, \tau_A^{\otimes M} \otimes \sigma_B} \\
		&\overset{\textnormal{(a)}}{=} \frac{1}{M^2} \sum_{m=1}^M \left\| \mathring{\pi}_m \left( X_{AB} \right) \right\|_{2, \gamma, \tau_A^{\otimes M} \otimes \sigma_B}^2 \\
		&\overset{\textnormal{(b)}}{=} \frac{1}{M^2} \sum_{m=1}^M \left\|  X_{AB} - \Tr_A\left[X_{AB} \tau_A \right]\otimes \mathds{1}_A \right\|_{2, \gamma, \tau_A\otimes \sigma_B}^2 \\
		&= \frac{1}{M} \left\|  X_{AB} - \Tr_A\left[X_{AB} \tau_A \right]\otimes \mathds{1}_A \right\|_{2, \gamma, \tau_A\otimes \sigma_B}^2 \\
		&\overset{\textnormal{(c)}}{\leq}  \frac{1}{M} \left\|  X_{AB}\right\|_{2, \gamma, \tau_A\otimes \sigma_B}^2,
	\end{align}
	where (a) follows from \eqref{eq:Lp_1};
	(b) follows from the definitions of $\mathring{\pi}_m$, ${\pi}_m$, and the noncommutative weighted norm;
	and the last inequality (c) follows from the fact that  $E_A(X_{AB})= \Tr_A\left[X_{AB} \tau_A \right]\otimes \mathds{1}_A$ is a projection for the $L_{2,\gamma}$ norm for any $\gamma\in[0,1]$.
	Hence, we obtain
	\begin{align}
		\left\| \Theta: L_{2, \gamma}(\tau_A\otimes \sigma_B) \to L_{2, \gamma}(\tau_A^{\otimes M} \otimes \sigma_B) \right\|
		\leq \frac{1}{\sqrt{M}}. \label{eq:L2}
	\end{align}
The case of general $p\in[1,2]$ follows from complex interpolation with $\theta = \frac{2(p-1)}{p} \in [0,1]$ together with \eqref{eq:L1} and \eqref{eq:L2},
	\begin{align}
		&\left\| \Theta: L_{p, \gamma}(\tau_A\otimes \sigma_B) \to L_{p, \gamma} (\tau_A^{\otimes M} \otimes \sigma_B) \right\|\\
		&\leq \left\| \Theta: L_{1, \gamma}( \rho_A\otimes \sigma_B) \to L_{1, \gamma}( \tau_A^{\otimes M} \otimes \sigma_B) \right\|^{1-\theta} \left\| \Theta: L_{2, \gamma}(\tau_A\otimes \sigma_B) \to L_{2, \gamma}( \rho_A^{\otimes M} \otimes \sigma_B) \right\|^\theta, \\
		&= 2^{1- \frac{2(p-1)}{p} } M^{ \frac{1-p}{p} },
	\end{align}
	which completes the proof.
\end{proof}

\begin{lemm2}
	[One-shot converse] 
	For any density operators $\rho_{AB}$, $\tau_A$, and $\omega_{A_1\ldots A_M B}$ defined in \eqref{eq:total_state},
	we have, for every $M \in \mathds{N}$ and $c>0$,
	\begin{align} \label{eq:converse0}
		\frac12\left\| \omega_{A_1\ldots A_M B} - \tau_A^{\otimes M} \otimes \rho_{B} \right\|_1 \geq
		1 - (1+c) \Tr\left[ \rho_{AB} \wedge (1+c^{-1}) M \tau_A \otimes \rho_B \right].
	\end{align}
	Here, `$ A \wedge B = \frac12(A+B-|A-B|)$' denotes the noncommutative minimal between $A$ and $B$ (see \eqref{eq:minimal} in Fact~\ref{fact:minimal} of Section~\ref{sec:Preliminaries}).
\end{lemm2}


\begin{proof}[Proof of Lemma~\ref{lemm:converse}]
	We fix $\gamma = \sfrac12$ in the definition of the noncommutative quotient in this proof.
	Let
	\begin{align}
		\Pi := \frac{\omega_{A_1\ldots A_M B}}{ \omega_{A_1\ldots A_M B} + c^{-1} \tau_A^{\otimes n} \otimes \rho_B },
	\end{align}
	and for every $m\in[M]$, we denote a density operator:
	\begin{align}
		\omega^{(m)} :=
		\rho_{A_m B} \otimes \tau_{A_1} \otimes \tau_{A_2} \otimes  \cdots \otimes \tau_{A_{m-1}} \otimes \tau_{A_{m+1}} \otimes \cdots \otimes \tau_{A_M}.
	\end{align}
	It is clear that, by tracing out all the registers of system $A$ except of the $m$-th one, we obtain
	\begin{align} \label{eq:partial_trace}
		\Tr_{ A_1\ldots A_{m-1} A_{m+1} \ldots A_M  } \left[ \omega^{(\bar{m})} \right] =
		\begin{dcases}
			\rho_{A_m B} & \bar{m} = m \\
			\tau_{A_m}\otimes \rho_B & \bar{m} \neq m
		\end{dcases}.
	\end{align}
	Using the variational formula of the Schatten $1$-norm (Fact~\ref{fact:1norm}), we have
	\begin{align}
		\frac12\left\| \omega_{A_1\ldots A_M B} - \tau_A^{\otimes M} \otimes \rho_{B} \right\|_1
		&\geq \Tr\left[ \omega_{A_1\ldots A_M B} \Pi \right] - \Tr\left[ \tau_A^{\otimes M} \otimes \rho_{B} \Pi \right]. \label{eq:converse_12}
	\end{align}
	
	To lower bound the first term on the right-hand side of \eqref{eq:converse_12}, we make the following calculation:
	\begin{align}
		\Tr\left[ \omega_{A_1\ldots A_M B} \Pi \right]
		&= \frac{1}{M}\sum_{m=1}^M \Tr\left[ \omega^{(m)} \frac{ \frac1M\sum_{\bar{m}=1}^M \omega^{(\bar{m})} }{ \omega_{A_1\ldots A_M B} + c^{-1} \tau_A^{\otimes M} \otimes \rho_{B} } \right] \\
		&\overset{\text{(a)}}{\geq} \frac{1}{M}\sum_{m=1}^M \Tr\left[ \omega^{(m)} \frac{ \frac1M\omega^{(m)} }{ \omega_{A_1\ldots A_M B} + c^{-1} \tau_A^{\otimes M} \otimes \rho_{B} } \right] \\
		&= \frac{1}{M}\sum_{m=1}^M \Tr\left[ \omega^{(m)} \frac{  \omega^{(m)} }{ \omega^{(m)} + \sum_{\bar{m}\neq m} \omega^{(\bar{m})} + c^{-1} M \tau_A^{\otimes M} \otimes \rho_{B} } \right] \\
		&\overset{\text{(b)}}{\geq} 1 - \frac{1}{M}\sum_{m=1}^M \Tr\left[ \omega^{(m)} \wedge \left( \sum_{\bar{m}\neq m} \omega^{(\bar{m})} + c^{-1} M \tau_A^{\otimes M} \otimes \rho_{B} \right) \right] \\
		&\overset{\text{(c)}}{\geq} 1 - \frac{1}{M}\sum_{m=1}^M \Tr\left[ \rho_{A_m B} \wedge \left( (M-1) \tau_{A_m} \otimes \rho_B + c^{-1} M \tau_{A_m} \otimes \rho_{B} \right) \right], \\
		&\overset{\text{(d)}}{\geq}  1 - \Tr\left[ \rho_{A B} \wedge \left( (1+c^{-1})M \tau_{A} \otimes \rho_B  \right) \right],
		\label{eq:converse1_0}
	\end{align}
	where (a) follows from linearity (Fact~\ref{fact:quotient}-\ref{item:quotient_linear}) and the positivity (Fact~\ref{fact:quotient}-\ref{item:quotient_positive});
	(b) follows from the relation to noncommutative minimal (Fact~\ref{fact:minimal}-\ref{item:lower});
	and in (c) we applied partial trace in \eqref{eq:partial_trace} and (d) used monotonicity of noncommutative minimal (Fact~\ref{fact:minimal}-\ref{item:order}).
	
Using identical argument, we bound the second term on the right-hand side of \eqref{eq:converse_12} as follows:
	\begin{align}
		- \Tr\left[ \tau_A^{\otimes M}\otimes \rho_{B} \Pi \right]
		&= - \frac{1}{M} \sum_{m=1}^M \Tr\left[ \omega^{(m)} \frac{\tau_A^{\otimes M}\otimes \rho_{B}}{ \omega_{A_1\ldots A_M B} + c^{-1}\tau_A^{\otimes M}\otimes \rho_{B}  }\right] \\
		&= - \frac{c}{M}\sum_{m=1}^M \Tr\left[\omega^{(m)} \frac{ c^{-1} M \tau_A^{\otimes M}\otimes \rho_{B} }{ \omega^{(m)} + \sum_{\bar{m}\neq m} \omega^{(\bar{m})} + c^{-1} M \tau_A^{\otimes M}\otimes \rho_{B} } \right] \\
		&\overset{\text{(a)}}{\geq} 	-\frac{c}{M}\sum_{m=1}^M \Tr\left[\omega^{(m)} \frac{\sum_{\bar{m}\neq m} \omega^{(\bar{m})} + c^{-1} M \tau_A^{\otimes M}\otimes \rho_{B} }{ \omega^{(m)} + \sum_{\bar{m}\neq m} \omega^{(\bar{m})} + c^{-1} M \tau_A^{\otimes M}\otimes \rho_{B} } \right] \\
		&\overset{\text{(b)}}{\geq} -\frac{c}{M}\sum_{m=1}^M \Tr\left[\omega^{(m)} \wedge \left( \sum_{\bar{m}\neq m} \omega^{(\bar{m})} + c^{-1} M \tau_A^{\otimes M}\otimes \rho_{B} \right) \right] \\
		&\overset{\text{(c)}}{\geq} -\frac{c}{M}\sum_{m=1}^M \Tr\left[ \rho_{A_m B} \wedge \left( (M-1) \tau_{A_m} \otimes \rho_B + c^{-1} M \tau_{A_m} \otimes \rho_{B} \right) \right], \\
		&\overset{\text{(d)}}{\geq} - c \Tr\left[ \rho_{A B} \wedge \left( (1+c^{-1})M \tau_{A} \otimes \rho_B  \right) \right],
		\label{eq:converse_2}\pl.
	\end{align}
	
	Combining \eqref{eq:converse_12}, \eqref{eq:converse1_0} and \eqref{eq:converse_2} yields the proof.
\end{proof}

\section{Sample Complexity (Bounds on Rate)} \label{sec:sample}
Given density operators $\rho_{AB}$, $\tau_A$, and $\omega_{A_1\ldots A_M B}$ defined in \eqref{eq:total_state},
we define the \emph{$\varepsilon$-sample complexity} as
\begin{align}
	M_\varepsilon^{\textnormal{\texttt{c}}}\left(\rho_{AB} \,\Vert\, \tau_A\right) := \inf\left\{  M \in \mathds{N}:
	\Delta_M^{\textnormal{\texttt{c}}}\left(\rho_{AB} \,\Vert\, \tau_A\right)  \leq \varepsilon \right\}.
\end{align}
We provide a direct bound (i.e.~upper bound) on $M_\varepsilon^{\textnormal{\texttt{c}}}\left(\rho_{AB} \,\Vert\, \tau_A\right)$ in Section~\ref{sec:direct_bound}, and a converse bound (i.e.~lower bound) in Section~\ref{sec:converse_bound}.
\subsection{Direct Bound} \label{sec:direct_bound}
\begin{theo} \label{theo:direct}
	Let $\rho_{AB}$ and $\tau_A$ be density operators, and let $\omega_{A_1\ldots A_M B}$ be defined in \eqref{eq:total_state}.
	Then, for every $\tau_A>0$, $\varepsilon\in(0,1)$ and $\delta \in (0, \sfrac{\varepsilon}{3})$, we have
	\begin{align} \label{eq:size}
		\log M_\varepsilon^{\textnormal{\texttt{c}}}\left(\rho_{AB} \,\Vert\, \tau_A\right) \leq
		I_\textnormal{h}^{1-\varepsilon + 3\delta}\left(\rho_{AB} \,\Vert\, \tau_A\right) + \log \tfrac{\nu^2}{\delta^4},
	\end{align}
	where $\nu := |\textnormal{\texttt{spec}}(\tau_A)||\mathsf{B}|$, and the generalized $\eps$-hypothesis-testing information and the $\eps$-hypothesis-testing divergence \cite{WR13, TH13, Li14} are defined as
	\begin{align}
		I_\textnormal{h}^{\varepsilon }\left(\rho_{AB} \,\Vert\, \tau_A\right) &:=	\inf_{\sigma_B \in \mathcal{S}(\mathsf{B})}	D_\textnormal{h}^{\varepsilon }\left(\rho_{AB}\,\|\,\tau_A\otimes \sigma_B\right); \label{eq:I_h}\\
		D_\textnormal{h}^{\varepsilon }(\rho\,\Vert\, \sigma) &:= \sup_{0\leq T\leq \mathds{1} } \left\{ -\log \Tr[\sigma T] : \Tr[\rho T] \geq 1 - \varepsilon \right\}. \label{eq:D_h}
	\end{align}
	
\end{theo}
\begin{proof}
We first claim that, for any $c>0$ and density operator $\sigma_B \in \mathcal{S}(\mathsf{B})$,
\begin{align} \label{eq:direct}
		\Delta_M^{\textnormal{\texttt{c}}}\left(\rho_{AB} \,\Vert\, \tau_A\right) \leq \Tr\left[\rho_{AB}\left\{ \mathscr{P}_{\tau_A\otimes \sigma_B}[\rho_{AB}] > c \tau_A\otimes \sigma_B \right\} \right] + \sqrt{ \frac{ c \nu }{M} },
\end{align}
where where $\{A>B\}$ denotes an projection onto the eigenspaces corresponding to the positive part of $A-B$;
$\mathscr{P}_{\tau_A\otimes \sigma_B}$ is the pinching map of $\tau_A\otimes \sigma_B$.
Then, for every $\delta \in (0,\varepsilon)$ and we choose
\begin{align}
	c = \exp\left\{D_\text{s}^{1-\varepsilon + \delta} \left( \mathscr{P}_{\tau_A\otimes\sigma_B}[\rho_{AB}]  \,\|\, \tau_A\otimes \sigma_B \right) + \xi \right\}
\end{align}
for some small $\xi>0$. Recall that the \emph{$\varepsilon$-information spectrum divergence} \cite{HN03, NH07} is defined as
\begin{align}
	D_\text{s}^{\varepsilon}(\rho \parallel \sigma) &:= \sup_{c\in\mathds{R}} \left\{ \log c : \Tr\left[ \rho \left\{ \rho \leq c \sigma \right\} \right] \leq \varepsilon  \right\}. \label{eq:Ds}
\end{align}
By definition of $D_\text{s}^{\varepsilon}$, we have
\begin{align} \label{eq:upper_1}
	\Tr\left[\rho_{AB}\left\{ \mathscr{P}_{\tau_A\otimes\sigma_B}[\rho_{AB}] > c \tau_A\otimes \sigma_B \right\} \right] =
	\Tr\left[\mathscr{P}_{\tau_A\otimes\sigma_B}[\rho_{AB}]\left\{ \mathscr{P}_{\tau_A\otimes\sigma_B}[\rho_{AB}] > c \tau_A\otimes \sigma_B \right\} \right]
	< \varepsilon - \delta.
\end{align}
Letting
\begin{align}
	M = \ceil[\Big]{ {c \nu }{\delta^{-2}} }, \label{eq:M}
\end{align}
we obtain
\begin{align}
	\Delta_M^{\textnormal{\texttt{c}}}\left(\rho_{AB} \,\Vert\, \tau_A\right) \leq \varepsilon,
\end{align}
and then by \eqref{eq:M} and invoking the relation of $D_\text{s}^{\varepsilon}$ and $D_\text{h}^{\varepsilon}$ \cite{TH13}, we have the following upper bound on $\log M_\varepsilon^{\textnormal{\texttt{c}}}\left(\rho_{AB} \,\Vert\, \tau_A\right) $:
\begin{align}
	\log M_\varepsilon^{\textnormal{\texttt{c}}}\left(\rho_{AB} \,\Vert\, \tau_A\right) &\leq D_\text{s}^{1-\varepsilon + \delta} \left( \mathscr{P}_{\tau_A\otimes \sigma_{B}}[\rho_{AB}]  \,\|\, \tau_A\otimes \sigma_B \right) + \xi + \log \nu - 2 \log \delta \\
	&\leq D_\text{h}^{1-\varepsilon + 3\delta} \left( \rho_{AB}  \,\|\, \tau_A\otimes \sigma_B \right) + \xi + 2 \log \nu -  4 \log \delta,
\end{align}
where in the last line we apply the following relation to remove the pinching operator $\mathscr{P}_{\tau_A\otimes \sigma_{B}}$ and translate $D_\text{s}^\eps$ back to $D_\text{h}^\eps$ \cite[Proposition13 \& Theorem 14]{TH13}:
\begin{align}
    D_\textnormal{s}^{\varepsilon - \delta}\left( \mathscr{P}_{\sigma}[\rho] \,\|\,\sigma \right) &\leq D_\textnormal{h}^{\varepsilon + \delta}(\rho \,\|\,\sigma) + \log |\textnormal{\texttt{spec}}(\sigma)| - 2 \log \delta.
\end{align}
Since $\xi>0$ is arbitrary, we take $\xi\to 0$ and optimize over all $\sigma_B \in \mathcal{S}(\mathsf{B})$ to arrive at our claim in the theorem.

In the following, we prove the claim in \eqref{eq:direct}.
Define:	
\begin{align}
	\Pi &:= \left\{\mathscr{P}_{\tau_A\otimes\sigma_B}[\rho_{AB}]\leq c\tau_A\otimes\sigma_B\right\}\pl, \pl\Pi^\mathrm{c} &:= \mathds{1} - \Pi.
\end{align}
For the rest proof, we fix $\gamma = 1$ for the noncommutative $L_p$ norm \eqref{eq:norm} and the noncommutative quotient \eqref{eq:quotient}.
Using linearity of the noncommutative quotient and
the map $\Theta$ defined in \eqref{eq:Theta}, we apply the triangle inequality of the noncommutative weighted $L_1$-norm to calculate the formulation given in \eqref{eq:formulation}:
\begin{align}
	\frac12\left\| \Theta\left( \frac{\rho_{AB}}{\tau_A\otimes \sigma_B} \right) \right\|_{1,\gamma,\tau_A^{\otimes M} \otimes \sigma_B}
	&= \frac12\left\| \Theta\left( \frac{\Pi^\mathrm{c}\rho_{AB}\Pi^\mathrm{c}}{\tau_A\otimes \sigma_B} \right) +
	\Theta\left( \frac{\Pi \rho_{AB}\Pi^\mathrm{c}}{\tau_A\otimes \sigma_B} \right) + \Theta\left(  \frac{\rho_{AB}\Pi}{\tau_A\otimes \sigma_B} \right)  \right\|_{1,\gamma,\tau_A^{\otimes M} \otimes \sigma_B} \\
	\begin{split} \label{eq:12}
	&\leq \frac12\left\| \Theta\left( \frac{\Pi^\mathrm{c}\rho_{AB}\Pi^\mathrm{c}}{\tau_A\otimes \sigma_B} \right) \right\|_{1,\gamma,\tau_A^{\otimes M} \otimes \sigma_B} \\
	&\quad + \frac12\left\| \Theta\left( \frac{ \Pi \rho_{AB}\Pi^\mathrm{c}}{\tau_A\otimes \sigma_B} \right) \right\|_{1,\gamma,\tau_A^{\otimes M} \otimes \sigma_B}  + \frac12\left\| \Theta\left(  \frac{\rho_{AB}\Pi}{\tau_A\otimes \sigma_B} \right)  \right\|_{1,\gamma,\tau_A^{\otimes M} \otimes \sigma_B}.
	\end{split}
\end{align}
We bound the three terms in \eqref{eq:12} respectively.
To bound the first term of \eqref{eq:12}, we apply the operator norm of convex splitting proved in Lemma~\ref{lemm:key} with $p=1$ to obtain
\begin{align}
	\frac12\left\| \Theta\left( \frac{\Pi^\mathrm{c}\rho_{AB}\Pi^\mathrm{c}}{\tau_A\otimes \sigma_B} \right) \right\|_{1,\gamma,\tau_A^{\otimes M} \otimes \sigma_B}
	&\leq \frac12\left\| \Theta: L_{1, \gamma, \tau_A\otimes \sigma_B} \to L_{1, \gamma, \tau_A\otimes \sigma_B^{\otimes M}} \right\| \cdot \left\|  \frac{\Pi^\mathrm{c}\rho_{AB}\Pi^\mathrm{c}}{\tau_A\otimes \sigma_B} \right\|_{1,\gamma,\tau_A\otimes \sigma_B} \\
	&\overset{\textnormal{(a)}}{\leq} \left\| \frac{\Pi^\mathrm{c}\rho_{AB}\Pi^\mathrm{c}}{\tau_A\otimes \sigma_B} \right\|_{1,\gamma,\tau_A\otimes \sigma_B} \\
	&\overset{\textnormal{(b)}}{=} \left\| \Pi^\mathrm{c} \rho_{AB} \Pi^\mathrm{c} \right\|_1 \\
	&= \Tr\left[ \rho_{AB} \left\{\mathscr{P}_{\tau_A\otimes \sigma_B}[\rho_{AB}] > c\tau_A\otimes\sigma_B\right\} \right], \label{eq:12_1}
\end{align}
where (a) follows from Lemma~\ref{lemm:key} with $p=1$;
(b) follows from the definitions of noncommutative norm and the noncommutative quotient given in \eqref{eq:norm} and \eqref{eq:quotient} with $\gamma = 1$.

Next, we bound the second term of \eqref{eq:12}.
By writing $X_{AB} = \Pi\rho_{AB}\Pi^\mathrm{c}$,
note that
	\begin{align}
		\left\| \Theta\left( \frac{\Pi\rho_{AB}\Pi^\mathrm{c}}{\tau_A\otimes \sigma_B} \right) \right\|_{1,\gamma,\tau_A^{\otimes M} \otimes \sigma_B}
	&= 	\left\| \frac1M\sum_{m\in[M]} X_{A_m B} \otimes \tau_{A}^{ \otimes [M]\backslash\{m\} } - \tau_A^{\otimes [M]} \otimes \Tr_{A_m}\left[ X_{A_m B} \right]   	\right\|_{1}  \\
	&= 	\left\| \frac1M\sum_{m\in[M]} X_{A_m B}^\dagger \otimes \tau_{A}^{\otimes [M]\backslash\{m\} } - \tau_A^{\otimes [M]} \otimes X_{A}^{\dagger} \right\|_{1}  \\
	&= 	\left\| \Theta\left( \frac{ \Pi^\mathrm{c}\rho_{AB}\Pi }{\tau_A\otimes \sigma_B} \right) \right\|_{1,\gamma,\tau_A^{\otimes M} \otimes \sigma_B}
	\end{align}	
because the Schatten $1$-norm is invariant under complex conjugate and transpose `$\dagger$'.

Then, using the monotone increase of $p\mapsto \|\,\cdot\,\|_{p, \gamma,\tau_A^{\otimes M} \otimes \sigma_B}$ (Fact~\ref{fact:norm}), we have
\begin{align}
	\left\| \Theta\left( \frac{ \Pi^\mathrm{c}\rho_{AB}\Pi}{\tau_A\otimes \sigma_B} \right) \right\|_{1,\gamma,\tau_A^{\otimes M} \otimes \sigma_B}
	&\leq\left\| \Theta\left( \frac{ \Pi^\mathrm{c}\rho_{AB}\Pi}{\tau_A\otimes \sigma_B} \right) \right\|_{2,\gamma,\tau_A^{\otimes M} \otimes \sigma_B} \\
	&\leq \left\| \Theta: L_{2, \gamma, \tau_A\otimes \sigma_B} \to L_{2, \gamma, \tau_A^{\otimes M} \otimes \sigma_B} \right\| \cdot \left\|  \frac{ \Pi^\mathrm{c}\rho_{AB}\Pi}{\tau_A\otimes \sigma_B} \right\|_{2,\gamma,\tau_A\otimes \sigma_B} \\
	&\overset{\textnormal{(a)}}{\leq} \frac{1}{\sqrt{M}} \left\|  \frac{ \Pi^\mathrm{c}\rho_{AB}\Pi}{\tau_A\otimes \sigma_B} \right\|_{2,\gamma,\tau_A\otimes \sigma_B} \\	
	&\overset{\textnormal{(b)}}{=} \frac{1}{\sqrt{M}} \sqrt{ \Tr\left[ \left| \Pi^\mathrm{c} \rho_{AB} \Pi\left( \tau_{A}\otimes \sigma_B \right)^{-\sfrac12} \right|^2\right] }\\
	&{=} \frac{1}{\sqrt{M}}\sqrt{\Tr\left[\Pi \rho_{AB} \Pi^\mathrm{c} \rho_{AB} \Pi (\tau_A\otimes \sigma_B)^{-1} \right]},  \\
	&\overset{\textnormal{(c)}}{\leq} \frac{1}{\sqrt{M}}\sqrt{\Tr\left[\Pi \rho_{AB}^2 \Pi (\tau_A\otimes \sigma_B)^{-1} \right]},
	\label{eq:12_2}
\end{align}
where (a) follows from Lemma~\ref{lemm:key} with $p=2$;
and (b) follows from the definitions given in \eqref{eq:norm} and \eqref{eq:quotient} with $\gamma = 1$.
and (c) follows from the operator inequality $\Pi^\mathrm{c} \leq \mathds{1}$.

Following the same reasoning, one can upper bound the third term of \eqref{eq:12} by the right-hand side of \eqref{eq:12_2} as well.
It remains to calculate the term $\Tr\left[ \rho_{AB}^2 \Pi (\tau_A\otimes \sigma_B)^{-1} \Pi \right]$.
Since
\begin{align}
	\Pi = \left\{ \mathscr{P}_{\tau_A\otimes\sigma_B}[\rho_{AB}] \leq c \tau_A\otimes \sigma_B \right\} = \left\{ \left( \tau_A \otimes\sigma_B\right) ^{-1} \leq c \left(\mathscr{P}_{\tau_A\otimes\sigma_B}[\rho_{AB}]\right)^{-1} \right\},
\end{align}
we obtain
\begin{align}
	\Pi (\tau_A\otimes \sigma_B)^{-1} \Pi &\leq c \Pi \left(\mathscr{P}_{\tau_A\otimes\sigma_B}[\rho_{AB}]\right)^{-1} \Pi \\
	&= c\left(\mathscr{P}_{\tau_A\otimes\sigma_B}[\rho_{AB}]\right)^{-1},
\end{align}
where we used the fact that $\Pi$ commutes with $\mathscr{P}_{\tau_A\otimes\sigma_B}[\rho_{AB}]$. Then,
\begin{align}
	\Tr\left[\rho_{AB}^2 \Pi (\tau_A\otimes \sigma_B)^{-1} \Pi\right] &\le c\Tr\left[\rho_{AB}^2 \left(\mathscr{P}_{\tau_A\otimes\sigma_B}[\rho_{AB}]\right)^{-1}\right] \\
	&\overset{\textnormal{(a)}}{\leq} c|\textnormal{\texttt{spec}}(\tau_A\otimes\sigma_B)|\Tr\left[ \rho_{AB} \rho_{AB}^{-1} \right], \\
	&\leq c \nu \label{eq:12_0},
\end{align}
where in (a) we used the the pinching inequality (Fact~\ref{fact:pinching}), i.e.~
\[\rho_{AB}\le |\textnormal{\texttt{spec}}(\tau_A \otimes \sigma_B )| \mathscr{P}_{\tau_A\otimes \sigma_B}[\rho_{AB}],\] and the operator monotonicity of inversion. Then, combining \eqref{eq:12}, \eqref{eq:12_1}, \eqref{eq:12_2}, and \eqref{eq:12_0} leads to our claim in \eqref{eq:direct}, which completes the prove.
\end{proof}

\subsection{Converse Bound} \label{sec:converse_bound}
\begin{theo}[Converse Bound] \label{theo:converse}
Let $\rho_{AB}$ and $\tau_A$ be density operators, and let $\omega_{A_1\ldots A_M B}$ be defined in \eqref{eq:total_state}.
Then, for any $\varepsilon\in(0,1)$, $\delta < 1-\eps$,
and $c>0$, we have
\begin{align}
	\log M_\varepsilon^{\textnormal{\texttt{c}}}\left(\rho_{AB} \,\Vert\, \tau_A\right)  \geq D_\textnormal{h}^{1-\varepsilon - \delta }\left(\rho_{AB}\,\|\,\tau_A\otimes \rho_B \right)  - \log \frac{ 2 + c + c^{-1} }{ c\eps - c + (1+c)\delta }.
	\end{align}
	Here, the $\eps$-hypothesis-testing divergence $D_\textnormal{h}^{\eps}$ is defined in \eqref{eq:D_h}.
\end{theo}

\begin{proof}
	We apply Lemma~\ref{lemm:converse} to obtain
	\begin{align}
		&\frac12\left\| \omega_{A_1\ldots A_M B} - \tau_A^{\otimes M} \otimes \rho_{B} \right\|_1\\
		&\geq 1 - (1+c) \Tr\left[ \rho_{AB} \wedge (1+c^{-1}) M \tau_A \otimes \rho_B \right] \\
		&\overset{\text{(a)}}{=} 1 - (1+c) \inf_{0\leq T_{AB} \leq \mathds{1}_{AB}} \tr\left[ \rho_{AB}(\mathds{1}_{AB}-T_{AB}) +  (1+c^{-1}) M \tau_A \otimes \rho_B \cdot T_{AB} \right]\\
		&\overset{\text{(b)}}{\geq} 1 - (1+c) \left[ (1-\eps - \delta) + (1+c^{-1}) M \mathrm{e}^{- D_\text{h}^{1-\eps - \delta }\left(\rho_{AB}\,\|\,\tau_A\otimes \rho_B \right) }  \right],
	\end{align}
	where (a) follows from the infimum representation of the noncommutative minimal (Fact~\ref{fact:minimal}-\ref{item:infimum});
	and in (b) we choose a test $T_{AB}$ to attain the definition of the hypothesis-testing divergence $D_\text{h}^{1-\eps - \delta }\left(\rho_{AB}\,\|\,\tau_A\otimes \rho_B \right)$ introduced in \eqref{eq:D_h}. By choosing the largest $M$ such that $	\frac12\left\| \omega_{A_1\ldots A_M B} - \tau_A^{\otimes M} \otimes \rho_{B} \right\|_1 \leq \eps $, we complete the proof.
\end{proof}

\section{Exponents (Bounds on Error)} \label{sec:exponet}
We provide a one-shot error exponent bound (i.e.~an upper bound) on $\Delta^\texttt{c}_M(A:B)_\rho$ in Section~\ref{sec:error_exponent}, and a one-shot strong converse bound (i.e.~a lower bound) in Section~\ref{sec:strong_converse}.

\subsection{One-Shot Error Exponent} \label{sec:error_exponent}
\begin{theo} \label{theo:exponent}
	For any density operators $\rho_{AB}$, $\tau_A$, and $\omega_{A_1\ldots A_M B}$ defined in \eqref{eq:total_state},
	we have, for every $M \in \mathds{N}$,
	\begin{align} \label{eq:exponent}
		\frac12\left\| \omega_{A_1\ldots A_M B} - \tau_A^{\otimes M} \otimes \rho_{B} \right\|_1 \leq 2^{\frac{2}{\alpha} - 2} \mathrm{e}^{ - \frac{\alpha-1}{\alpha} \left( \log M -  I_\alpha^*\left(\rho_{AB}\,\|\,\tau_A\right) \right)}, \quad \forall \alpha \in [1,2].
	\end{align}
	Here, the generalized sandwiched R\'enyi information \cite{HT14} and the order-$\alpha$ sandwiched R\'enyi divergence \cite{MDS+13, WWY14} are defined as:
	\begin{align}
	I_\alpha^*\left(\rho_{AB}\,\|\,\tau_A\right) &:= \inf_{ \sigma_B \in \mathcal{S}(\mathsf{B}) } D_\alpha^*\left(\rho_{AB}\,\|\,\tau_A \otimes \sigma_B \right); \label{eq:I_tau}\\
	D_\alpha^*(\rho\,\Vert\,\sigma) &:= \frac{1}{\alpha-1} \log \Tr\left[ \left( \sigma^{\frac{1-\alpha}{2\alpha}} \rho \sigma^{\frac{1-\alpha}{2\alpha}} \right)^\alpha \right].
	\end{align}
Moreover, the exponent $\sup_{\alpha\in [\sfrac12,1] } \frac{\alpha-1}{\alpha} \left( \log M - I_\alpha^*\left(\rho_{AB}\,\|\,\tau_A\right)  \right)$ is positive if and only if
	\begin{align}
	\log M > I\left(\rho_{AB}\,\|\,\tau_A\right)
	= D(\rho_{AB} \,\Vert\, \tau_A \otimes \rho_B).
	\end{align}
\end{theo}

\begin{proof}
	Throughout this proof, we fix $\gamma = \sfrac12$ as the parameter in the weighted norm defined in \eqref{eq:norm}.
	Using the formulation given in \eqref{eq:formulation} and recalling that the noncommutative weighted $L_p$-norm is monotonically increasing in $p\geq 1$ (Fact~\ref{fact:norm}), we have, for every $\alpha \in [1,2]$ and any density operator $\sigma_B \in \mathcal{S}(\mathsf{B})$,
	\begin{align}
	\left\| \Theta\left( \frac{\rho_{AB}}{\tau_A\otimes \sigma_B} \right) \right\|_{1, \gamma, \tau_A^{\otimes M} \otimes \sigma_B}
	&\leq \left\| \Theta\left( \frac{\rho_{AB}}{\tau_A\otimes \sigma_B} \right) \right\|_{\alpha, \gamma ,\tau_A^{\otimes M} \otimes \sigma_B} \\
	&\leq \left\| \Theta: L_{\alpha,\gamma}(\tau_A\otimes \sigma_B) \to L_{\alpha,\gamma} (\tau_A^{\otimes M} \otimes \sigma_B) \right\| \cdot
	\left\|\frac{\rho_{AB}}{\tau_A\otimes \sigma_B}\right\|_{\alpha, \gamma, \tau_A\otimes \sigma_B}	 \\
	&\leq 2^{\frac{2}{\alpha} - 1} M^{\frac{1-\alpha}{\alpha}} \cdot
	\left\|\frac{\rho_{AB}}{\tau_A\otimes \sigma_B}\right\|_{\alpha, \gamma, \tau_A\otimes \sigma_B},
\end{align}
	where in the last inequality we have invoked the map norm of $\Theta$ given in Lemma~\ref{lemm:key}. On the other hand, by definition of the noncommutative $L_{\al,\frac{1}{2}}$ norm \eqref{eq:norm}, we have
\begin{align}
	\left\|\frac{\rho_{AB}}{\tau_A\otimes \sigma_B}\right\|_{\alpha, \sfrac12, \tau_A\otimes \sigma_B}
	&= \left\| (\tau_A\otimes \sigma_B)^{\frac{1-\alpha}{2\alpha}} \rho_{AB} (\tau_A\otimes \sigma_B)^{\frac{1-\alpha}{2\alpha}} \right\|_\alpha \\
	&= \left( \Tr\left[ \left( (\tau_A\otimes \sigma_B)^{\frac{1-\alpha}{2\alpha}} \rho_{AB} (\tau_A\otimes \sigma_B)^{\frac{1-\alpha}{2\alpha}} \right)^\alpha \right] \right)^{\sfrac{1}{\alpha} } \\
	&= \mathrm{e}^{ \frac{\alpha-1}{\alpha} D_\alpha^*\left(\rho_{AB}{\,\|\,}\tau_A\otimes \sigma_B\right) }.
\end{align}
Since this holds for every density operators $\sigma_B \in \mathcal{S}(\mathsf{B})$, optimizing it completes the proof.
\end{proof}

\subsection{One-Shot Exponential Strong Converse} \label{sec:strong_converse}
\begin{theo} \label{theo:sc}
	For any density operators $\rho_{AB}$, $\tau_A$, and $\omega_{A_1\ldots A_M B}$ defined in \eqref{eq:total_state},
	we have, for every $M \in \mathds{N}$,
	\begin{align} \label{eq:sc}
		\frac12\left\| \omega_{A_1\ldots A_M B} - \tau_A^{\otimes M} \otimes \rho_{B} \right\|_1 \geq 1 - 4\,\mathrm{e}^{ - \frac{1-\alpha}{\alpha} \left( D_{2 - \sfrac{1}{\alpha}}\left(\rho_{AB}\,\|\,\tau_A\otimes \rho_B\right) - \log M \right)}, \quad \forall \alpha \in [\sfrac12,1],
	\end{align}
	where the Petz--R\'enyi divergence \cite{Pet86} is defined as
	\begin{align}
		D_\alpha\left( \rho\,\Vert\, \sigma\right) &= \frac{1}{\alpha-1}\log \Tr\left[ \rho^\alpha \sigma^{1-\alpha} \right]. \label{eq:Petz}
	\end{align}
	Moreover, the exponent $\sup_{\alpha\in [\sfrac12,1]} \frac{1-\alpha}{\alpha} \left( I_{2 - \sfrac{1}{\alpha}}\left(\rho_{AB}\,\|\,\tau_A\otimes \rho_B\right) - \log M \right)$ is positive if and only if $\log M < D\left(\rho_{AB}\,\|\, \tau_A\otimes \rho_B \right)$.
\end{theo}

\begin{proof}
	We apply Lemma~\ref{lemm:converse} with $c=1$ to have
	\begin{align}
	\frac12\left\| \omega_{A_1\ldots A_M B} - \tau_A^{\otimes M} \otimes \rho_{B} \right\|_1 &\geq 1 - 2 \Tr\left[ \rho_{AB} \wedge \left(2 M \tau_A \otimes \rho_B\right) \right] \\
	&\overset{\textnormal{(a)}}{\geq}
	1 - 2\cdot (2M)^s \Tr\left[ \left(\rho_{AB}\right)^{1-s} \left(\tau_A\otimes \rho_B \right)^s \right], \quad \forall\, s\in (0,1) \\
	&\overset{\textnormal{(b)}}{=} 1 - 2^{\frac{1}{\alpha} } \mathrm{e}^{ \frac{\alpha-1}{\alpha} \left( D_{2 - \sfrac{1}{\alpha}}\left(\rho_{AB}\,\|\,\tau_A\otimes \rho_B \right) - \log M \right)}, \quad \forall \alpha \in (\sfrac12,1),
\end{align}
where in (a) we invoked the upper bound to the noncommutative minimal (Fact~\ref{fact:minimal}-\ref{item:upper}); and in (b) we used the substitution $\alpha = \frac{1}{1+s} \in (\sfrac12,1)$.
\end{proof}

\section{Applications in Quantum Information Theory} \label{sec:app}
The goal of this section is to express certain operational quantities as ``covering" and ``packing" trace distances defined below. Then, the established one-shot characterizations of convex splitting in Sections~\ref{sec:sample} and \ref{sec:exponet} apply to:
\begin{enumerate}
\item[(i)] private communication over quantum wiretap channel (Section~\ref{sec:private});
\item[(ii)] secret key distillation (Section~\ref{sec:secrete-key});
\item[(iii)] one-way quantum message compression (Section~\ref{sec:message});
\item[(iv)] quantum measurement simulation (Section~\ref{sec:measurement});
\item[(v)] communication with channel state information available at encoder (Section~\ref{sec:state}).
\end{enumerate}
For density operator $\rho_{AB}$ and $\tau_{A}$, we recall the ``\emph{covering error}'' in \eqref{eq:covering} : for any $M\in\mathds{N}$,
\begin{align}
\begin{split}
	\Delta^\texttt{c}_M\left( \rho_{AB} \, \Vert\, \tau_A\right)&:=\frac12\left\| \frac{1}{M}\sum_{m=1}^M \rho_{A_m B} \otimes \tau_{A_1}\otimes \cdots \otimes \tau_{A_{m-1}} \otimes \tau_{A_{m+1}} \otimes \cdots \otimes \tau_{A_M}  - \tau_{A}^{\otimes M} \otimes \rho_B \right\|_1; \\
\Delta^\texttt{c}_M\left(A:B\right)_{\rho} &:=
	\Delta^\texttt{c}_M\left( \rho_{AB} \, \Vert\, \rho_A\right)
\end{split}
\end{align}
We also introduce a ``\emph{packing error}'': for any density operator $\rho_{AB}$ and $M\in\mathds{N}$,
\begin{align}
\begin{split}\label{eq:packing}
\Delta^\texttt{p}_M\left(A:B\right)_{\rho} &:= \inf_{ \mathcal{M}_{A_1\ldots A_M B\to \hat{M}}} \frac{1}{M}\sum_{m=1}^M \frac12\left\|\mathcal{M}_{A_1\ldots A_M B\to\hat{M}} \left( \rho_{A_1\ldots A_M B}^{(m)}  \right) - |m\rangle\langle m|_{\hat{M}} \right\|_1 \\
&= \inf_{ \text{POVM } \{ \Pi_{ A_1\ldots A_M B }^{m} \}_{m\in[M]} }  \frac{1}{M}\sum_{m=1}^M\Tr\left[ \rho_{A_1\ldots A_M B}^{(m)}  \cdot \left( \mathds{1} - \Pi_{ A_1\ldots A_M B }^{m} \right) \right].
\end{split}\\
\rho_{A_1\ldots A_M B}^{(m)} &:= \rho_{A_m B} \otimes \rho_{A_1}\otimes \cdots \otimes \rho_{A_{m-1}} \otimes \rho_{A_{m+1}} \otimes \cdots \otimes \rho_{A_M}, \quad \forall\, m\in[M],
\end{align}
where the minimization is over all positive operator-valued measures (POVM).
Here, the superscript `$\texttt{c}$' designates for the ``covering error", while
superscript `$\texttt{p}$' designates for the ``packing error". We recall the following lemma about one-shot packing
\begin{lemm}[One-shot packing lemma {\cite{Cheng2022}}] \label{lemm:packing}
	For any quantum state $\rho_{AB}$, the following hold.
	Consider the quantity $\Delta^\textnormal{\texttt{p}}_M\left(A:B\right)_{\rho}$ defined in \eqref{eq:packing}.
	\begin{enumerate}[(a)]
		\item\label{lemma:packing-(a)} ($\varepsilon$-one-shot capacity)
		For every $\varepsilon\in(0,1)$ and $\Delta^\textnormal{\texttt{p}}_M\left(A:B\right)_{\rho} \leq \varepsilon$, one has
		\begin{align}\label{eq:packing}
			M \geq I_\textnormal{h}^{\varepsilon - \delta} (X:B)_\rho - \log \frac{1}{\delta}, \quad \forall \delta > 0.
		\end{align}
		
		\item\label{lemma:packing-(b)} (Achievable error exponent)
		For every $M \in \mathds{N}$,
		\begin{align}
			\Delta^\textnormal{\texttt{p}}_M\left(A:B\right)_{\rho} \leq \mathrm{e}^{ -\sup_{\alpha\in [ \sfrac12,1]} \frac{1-\alpha}{\alpha} \left( I^\uparrow_{2 - \sfrac{1}{\alpha}}\left(A:B\right)_\rho - \log M \right) }.
		\end{align}
		Here, $I^\uparrow_\alpha(A:B)_\rho := D_\alpha(\rho_{AB}\|\rho_A\otimes \rho_B)$, and
		$D_\alpha\left( \rho\,\Vert\, \sigma\right) = \frac{1}{\alpha-1}\log \Tr\left[ \rho^\alpha \sigma^{1-\alpha} \right]$ (see \cite{Pet86}).
	\end{enumerate}
\end{lemm}

We define
$I(A: B)_{\rho} := D( \rho_{AB}\,\Vert\, \rho_A \otimes \rho_{B})$ as the quantum mutual information and by $V( \rho_{AB}\,\Vert\, \rho_A \otimes \rho_{B})$ the quantum information variance, where $V(\rho\,\Vert\,\rho) := \Tr\left[ \rho(\log \rho - \log \sigma ) - D(\rho\,\Vert\,\sigma)^2 \right]$.
We also denote $ \Phi^{-1}({\eps}) := \sup \{ \gamma\in \mathds{R} : \frac{1}{\sqrt{2\pi}} \int_{-\infty}^{\gamma} \mathrm{e}^{-u^2/2} \, \mathrm{d} u \leq \eps  \}$ as the inverse function of distribution function of normal distribution.


\subsection{Private Communication over Quantum Wiretap Channels} \label{sec:private}

\hfill \break
The private communication protocol \cite{CWY04, Dev05, Hay15, WTB17, Wil17b, KKG+19} is described as follows.
\begin{defn}[Private communication] \label{defn:private}
	Let $\mathcal{N}_{X\to BE}: x\to \rho_{BE}^x$ be a classical-quantum wiretap channel.
	\begin{enumerate}[1.]
		\item Alice holds a classical register $X$ at Alice; Bob and Eve hold quantum registers $B$ and $E$, respectively.
		
		
		
		\item Alice performs an encoding $m\mapsto x_{m} \in \mathcal{X}$ to encode each message $m\in[M]$ and send the codeword $x$ via the channel $\mathcal{N}_{X\to BE}: x\mapsto \rho_{BE}^x$.
		
		
		\item A decoding measurement $\mathcal{M}_{B\to\hat{M}}$ on system $B$ at Bob to decode the message.
	\end{enumerate}
We call it a $(\log M, \eps)$ private communication protocol for $\mathcal{N}_{X\to BE}$ if the \emph{privacy error} \cite{HHH+09, WTB17, Wil17b} satisfies
	\begin{align} \label{eq:single_error}
	\frac1M \sum_{m=1}^M \frac12 \left\| \mathcal{M}_{B\to\hat{M}} \left(\rho_{BE}^m\right) - |m\rangle\langle m|_{\hat{M}} \otimes \rho_E \right\|_1 \leq \eps.
	\end{align}
\end{defn}

In studying private communication over quantum wiretap channels, one often consider two separate error criteria \cite{Hay15}, namely, $\eps_1$ for characterizing the error probability of decoding at Bob, and $\eps_2$ for describing the security issue.
In Definition~\ref{defn:private}, we consider the single error criterion, which follows from the approach taken in \cite{HHH+09, WTB17, Wil17b}.
As pointed out in \cite{WTB17, Wil17b}, the two separate criteria are both fulfilled once the single error criterion is satisfied by monotonicity of trace norm under partial trace.
Precisely, suppose \eqref{eq:single_error} holds by some $\eps$, then tracing out the $E$ system gives
\begin{align}
	\frac1M \sum_{m=1}^M \frac12 \left\| \mathcal{M}_{B\to\hat{M}} \left(\rho_{B}^m\right) - |m\rangle\langle m|_{\hat{M}} \right\|_1 = \Pr\left\{\hat{M}\neq M \right\}\leq \eps,
\end{align}
which guarantees the error probability $\eps_1$ of decoding at Bob being no larger than $\eps$.
On the other hand, tracing out the $\hat{M}$ system gives
\begin{align}
	\frac1M \sum_{m=1}^M \frac12 \left\| \rho_{E}^m -  \rho_E \right\|_1 \leq \eps,
\end{align}
which guarantees the secrecy error $\eps_2$ being at most $\eps$.



\begin{theo}[One-shot error exponent for private communication] \label{theo:private_exponent}
	Let $\mathcal{N}_{X\to BE}$ be a classical-quantum wiretap channel.
	There exists a $\left(\log M, \eps \right)$ private communication protocol for $\mathcal{N}_{X\to BE}$ such that for any $K\in\mathds{N}$ and probability distribution $p_X$,
	\begin{align}\label{eq:oneshot}
		\eps &\leq \Delta^{\textnormal{\texttt{p}}}_{MK}\left(X:B\right)_{\rho} + \Delta^{\textnormal{\texttt{c}}}_{K}\left(X:E\right)_{\rho} \\
		&\leq \mathrm{e}^{- \sup_{\alpha\in [ \sfrac12,1]} \frac{1-\alpha}{\alpha} \left( I^\uparrow_{2 - \sfrac{1}{\alpha}}\left(X:B\right)_\rho - \log(MK) \right)  } + \mathrm{e}^{- \sup_{\alpha\in[1,2]} \frac{\alpha-1}{\alpha} \left( \log K -  I^*_{\alpha}\left(X:E\right)_\rho  \right) }.
	\end{align}
	Here, $\rho_{XBE} = \sum_{x\in\mathcal{X}} p_X(x)|x\rangle \langle x|\otimes \rho_{BE}^x$.
	
	Moreover, the overall error exponent is positive if and only if there exists $K$ such that $\log (MK) < I(X:B)_\rho$ and $\log K > I(X:E)_\rho$, namely,
    \begin{align}
    \log M < I(X:B)_\rho - I(X:E)_\rho.
    \end{align}
\end{theo}

\begin{remark}
	Theorem~\ref{theo:private_exponent} demonstrates that the privacy error (using random coding) decomposes to the ``\emph{packing error}" $\Delta^{\textnormal{\texttt{p}}}_{MK}\left(X:B\right)_{\rho}$ that describes the \emph{error probability of decoding} for communication from Alice to Bob
	and the ``\emph{covering error}" $\Delta^{\textnormal{\texttt{c}}}_{K}\left(X:E\right)_{\rho}$ that describes the \emph{secrecy error} (i.e.~the information leakage) from Alice to Eve.
\end{remark}

\begin{remark}
	Theorem~\ref{theo:private_exponent} for bounding the secrecy part (i.e.~$\Delta^{\textnormal{\texttt{c}}}_{K}\left(X:E\right)_{\rho}$) has the following advantages compared to the existing works (e.g.~\cite{Hay15}).
	Firstly, the upper bound on the term~$\Delta^{\textnormal{\texttt{c}}}_{K}\left(X:E\right)_{\rho}$ is expressed in terms of the sandwiched R\'enyi divergence $D_\alpha^*$ for $\alpha \in [1,2]$, which gives the tightest error exponent among all additive quantum R\'enyi divergences \cite{MDS+13, WWY14, Tom16}.
	Secondly, some of the previous secrecy bounds were obtained under other error criteria such as relative entropy or purified distance \cite{Hay15}.
	By translating those criteria back to trace distance (e.g.~via Pinsker's inequality) will result in a looser bound, e.g.~with one-half of the error exponent.
	Since Theorem~\ref{theo:private_exponent} directly analyzes trace distance as the secrecy criterion, our bound does not suffer from such a one-half factor.
\end{remark}

\begin{remark}
	If shared randomness is given between Alice and Bob, Theorem~\ref{theo:private_exponent} can be strengthened to hold against compound wiretap channel by following the argument \cite[Theorem 4]{KKG+19} (see also the discussion at \cite[Remark 5]{KKG+19} therein).
\end{remark}

Applying Theorem~\ref{theo:direct} to
Theorem~\ref{theo:private_exponent} yields the following one-shot private capacity for $\mathcal{N}_{X\to BE}$ and the corresponding second-order coding rate for $\mathcal{N}_{X\to BE}^{\otimes n}$.
\begin{theo}[One-shot private capacity and second-order coding rate] \label{theo:private_rate}
	Let $\mathcal{N}_{X\to BE}$ be a classical-quantum wiretap channel.
	For every $\eps_1, \eps_2>0$ with $\eps_1+\eps_2<1$, there exists a $\left(\log M, \eps_1 + \eps_2 \right)$-one-shot private communication protocol for $\mathcal{N}_{X\to BE}:x\to\rho_{BE}^x$ such that for any probability distribution $p_X$,
	\begin{align}
		\log M\geq I_\textnormal{h}^{\eps_1-\delta_1}\left( X:B \right)_\rho - I_\textnormal{h}^{1 - \eps_2 + 3\delta_2}\left( X: E\right)_\rho - \log \tfrac{1}{\delta_1} - \log\tfrac{\nu^2}{\delta_2^4},
	\end{align}
	where $\rho_{XBE} = \sum_{x\in\mathcal{X}} p_X(x)|x\rangle \langle x|\otimes \rho_{BE}^x$ and  $\nu = |E|$.
	
	Moreover, for sufficiently large $n\in\mathds{N}$, there exists a $(\log M, \eps_1+\eps_2)$ private communication protocol for $\mathcal{N}_{X\to BE}^{\otimes n}$ such that the following holds for any probability distribution $p_X$,
	\begin{align}
		\log M \geq n\left(I(X:B)_\rho - I(X:E)_\rho \right) + \sqrt{ n V(X:B)_\rho}  \Phi^{-1}(\eps_1) + \sqrt{ n V(X:E)_\rho}  \Phi^{-1}(\eps_2) - O(\log n).
	\end{align}
\end{theo}
\begin{remark}
	Theorem~\ref{theo:private_rate} follows immediately combining \eqref{eq:oneshot} with  \eqref{eq:size} from Theorem \ref{theo:direct} and \eqref{eq:packing} from Theorem \ref{lemm:packing}. It improves the information leakage to Eve (i.e.~$\eps_2$) from \cite{Wil17b} quadratically, and hence, gives a better second-order achievable coding rate.
\end{remark}

\begin{proof}[Proof of Theorem~\ref{theo:private_exponent}]
	We employ a random encoding $(m,k) \mapsto x_{m,k}$ according to some distribution $p_X$, where $k \in [K]$ is drawn equally probably according to some private randomness at Alice (unknown by Eve).
	For every POVM $\left\{\Pi_{B}^{x_{m,k}}\right\}_{m,k}$, we define the associated quantum measurement operation:
	\begin{align}
	\mathcal{M}_{B\to\hat{M}\hat{K}}(\omega_B) := \sum_{\hat{m}, \hat{k}} \Tr\left[ \omega_B \Pi_B^{x_{m,k}} \right] |m\rangle\langle m|_{\hat{M}} \otimes |k\rangle\langle k|_{\hat{K}},
	\end{align}
	and $\mathcal{M}_{B\to\hat{M}} := \Tr_{\hat{K}} \circ \mathcal{M}_{B\to\hat{M}\hat{K}}$. For each $m\in[M]$, triangle inequality implies that
	\begin{align}
	&\frac12 \left\| \mathcal{M}_{B\to\hat{M}} \left(\frac{1}{K}\sum_{k\in[K]} \rho_{BE}^{x_{m,k}}\right) - |m\rangle\langle m|_{\hat{M}} \otimes \rho_E \right\|_1 \notag \\
	\begin{split} \label{eq:private12}
	&\leq \frac12\left\|\mathcal{M}_{B\to\hat{M}} \left(\frac{1}{K}\sum_{k\in[K]} \rho_{BE}^{x_{m,k}}\right) - |m\rangle\langle m|_{\hat{M}} \otimes \frac{1}{K}\sum_{k\in[K]} \rho_{E}^{x_{m,k}} \right\|_1 \\
	&\quad + \frac12\left\| |m\rangle\langle m|_{\hat{M}} \otimes \frac{1}{K}\sum_{k\in[K]} \rho_{E}^{x_{m,k}} - |m\rangle\langle m|_{\hat{M}} \otimes \rho_E \right\|_1.
	\end{split}
	\end{align}
We will upper bound the two terms on the right-hand side of \eqref{eq:private12} separately.
For the first term,
	\begin{align}
	&\frac12\left\|\mathcal{M}_{B\to\hat{M}} \left(\frac{1}{K}\sum_{k\in[K]} \rho_{BE}^{x_{m,k}}\right) - |m\rangle\langle m|_{\hat{M}} \otimes \frac{1}{K}\sum_{k\in[K]} \rho_{E}^{x_{m,k}} \right\|_1 \\
	&\overset{\textnormal{(a)}}{=}  \frac12\left\|\mathcal{M}_{B\to\hat{M}} \left(\frac{1}{K}\sum_{k\in[K]} \rho_{B}^{x_{m,k}}\right) - |m\rangle\langle m|_{\hat{M}} \right\|_1 \\
	&\overset{\textnormal{(b)}}{\leq} \frac{1}{K}\sum_{k\in[K]} \frac12\left\|\mathcal{M}_{B\to\hat{M}} \left( \rho_{B}^{x_{m,k}}\right) - |m\rangle\langle m|_{\hat{M}} \right\|_1 \\
	&\overset{\textnormal{(c)}}{\leq} \frac{1}{K}\sum_{k\in[K]} \frac12\left\|\mathcal{M}_{B\to\hat{M}\hat{K}} \left( \rho_{B}^{x_{m,k}}\right) - |m\rangle\langle m|_{\hat{M}}\otimes |k\rangle\langle k|_{\hat{K}} \right\|_1 \\
	&\overset{ }{=} \frac{1}{K}\sum_{k\in[K]}\Tr\left[ \left( \mathds{1}_B - \Pi_B^{x_{m,k}} \right) \rho_B^{x_{m,k}} \right]. \label{eq:private1_1}
	\end{align}
	Here, (a) follows from Fact~\ref{fact:invariance};
	(b) follows from convexity of trace norm;
	and (c)	relies on the fact that trace distance is increasing by appending additional systems $\hat{K}$.
	Applying expectation via random coding and choosing the best measurement $\mathcal{M}_{B\to \hat{M}\hat{K}}$, we obtain
	\begin{align}
	 \frac{1}{MK}\sum_{(m,k)\in[M]\times [K]} \mathds{E}_{x_{m,k}\sim p_X} \Tr\left[ \left( \mathds{1}_B - \Pi_B^{x_{m,k}} \right) \rho_B^{x_{m,k}} \right]
	&= \Delta^{\textnormal{\texttt{p}}}_{MK}\left(X:B\right)_{\rho}. \label{eq:private1_2}
	\end{align}
	Next, we bound the second term on the right-hand side of \eqref{eq:private12}:
	\begin{align}
	\mathds{E}_{x_{m,k}\sim p_X} \frac12\left\| |m\rangle\langle m|_{\hat{M}} \otimes \frac{1}{K}\sum_{k\in[K]} \rho_{E}^{x_{m,k}} - |m\rangle\langle m|_{\hat{M}} \otimes \rho_E \right\|_1
	&\overset{\textnormal{(a)}}{=} \mathds{E}_{x_{m,k}\sim p_X} \frac12\left\| \frac{1}{K}\sum_{k\in[K]} \rho_{E}^{x_{m,k}} -  \rho_E \right\|_1 \\
	&= \frac12 \left\| \rho_{X^{MK}E}^m - \rho_{X^{MK}}\otimes \rho_E \right\|_1 \\
	&\overset{\textnormal{(b)}}{=} \frac12 \left\| \rho_{X^{K}E}^m - \rho_{X^{K}}\otimes \rho_E \right\|_1 \\
	&= \Delta^{\textnormal{\texttt{c}}}_{K}\left(X:E\right)_{\rho}, \label{eq:private2_1}
	\end{align}
		where (a) and (b) follow from the fact that the trace-norm is invariant with respect to tensor-product states.
	Combining \eqref{eq:private12}, \eqref{eq:private1_2}, and  \eqref{eq:private2_1},
we complete the proof of the first claim. 
The second claim follows from Theorem~\ref{theo:exponent}.
\end{proof}

\subsection{Secret Key Distillation} \label{sec:secrete-key}
\hfill \break
A \emph{secret-key distillation} protocol aims to distill independent and uniform  secret key $$\vartheta_{M \hat{M}} := \frac{1}{M} \sum_{m\in[M]} |m\rangle\langle m|_M \otimes |m\rangle\langle m|_{\hat{M}}$$ shared between Alice (holding register $M$) and Bob (holding register $\hat{M}$) against Eve \cite{RR12, TLG+12, PR14, KKG+19, KW20}.
In this section, we consider a model \cite{KKG+19, KW20} described as follows.
\begin{defn}[Secret key distillation] \label{defn:secret-key}
		Let $\rho_{XBE} = \sum_{x\in\mathcal{X}} p_X(x) |x\rangle \langle x |\otimes \rho_{BE}^x$ be a classical-quantum-quantum state.
	\begin{enumerate}[1.]
		\item Alice holds a classical register $X$ at Alice; Bob and Eve hold quantum registers $B$ and $E$, respectively.
		
		\item Alice performs an encoding operation $\mathcal{E}_{X\to M Z_A}$, where $M$ is the classical register representing the key possessed at Alice, and $Z_A$ is a classical register to be sent to later Bob.
		
		\item Alice announces her register $Z_A$ via free and public classical communication.
				We denote by $\mathcal{I}_{Z_A \to Z_B Z_E}: |z\rangle\langle z|_{Z_A} \mapsto |z\rangle\langle z|_{Z_B} \otimes |z\rangle\langle z|_{Z_E}$ as an isometry such that Bob and Eve hold identical registers $Z_B$ and $Z_E$, respectively.
		
		\item Bob performs a decoding operation $\mathcal{D}_{Z_B B\to \hat{M}}$, where $\hat{M}$ is the classical register representing the resulting key possessed at Bob.
		
	\end{enumerate}
	
	A $(\log M, \eps)$ secret-key distillation protocol for $\rho_{XBE}$ exists if the \emph{security error} \cite{TLG+12, PR14, KKG+19, KW20} satisfies
	\begin{align}
		\frac12 \left\| \sigma_{M \hat{M} Z_E E} - \vartheta_{M \hat{M}} \otimes \sigma_{Z_E E} \right\|_1 \leq \eps,
	\end{align}
	where the overall joint is $\sigma_{M \hat{M} Z_E E} := \mathcal{D}_{Z_B B \to \hat{M}} \circ \mathcal{I}_{Z_A \to Z_B Z_E} \circ \mathcal{E}_{X\to M Z_A} (\rho_{XBE})$.
\end{defn}

Note that a canonical tripartite secret key distillation starts with a pure state $\psi_{ABE}$ that system $E$ at Eve purifies the original bipartite state $\rho_{AB}$ shared between Alice and Bob.
As pointed out in \cite[\S 15.1]{KW20}, Alice and Bob allow to perform \emph{operations and public classical communication} to obtain a classical-quantum-quantum state as described in our secret-key distillation model in Definition~\ref{defn:secret-key} (with Eve holding the classical register during communication).
In the following, we demonstrate how the established analysis for convex splitting can improve the distillable key for Definition~\ref{defn:secret-key}; other extensions would immediately apply.
Lastly, we remark that a close relation between secret key distillation and quantum wiretap channel coding was observed in \cite[Section IV]{KKG+19}. That is the reason why we can easily adapt the analysis given in Section~\ref{sec:private} to this situation.

\begin{theo}[One-shot error exponent for secrete key distillation] \label{theo:secret_exponent}
	Let $\rho_{XBE} = \sum_{x\in\mathcal{X}} p_X(x) |x\rangle \langle x |\otimes \rho_{BE}^x$ be a classical-quantum-quantum state.
	There exists a $\left(\log M, \eps \right)$ (one-way) secret-key distillation protocol for $\rho_{XBE}$ such that for any $K\in\mathds{N}$,
	\begin{align}
		\eps &\leq \Delta^{\textnormal{\texttt{p}}}_{MK}\left(X:B\right)_{\rho} + \Delta^{\textnormal{\texttt{c}}}_{K}\left(X:E\right)_{\rho} \\
		&\leq \mathrm{e}^{- \sup_{\alpha\in [ \sfrac12,1]} \frac{1-\alpha}{\alpha} \left( I^\uparrow_{2 - \sfrac{1}{\alpha}}\left(X:B\right)_\rho - \log(MK) \right)  } + \mathrm{e}^{- \sup_{\alpha\in[1,2]} \frac{\alpha-1}{\alpha} \left( \log K -  I^*_{\alpha}\left(X:E\right)_\rho  \right) }.
	\end{align}

	Moreover, the overall error exponent is positive if and only if there exists $K$ such that $\log (MK) < I(X:B)_\rho$ and $\log K > I(X:E)_\rho$, namely, $\log M < I(X:B)_\rho - I(X:E)_\rho$.
\end{theo}



\begin{remark}
	Again, following \cite[\S 4]{KKG+19}, our result in Theorem~\ref{theo:secret_exponent} straightforwardly extends to compound wiretap source state $\{\rho_{XBE}^s\}_{s\in\mathcal{S}}$ for some set $\mathcal{S}$.
\end{remark}

Applying Theorem~\ref{theo:direct} to
Theorem~\ref{theo:secret_exponent} yields the following one-shot distillable key length for the source $\rho_{XBE}$ and the corresponding second-order coding rate for $\rho_{XBE}^{\otimes n}$.
\begin{theo}[One-shot distillable secret key and second-order coding rate] \label{theo:secrete_rate}
	Let $\rho_{XBE} = \sum_{x\in\mathcal{X}} p_X(x) |x\rangle \langle x |\otimes \rho_{BE}^x$ be a classical-quantum-quantum state.
	For every $\eps_1, \eps_2>0$ with $\eps_1+\eps_2<1$, there exists a $\left(\log M, \eps_1 + \eps_2 \right)$ (one-way) secret-key distillation protocol for $\rho_{XBE}$ such that,
	\begin{align}
		\log M\geq I_\textnormal{h}^{\eps_1-\delta_1}\left( X:B \right)_\rho - I_\textnormal{h}^{1 - \eps_2 + 3\delta_2}\left( X: E\right)_\rho - \log \tfrac{1}{\delta_1} - \log\tfrac{\nu^2}{\delta_2^4},
	\end{align}
	where $\rho_{XBE} = \sum_{x\in\mathcal{X}} p_X(x)|x\rangle \langle x|\otimes \rho_{BE}^x$ and  $\nu = |\mathsf{E}|$.
	
	Moreover, for sufficiently large $n\in\mathds{N}$, there exists a $(\log M, \eps_1+\eps_2)$ private communication protocol for $\rho_{XBE}^{\otimes n}$ such that,
	\begin{align}
		\log M \geq n\left(I(X:B)_\rho - I(X:E)_\rho \right) + \sqrt{ n V(X:B)_\rho}  \Phi^{-1}(\eps_1) + \sqrt{ n V(X:E)_\rho}  \Phi^{-1}(\eps_2) - O(\log n).
	\end{align}
\end{theo}
\begin{remark}
	Theorem~\ref{theo:secrete_rate} quadratically improves the information leakage to Eve (i.e.~$\eps_2$) upon \cite[\S (251)]{KKG+19}, and hence, we obtain a better second-order achievable coding rate.
\end{remark}

\begin{proof}[Proof of Theorem~\ref{theo:secret_exponent}]

We follow the protocol given in \cite[\S 15.1.4]{KW20} and \cite[\S 4]{KKG+19} and prove the claimed error bound. For completeness, we briefly describe it here.

Alice chooses indices $(m,k) \in [M]\times [K]$ uniformly at random, for which $m$ means the realization of the key to be distilled and $k$ is a random index for the security purpose.
Then, Alice prepares $MK-1$ copies of her system $\rho_X = \sum_{x\in\mathcal{X}} p_X(x) |x\rangle\langle x|_X$ and announce it via public communication.
Namely, the resulting state shared between Alice (holding systems labelled $X$, $M$, and $K$), Bob (holding systems labelled $X'$ and $B$), and Eve (holding systems labelled $X''$ and $E$) is
\begin{align}
	\begin{split} \label{eq:secret_state}
	\rho_{MK X^{MK} X'^{MK} X''^{MK} BE} &:= \frac{1}{MK} \sum_{(m,k)\in[M]\times [K]} |m\rangle\langle m|_M \otimes |k\rangle\langle k|_K \otimes \rho_{X^{MK} X'^{MK} X''^{MK} BE}^{m,k},  \\
	\rho_{X^{MK} X'^{MK} X''^{MK} BE}^{m,k} &:= \rho_{X_{m,k} X_{m,k}' X_{m,k}'' BE} \bigotimes_{(\bar{m},\bar{k}) \neq (m,k)} \rho_{X_{\bar{m},\bar{k}} X_{\bar{m},\bar{k}}' X_{\bar{m},\bar{k}}''},
	\end{split}
\end{align}
where $\rho_{X^{MK} X'^{MK} X''^{MK} BE}$ denotes the state conditioned on the realization of $(m,k)$.

Our goal is to show that there is a measurement $\mathcal{M}_{X'^{MK} B\to\hat{M}}$ at Bob such that the associated secrecy error can be well estimated:
\begin{align}
	\eps &:=\frac12\left\| \mathcal{M}_{X'^{MK} B\to\hat{M}}\left(  \rho_{M X'^{MK} X''^{MK} BE} \right) - \vartheta_{M\hat{M}} \otimes \rho_{X''^{MK}} \otimes \rho_E \right\|_1 \notag
	\\
	\begin{split} \label{eq:secret0}
	&\leq
	\frac12\left\| \mathcal{M}_{X'^{MK} B\to\hat{M}}\left(  \rho_{M X'^{MK} X''^{MK} BE} \right) - \sum_{ \in[M]}\frac{1}{M} |m\rangle\langle m|_{M} \otimes |m\rangle\langle m|_{\hat{M}} \otimes \frac{1}{K} \sum_{k\in[K]}\rho_{X''^{MK} E}^{m,k} \right\|_1 \\
	&\quad
	+ \frac12\left\| \sum_{m\in[M]}\frac{1}{M} |m\rangle\langle m|_{M} \otimes |m\rangle\langle m|_{\hat{M}} \otimes \rho_{X''^{MK} E}^{m,k} - \vartheta_{M\hat{M}} \otimes \rho_{X''^{MK}} \otimes \rho_E \right\|_1.
	\end{split}
\end{align}
We then upper bound the two terms on the right-hand side of \eqref{eq:secret0}, separately.

Using the direct-sum structure of trace norm, we bound the first term of \eqref{eq:secret0} as:
\begin{align}
	2\eps_1 &:= \left\| \mathcal{M}_{X'^{MK} B\to\hat{M}}\left(  \rho_{M X'^{MK} X''^{MK} BE} \right) - \sum_{m\in[M]}\frac{1}{M} |m\rangle\langle m|_{M} \otimes |m\rangle\langle m|_{\hat{M}} \otimes \frac{1}{K} \sum_{k\in[K]} \rho_{X''^{MK} E}^{m,k} \right\|_1 \notag \\
	&\leq \frac1M \sum_{m\in[M]} \left\| \mathcal{M}_{X'^{MK} B\to\hat{M}}\left( \frac{1}{K} \sum_{k\in[K]} \rho_{X'^{MK} X''^{MK} BE}^{m,k} \right) - |m\rangle\langle m|_{\hat{M}} \otimes \frac{1}{K} \sum_{k\in[K]} \rho_{X''^{MK} E}^{m,k} \right\|_1 \\
	&\overset{\textnormal{(a)}}{=} \frac1M \sum_{m\in[M]} \left\| \mathcal{M}_{X'^{MK} B\to\hat{M}}\left( \frac{1}{K} \sum_{k\in[K]} \rho_{X'^{MK} B}^{m,k} \right) - |m\rangle\langle m|_{\hat{M}}  \right\|_1 \\
	&\overset{\textnormal{(b)}}{=} \frac{1}{MK} \sum_{(m,k)\in[M]\times [K]} \left\| \mathcal{M}_{X'^{MK} B\to\hat{M}}\left( \rho_{X'^{MK} B}^{m,k} \right) - |m\rangle\langle m|_{\hat{M}} \right\|_1 \\
	&\overset{\textnormal{(c)}}{\le } \frac{1}{MK} \sum_{(m,k)\in[M]\times [K]} \left\| \mathcal{M}_{X'^{MK} B\to\hat{M} \hat{K}}\left( \rho_{X'^{MK} B}^{m,k} \right) - |m\rangle\langle m|_{\hat{M}} \otimes |k\rangle\langle k|_{\hat{K}} \right\|_1. \label{eq:secret10}
\end{align}
Here, (a) follows from Fact~\ref{fact:invariance};
(b) follows from convexity of trace norm;
and (c)	relies on the fact that trace distance is increasing by appending additional systems $\hat{K}$.

Bob is allowed to choose the best measurement $\mathcal{M}_{X'^{MK} B\to\hat{M} \hat{K}}$ in \eqref{eq:secret10}. Hence, by relating it to the packing error given in \eqref{eq:packing}, we obtain
\begin{align}
	\eps_1
	\leq \Delta^{\textnormal{\texttt{p}}}_{MK}\left(X:B\right)_{\rho}. \label{eq:secret1}
\end{align}

Next, we upper bound the second term of \eqref{eq:secret0}. Again, using the direct-sum structure of trace norm, we obtain:
\begin{align}
	\eps_2 &:=
	\frac12\left\| \sum_{m\in[M]}\frac{1}{M} |m\rangle\langle m|_{M} \otimes |m\rangle\langle m|_{\hat{M}} \otimes \frac{1}{K} \sum_{k\in[K]} \rho_{X''^{MK} E}^{m,k} - \vartheta_{M\hat{M}} \otimes \rho_{X''^{MK}} \otimes \rho_E \right\|_1 \\
	&= \frac{1}{M}\sum_{m\in[M]} \frac12\left\| \frac{1}{K} \sum_{k\in[K]} \rho_{X''^{MK} E}^{m,k} -  \rho_{X''^{MK}} \otimes \rho_E \right\|_1 \\
	&= \frac{1}{M} \sum_{m\in[M]}\Delta^{\textnormal{\texttt{c}}}_{K}\left(X:E\right)_{\rho}\\
	&= \Delta^{\textnormal{\texttt{c}}}_{K}\left(X:E\right)_{\rho}, \label{eq:secret2}
\end{align}
where we invoke the covering error given in \eqref{eq:covering}.

Combining \eqref{eq:secret0}, \eqref{eq:secret1}, and  \eqref{eq:secret2},
we obtain the first claim of the proof.
The second claim immediately follows from the one-shot error-exponent of convex splitting proved in Theorem~\ref{theo:exponent} and the one-shot packing Lemma~\ref{lemm:packing}-\ref{lemma:packing-(b)}.
\end{proof}

\subsection{One-Way Quantum Message Compression} \label{sec:message}
We first define an information quantity.
For any bipartite density operator $\rho_{AB} \in \mathcal{S}(\mathsf{A}\otimes \mathsf{B})$ and $\alpha \in (1,\infty)$, we define:
\begin{align} \label{eq:double}
	I^{\downarrow \downarrow}(A:B)_\rho := \inf_{(\tau_A, \sigma_B)\in \mathcal{S}(\mathsf{A}) \times \mathcal{S}(\mathsf{B})}  D_\alpha^*(\rho_{AB}\,\Vert\, \tau_A \otimes \sigma_B).
\end{align}
The above quantity for the classical case was introduced independently by Tomamichel and Hayashi \cite{TH18} and Lapidoth and Pfister \cite{LP19}.

We follow the  one-way quantum message compression protocol studied by Anshu, Devabathini, and Jain\cite{ADJ17
}.
\begin{defn}[One-way quantum message compression] \label{def:one-way}
	Let $\Psi_{RACB}$ be a pure state as input of the protocol.
	\begin{enumerate}[1.]
		\item Alice holds quantum registers $AC$; Bob holds quantum register $B$; and $R$ is an inaccessible Reference.
		
		\item A resource of perfect unlimited entanglement are shared between Alice and Bob, and Alice send messages to Bob via  limited noiseless one-way classical communication.
		
		\item Alice applies an encoding on her system and the shared entanglement to obtain a
		compressed message set of size $M$.
		
		\item Alice sends the above message to Bob via the noiseless classical communication. 
		
		\item Bob applies a decoding on the received system and his shared entanglement to obtain an overall state $\hat{\Psi}_{RACB}$ where system  $C$ is at Bob now.
		\end{enumerate}
		A $(\log M, \eps)$ one-way quantum message compression protocol for $\Psi_{RACB}$ satisfies
		\begin{align}
			\frac12\left\| \Psi_{RACB} - \widehat{\Psi}_{RACB} \right\|_1 \leq \eps.
		\end{align}
		Here, $\log M$ is the \emph{classical communication cost}.
	
\end{defn}

Applying Theorem~\ref{theo:exponent} in Section~\ref{sec:exponet} gives us the following one-shot and hence finite blocklength error-exponent bounds.
\begin{theo}[One-way communication cost]
	For any $\Psi_{RACB}$ defined in Definition~\ref{def:one-way}, there exists a $(\log M, \eps)$ one-way quantum message compression protocol satisfying.
	\begin{align}
		\frac12\left\| \Psi_{RACB} - \widehat{\Psi}_{RACB} \right\|_1 \leq f\left( \Delta^{\textnormal{\texttt{c}}}_{M} \left( \Psi_{CRB} \, \Vert \, \tau_C \right) \right), \quad \forall\, \tau_C \in \mathcal{S}(\mathsf{C}).
	\end{align}
	where $f(u):= \sqrt{2u-u^2}$. Moreover, for any block length $n \in \mathds{N}$, by letting $r:= \frac1n \log M$, we have
    \begin{align}
		\frac12\left\| \Psi_{RACB}^{\otimes n} - \widehat{\Psi}_{R^n A^n C^n B^n }\right\|_1 \leq \sqrt{2} \mathrm{e}^{ - n \sup_{\alpha \in [1,2]} \frac{\alpha-1}{2\alpha} \left( r -  I_\alpha^{\downarrow\downarrow} \left( C:RB \right)_{\psi} \right) }, \quad \forall\, n \in \mathds{N}.
	\end{align}	
	The error exponent $\sup_{\alpha \in [1,2]} \frac{\alpha-1}{2\alpha} \left( r - I_\alpha^{\downarrow\downarrow} \left( C:RB \right)_{\psi} \right)$ is positive if and only if
	\begin{align}
		r > \lim_{\alpha\downarrow 1} I_\alpha^{\downarrow\downarrow} \left( C:RB \right)_{\psi} = I(C: RB)_{\Psi}.
	\end{align}
\end{theo}
\begin{proof}
	The first claim of relating the trace distance to convex splitting follows the protocol by Anshu Devabathini and Jain \cite[Theorem 1]{ADJ17}, and Uhlmann's theorem (Fact~\ref{fact:Uhlmann}).
	To obtain the second claim, we apply the one-shot error exponent of convex splitting (Theorem~\ref{theo:exponent} in Section~\ref{sec:exponet}) and an additivity property stated in Proposition~\ref{prop:properties}-(\ref{item:additivity}) below.
\end{proof}

\begin{prop}\label{prop:properties}
	Let $\rho_{AB} \in \mathcal{S}(\mathsf{A}\otimes \mathsf{B})$ be a bipartite density operator.
	Then, the following hold.
	\begin{enumerate}
		\item\label{item:convexity} (Convexity) The map $(\tau_A, \sigma_B) \mapsto D_\alpha^*(\rho_{AB}\,\Vert\, \tau_A \otimes \sigma_B)$ is convex on $\mathcal{S}(\mathsf{A}) \times \mathcal{S}(\mathsf{B})$ for $\alpha\in (1,\infty)$.
		
		\item\label{item:fixed-point} (Fixed-point property) 
		Suppose the underlying Hilbert spaces are all finite-dimensional.
		For $\alpha \in (1,\infty)$, any pair  $(\tau_A^\star, \sigma_B^\star)$ attaining $D_\alpha^*(\rho_{AB}\,\Vert\, \tau_A^\star \otimes \sigma_B^\star) = I^{\downarrow \downarrow}(A:B)_\rho$
		satisfy
		\begin{align}
			\tau_A^\star &= \frac{ \Tr_B \left[ \left( (\tau_A^\star\otimes \sigma_B^\star)^{\frac{1-\alpha}{2\alpha}} \rho_{AB} (\tau_A^\star\otimes \sigma_B^\star)^{\frac{1-\alpha}{2\alpha}} \right)^\alpha \right] }{ \Tr \left[ \left( (\tau_A^\star\otimes \sigma_B^\star)^{\frac{1-\alpha}{2\alpha}} \rho_{AB} (\tau_A^\star\otimes \sigma_B^\star)^{\frac{1-\alpha}{2\alpha}} \right)^\alpha \right]  }, \label{eq:fixed_tau} \\
			\sigma_B^\star &= \frac{ \Tr_A \left[ \left( (\tau_A^\star\otimes \sigma_B^\star)^{\frac{1-\alpha}{2\alpha}} \rho_{AB} (\tau_A^\star\otimes \sigma_B^\star)^{\frac{1-\alpha}{2\alpha}} \right)^\alpha \right] }{ \Tr \left[ \left( (\tau_A^\star\otimes \sigma_B^\star)^{\frac{1-\alpha}{2\alpha}} \rho_{AB} (\tau_A^\star\otimes \sigma_B^\star)^{\frac{1-\alpha}{2\alpha}} \right)^\alpha \right]  }. \label{eq:fixed_sigma}
		\end{align}
	
		\item\label{item:additivity} (Additivity) 
		Suppose the underlying Hilbert spaces are all finite-dimensional.		
		For any density operators $\rho_{A_1 B_1} \in \mathcal{S}(\mathsf{A}_1 \otimes \mathsf{B}_1 )$ and $\omega_{A_2 B_2} \in \mathcal{S}(\mathsf{A}_2 \otimes \mathsf{B}_2 )$, then for any $\alpha\in (1,\infty)$, the information quantity defined in \eqref{eq:double} satisfies:
		\begin{align} \label{eq:additivity}
			I^{\downarrow \downarrow}(A_1 A_2:B_1 B_2)_{\rho\otimes \omega} 
			= I^{\downarrow \downarrow}(A_1:B_1)_{\rho} + I^{\downarrow \downarrow}(A_2:B_2)_{\omega} . 
		\end{align}
	\end{enumerate}
	
\end{prop}
\begin{remark}
	Lapidoth and Pfister \cite{LP19} first established thorough properties of the quantity $I^{\downarrow \downarrow }_\alpha$ such as the convexity, fixed-point characterization, and the additivity in the classical setting.
	Proposition~\ref{prop:properties} is thus a generalization of part of the classical results in Ref.~\cite{LP19}.
	The convexity (Proposition~\ref{prop:properties}-(\ref{item:convexity})) relies on a non-trivial log-convexity property given in Lemma~\ref{lemm:log-convexity} of Appendix~\ref{sec:additivity}.
	The fixed-point property (Proposition~\ref{prop:properties}-(\ref{item:fixed-point})) follows from the convexity and the standard Fr\'echet derivative approach as in \cite{HT14, CHT19, MO18, RT22}.
	The additivity (Proposition~\ref{prop:properties}-(\ref{item:additivity})) is proved based on the fixed-point property.
	Similar idea for proving the additivity was used in the work by Nakibo\u{g}lu  for studying the classical Augustin information \cite{nakiboglu_augustin_2018}, the work by Mosonyi and Ogawa for studying the quantum Augustin information \cite{MO18}, and the work by Rubboli and  Tomamichel for studying the $\alpha$-$z$ R\'enyi divergence \cite{RT22}.
\end{remark}

\noindent The proof of Proposition~\ref{prop:properties} is deferred to Appendix~\ref{sec:additivity}.


\subsection{Quantum Measurement Simulation} \label{sec:measurement}
We describe the protocol of quantum measurement simulation below \cite{MP00, WM01, Win04, WHB+12, AJW19b}.
Throughout this section, for any POVM, say $\{\Pi_A^u \}_u$, we define the associated quantum measurement:
\begin{align}
    \mathcal{M}_{A\to U}^{\Pi}(\rho_A) := \sum_u \Tr\left[ \rho_A \Pi_A^u \right] |u\rangle \langle u|_U.
\end{align}
\begin{defn}[Non-feedback measurement compression] \label{defn:measurement}
	Let $\rho_A \in \mathcal{S}(\mathsf{A})$, $\left\{ \Lambda_A^x \right\}_{x\in\mathcal{X}}$ be a POVM on $\mathsf{A}$, and $\eps\in[0,1]$.
	A one-shot non-feedback quantum measurement compression protocol consists of the following.
	\begin{enumerate}[1.]
		\item Alice holds a quantum  register $A$ and
		classical registers $M$ and $L$; Bob holds  classical registers $M' \cong M$ and $X$.
		
		\item A resource $\vartheta_{MM'} $ of uniform and independent distribution on $[M]$ representing the uniform randomness shared at $M$ and $M'$.
		
		\item Alice applies an encoding operation $\mathcal{E}_{AM\to L}$ on her system $A$ and her part of the shared randomness $M$, and then send the resulting system $L$ to Bob via a noiseless classical channel.
		
		\item Bob applies the a decoding operation $\mathcal{D}_{LM'\to X}$ on the received system $L$ and his part of the shared randomness $M'$ to simulate the output classical system $X$.
	\end{enumerate}

	A $(\log M, \log L, \eps)$ (non-feedback) quantum measurement compression protocol for $\rho_A$ and $\{\Lambda_A^x\}_{x\in\mathcal{X}}$ satisfies
	\begin{align}
		\frac12 \left\| (\mathcal{D}\circ\mathcal{E})\left( \phi^{\rho}_{RA} \otimes \vartheta_{MM} \right) - \mathcal{M}^{\Lambda}_{A\to X}\left( \phi^{\rho}_{RA} \right) \right\|_1 \leq \eps
	\end{align}
	for any purification $\phi^{\rho}_{RA}$ of $\rho_A$, and $\mathcal{M}^{\Lambda}$ is a quantum measurement with respect to the POVM $\left\{ \Lambda_A^x \right\}_{x\in\mathcal{X}}$.
	The quantities $\log M$ denotes the consumption of shared randomness and $\log L$ represents the classical communication cost.
\end{defn}

With Theorem~\ref{theo:exponent}, we have the following one-shot error-exponent result.
\begin{theo}[One-shot error exponent for non-feedback measurement compression] \label{theo:measurement_one-shot}
	Let $\rho_A \in \mathcal{S}(A)$, $\left\{ \Lambda_A^x \right\}_{x\in\mathcal{X}}$ be a POVM on $\mathsf{A}$.
	For every $M,L \in \mathds{N}$, there exists $(\log M, \log L, \eps)$ non-feedback quantum measurement compression protocol 
	for $\rho_A$ and $\left\{ \Lambda_A^x \right\}_{x\in\mathcal{X}}$
	satisfying
	\begin{align}
	\eps &\leq
 \Delta^{\textnormal{\texttt{c}}}_{L}\left(U:RX\right)_{\theta} +
	 	f\left( \Delta^{\textnormal{\texttt{c}}}_{ML}\left(U:R\right)_{\theta} \right) \label{eq:measurement_one-shot} \\
	&\leq \mathrm{e}^{- \sup_{\alpha\in[1,2]} \frac{1-\alpha}{\alpha} \left( I^*_{\alpha}\left(U:RX\right)_\theta - \log L \right) }
	+ \sqrt{2} \cdot \mathrm{e}^{- \sup_{\alpha\in[1,2]} \frac{\alpha-1}{2\alpha} \left( \log (ML) - I^*_{\alpha}\left(U:R\right)_\theta  \right) },
	\end{align}
	for all Markov states $R$-$U$-$X$ of the form:
	\begin{align}  \label{eq:measurement_state}
	\theta_{RUX} := \sum_{x,u} p_{X|U}(x|u) |x\rangle \langle x|_X \otimes |u\rangle \langle u|_U \otimes \Tr_A\left[  \mathds{1}_R \otimes \Pi_A^u \left( \phi^{\rho}_{RA} \right) \right],
	\end{align}
	with $\phi^{\rho}_{RA}$ being a purification of $\rho_A$, and 
	with respect to all decompositions of the form:
	\begin{align}
	\mathcal{M}^{\Lambda}_{A\to X}(\rho_A) = \sum_{x,u} p_{X|U}(x|u) \Tr[\rho_A \Pi_A^u] \otimes |x\rangle \langle x|_X
	\end{align}
	for some POVM $\{\Pi_A^u\}_{u}$.
	Here, $f(u):= \sqrt{2u-u^2}$.
\end{theo}

%

Following~\eqref{eq:measurement_one-shot} and application of Theorem~\ref{theo:direct}, we obtain the following one-shot achievable rates.
\begin{theo}[One-shot achievable rates for non-feedback measurement compression] \label{theo:measurement_rate}
	Let $\rho_A \in \mathcal{S}(A)$, $\left\{ \Lambda_A^x \right\}_{x\in\mathcal{X}}$ be a POVM on $\mathsf{A}$, and ${\eps_1}, {\eps_2}>0$ with ${\eps_1} + \sqrt{2\eps_2} \leq 1$.
	Then, there exists a $(\log M, \log L, \eps_1 + \sqrt{2\eps_2})$-one-shot non-feedback quantum measurement compression protocol for $\rho_A$ and $\left\{ \Lambda_A^x \right\}_{x\in\mathcal{X}}$
	if $\log L$ and $\log M$ are in the following union of regions: for all $\delta_1 \in (0,\sfrac{\sqrt{2\eps_1}}{3})$ and $\delta_2\in (0,\sfrac{{\eps_2}}{3})$,
	\begin{align}
	\begin{dcases}
	\log L \geq I^{1- \sqrt{2\eps_1} + 3\delta_1}_\textnormal{h}(U:R)_\theta + \log \tfrac{|\mathsf{A}|^2}{\delta_1^4}, \\
	\log ML \geq I^{1- \eps_2 + 3\delta_2}_\textnormal{h}(U:RX)_\theta + \log \tfrac{|\mathsf{A}|^2}{\delta_2^4},
	\end{dcases}
	\end{align}
	where the information quantities are evaluated with respect for all Markov states $R$-$U$-$X$ of the form:
	\begin{align}  \label{eq:measurement_state}
	\theta_{RUX} := \sum_{x,u} p_{X|U}(x|u) |x\rangle \langle x|_X \otimes |u\rangle \langle u|_U \otimes \Tr_A\left[  \mathds{1}_R \otimes \Pi_A^u \left( \phi^{\rho}_{RA} \right) \right],
	\end{align}
	with $\phi^{\rho}_{RA}$ being a purification of $\rho_A$, and 
	with respect to all decompositions of the form:
	\begin{align}
	\mathcal{M}^{\Lambda}_{A\to X}(\rho_A) = \sum_{x,u} p_{X|U}(x|u) \Tr[\rho_A \Pi_A^u] \otimes |x\rangle \langle x|_X
	\end{align}
	for some POVM $\{\Pi_A^u\}_{u}$.
\end{theo}
\begin{proof}[Proof of Theorem~\ref{theo:measurement_one-shot}]
	For any state $\theta_{RUX}$ in \eqref{eq:measurement_state} and letter $u$, we define
	\begin{align}
		\theta_R^u &:= \frac{1}{p_U(u)}\Tr_A\left[  \mathds{1}_R \otimes \Pi_A^u \phi^{\rho}_{RA}  \right];\\
		\theta_X^u &:= \sum_{x} p_{X|U}(x|u) |x\rangle \langle x|_X; \\
		p_U(u) &= \Tr\left[ \Pi_A^u \rho_A \right].
	\end{align}
	
	We consider an random codebook $\mathcal{C} := \left\{ u_{m,l}\right\}_{(m,l)\in[M]\times [L]}$ whose codewords are independently and identically distributed according to distribution $p_U$.
	We adopt the standard random codebook argument by considering a realization of the random codebook $\mathcal{C}$ and taking expectation to bound the trace distance in the end.
	
	The proposed procedure of the encoding and decoding operations are defined as follows:
	\begin{align}
		\mathcal{E}_{AM\to L}(\,\cdot\,) &:= \sum_{m\in[M]} \langle m|\cdot |m\rangle_M  \otimes  \mathcal{M}^{\Upsilon^{(m)}}(\cdot);\\
		\mathcal{D}_{LM\to X}(\,\cdot\,) &:= \sum_{(m,l)\in[M]\times [L]} \langle m,l| \cdot |m,l\rangle_{ML} \otimes \theta_X^{u_{m,l}}.
	\end{align}
	That is, upon sampling a realization $m\in[M]$ of the shared randomness, Alice performs a measurement $\{\Upsilon_{l}^{(m)} \}_{l\in[L]}$ that will be determined later,
	and send the outcome $l\in[L]$ through a noiseless classical channel to Bob. Based on the received classical information $l$ together with the shared randomness $m$, Bob then prepares the state $\theta_X^{u_{m,l}}$ at his proposal, aiming to simulate the output distribution $\mathcal{M}^{\Lambda}(\rho_A)$.

	Now, we evaluate the performance of the protocol.
	We introduce an artificial state (which depends on $\mathcal{C}$):
	\begin{align}
		\Theta_{RX}^{\mathcal{C}} := \frac{1}{ML}\sum_{(m,l)\in[M]\times [L]} \theta_R^{u_{m,l}} \otimes \theta_X^{u_{m,l}}.
	\end{align}
	Then, triangle inequality implies that
	\begin{align}
		&\frac12 \left\| (\mathcal{D}\circ\mathcal{E})\left( \phi^{\rho}_{RA} \otimes \vartheta_{MM} \right) - \mathcal{M}^{\Lambda}_{A\to X}\left( \phi^{\rho}_{RA} \right) \right\|_1 \\
		&= \frac12\left\| \frac1M\sum_{m,l} \Tr_A\left[ \left( \mathds{1}_R\otimes \Upsilon^{(m)}_l \right)(\phi^\rho_{RA})\right] \otimes \theta_X^{u_{m,l}} - \sum_x \Tr_A\left[ \left(  \mathds{1}_R\otimes \Lambda^x_A\right)(\phi^\rho_{RA})\right] \otimes |x\rangle \langle x| \right\|_1 \\
		\begin{split} \label{eq:measurement_12}
			&\leq \frac12\left\| \frac1M\sum_{m,l} \Tr_A\left[ \left( \mathds{1}_R\otimes \Upsilon^{(m)}_l \right)(\phi^\rho_{RA})\right] \otimes \theta_X^{u_{m,l}} - \Theta_{RX}^{\mathcal{C}} \right\|_1 \\
			&\quad +\frac12\left\| \Theta_{RX}^{\mathcal{C}} - \sum_x \Tr_A\left[ \left(  \mathds{1}_R\otimes \Lambda^x_A\right)(\phi^\rho_{RA})\right] \otimes |x\rangle \langle x| \right\|_1.
		\end{split}
	\end{align}
	Note that
	\begin{align}
	\mathds{E}_{\mathcal{C}} \left[ \Theta_{RX}^{\mathcal{C}} \right]
	= \theta_{RX}
	= \sum_x \Tr_A\left[ \left(  \mathds{1}\otimes \Lambda^x_A\right)(\phi^\rho_{RA})\right] \otimes |x\rangle \langle x|.
	\end{align}	
	Then, after taking expectation over $u_{m,l} \sim p_U$ the second term of \eqref{eq:measurement_12} is
	\begin{align}
	\mathds{E}_{u_{m,l} \sim p_U}	\frac12\left\| \Theta_{RX}^{\mathcal{C}} - \sum_x \Tr_A\left[ \left(  \mathds{1}\otimes \Lambda^x_A\right)(\phi^\rho_{RA})\right] \otimes |x\rangle \langle x| \right\|_1
	= \Delta^{\textnormal{\texttt{c}}}_{ML}\left(U:RX\right)_{\theta}
	\end{align}
	by the direct-sum structure of trace norm.
	
	
	The first term of \eqref{eq:measurement_12} is computed as follows.
	Using convexity of trace-norm, 	we calculate: 
\begin{align}
	&\frac12\left\| \frac1M\sum_{m,l} \Tr_A\left[ \left( \mathds{1}\otimes \Upsilon^{(m)}_l \right)(\phi^\rho_{RA})\right] \otimes \theta_X^{u_{m,l}} - \Theta_{RX}^{\mathcal{C}} \right\|_1 \notag \\
	&\leq \frac1M \sum_{m\in[M]} \frac12\left\| \sum_{l\in[L]} \Tr_A \left[ \left( \mathds{1}\otimes \Upsilon^{(m)}_l \right)(\phi^\rho_{RA})\right] \otimes \theta_X^{u_{m,l}} - \frac{1}{L} \sum_{l\in[L]}\theta_R^{u_{m,l}} \otimes \theta_X^{u_{m,l}} \right\|_1 \\
	&\overset{\text{(a)}}{\leq} \frac1M \sum_{m\in[M]}
    f\left(  \frac12\left\| \sum_{l\in[L]} \Tr_A \left[ \left( \mathds{1}\otimes \Upsilon^{(m)}_l \right)(\phi^\rho_{RA})\right]  - \frac{1}{L} \sum_{l\in[L]}\theta_R^{u_{m,l}} \right\|_1 \right)\\
    &=\frac1M \sum_{m\in[M]} f\left( \frac12\left\| \rho_R - \frac1L\sum_{l\in[L]} \theta_R^{u_{m,l}} \right\|_1 \right),
	\end{align}
	where in (a) we recall Uhlmann's theorem (Fact~\ref{fact:Uhlmann}); namely, for each $m\in[M]$, we choose the measurement $\{\Upsilon^{(m)}_l \}_l$ such that its Stinespring dilation attains the equality in the Uhlamnn's theorem.

	Using concavity of $f(\cdot)$, we take expectation over $u_{m,l} \sim p_U$ and note that for each $m\in[M]$, $u_{m,l}$ is already independent for all $l\in[L]$, i.e.
	\begin{align}
		&\mathds{E}_{u_{m,l} \sim p_U} \frac12\left\| \frac1M\sum_{m,l} \Tr_A\left[ \left( \mathds{1}_R\otimes \Upsilon^{(m)}_l \right)(\phi^\rho_{RA})\right] \otimes \theta_X^{u_{m,l}} - \theta_{RX}^{\mathcal{C}} \right\|_1 \notag \\
		&\leq \mathds{E}_{u_{m,l}}  \frac1M \sum_{m\in[M]} f\left( \frac12\left\| \rho_R - \frac1L\sum_{l\in[L]} \theta_R^{u_{m,l}} \right\|_1 \right) \\
		&\leq  \frac1M \sum_{m\in[M]} f\left(  \mathds{E}_{u_{m,l}} \frac12\left\| \rho_R - \frac1L\sum_{l\in[L]} \theta_R^{u_{m,l}} \right\|_1 \right) \\
		&= f \left( \Delta^{\textnormal{\texttt{c}}}_{L}\left(U:R\right)_{\theta} \right),
	\end{align}
	which completes our claim in \eqref{eq:measurement_one-shot}. The second claim immediately follows from the one-shot error-exponent of convex splitting proved in Theorem~\ref{theo:exponent}.
\end{proof}

\subsection{Communication with Casual State Information at Encoder} \label{sec:state}
We describe the entanglement-assisted classical communication over quantum channel with casual state information at encoder below \cite{GK11, Dup10, AJW19a, AJW19c}.
\begin{defn}[Classical Communication over quantum channel coding with causal state information]
	Let $\mathcal{N}_{AS \to B}$ be a quantum channel with a pure entangled state $\vartheta_{S_A S}$ shared between encoder (holding $S_A$) and Channel (holding $S$).
	\begin{enumerate}[1.]
		\item Alice holds quantum registers $A$, $A'$, and $S_A$;
		Bob holds quantum registers $B$ and $R'$;
		and an additional quantum register $S$ is at the input of channel.
		
		
		\item An arbitrary resource of entanglement state $\theta_{ A' R'}$ is shared between  Alice  (holding $A'$) and Bob (holding $R'$).
		
		\item Alice applies an encoding $\mathcal{E}_{ S_A A' \to A}^{(m)}$ on her systems $S_A$ and $A'$ for sending each message $m\in[M]$.
		
		\item The quantum channel $\mathcal{N}_{AS \to B}$ is applied on Alice's register $A$ and the channel state system $S$.
		
		\item Bob performs a decoding measurement described by a POVM $\{{\Pi}_{B R'} \}_{m\in[M]}$ on $B$ to extract the sent message $m\in [M]$.
	\end{enumerate}
	
	A $(\log M, \eps)$-code for $\mathcal{N}_{AS \to B}$ with state information $\vartheta_{S_A A}$ is a protocol such that  the average error probability satisfies
	\begin{align}
		\frac{1}{M} \sum_{m\in [M]}	\Tr\left[  \mathcal{N}_{AS \to B} \circ \mathcal{E}_{S_A A' \to A}^m\left( \vartheta_{S_A S} \otimes  \theta_{A' R' }\right) \cdot \left( \mathds{1}- {\Pi}_{B R'}^m \right)\right]
		\leq \varepsilon.
	\end{align}
	The quantity $\log M$ denotes the classical communication rate.
\end{defn}

\begin{theo}[One-shot error exponent for communication with casual state information at encoder]\label{theo:state_exponent}
	Let $\vartheta_{S_A S}$ be the channel state information shared between Alice and Channel.
	Then, for every density operator $\theta_{ARS}$ satisfying $\theta_S = \vartheta_S$ and any $K\in\mathds{N}$, there exists a $(\log M,\eps)$-code for $\mathcal{N}_{AS\to B}$ satisfying
	\begin{align}
	\eps&\leq \Delta_{MK}^\textnormal{\texttt{p}}\left(R:B\right)_{\mathcal{N}_{AS\to B}(\theta) }
		+ f\left( \Delta_{K}^\textnormal{\texttt{c}}\left( R:S\right)_\theta \right) \label{theo:state}\\
	&\leq \mathrm{e}^{- \sup_{\alpha\in[\sfrac12,1]} \frac{1-\alpha}{\alpha} \left( I_{\alpha}^{\uparrow}\left(R:B\right)_{\mathcal{N}_{AS\to B}(\theta) }  - \log (MK) \right) }
	+ \sqrt{2} \mathrm{e}^{- \sup_{\alpha\in[1,2]} \frac{\alpha-1}{2\alpha} \left( \log K - I^*_{\alpha}\left(R:S\right)_\theta \right) }.
	\end{align}
	Here, $f(u) := \sqrt{2u-u^2}$. Moreover, the overall error exponent (i.e.~minimum of the two error exponents) is positive if and only if there exists $K\in\mathds{N}$ such $\log(MK) < I(R:B)_{\mathcal{N}(\theta)}$ and $\log K > I(R:S)_\theta$, namely, $$\log M < I(R:B)_{\mathcal{N}(\theta)} - I(R:S)_\theta .$$
\end{theo}

Applying Theorem~\ref{theo:direct} to \eqref{theo:state} leads us to the following one-shot achievable rate.
\begin{theo}[One-shot achievable rate and second-order coding rate for communication with casual state information at encoder] \label{theo:state_rate}
	Let $\vartheta_{S_A S}$ be the channel state information shared between Alice and Channel.
	Then, for every density operator $\theta_{ARS}$ satisfying $\theta_S = \vartheta_S$, there exists a $(\log M,\eps_1 + \sqrt{2\eps_2})$-code for $\mathcal{N}_{AS\to B}$ satisfying
	\begin{align}
		\log M \geq
		I_\textnormal{h}^{\eps_1 - \delta_1}\left(R:B\right)_{\mathcal{N}_{AS\to B}(\theta) }
		- I_\textnormal{h}^{1 - \eps_2 + 3\delta_2}\left( R:S\right)_\theta - \log\frac{1}{\delta_1} - \log\frac{\nu^2}{\delta_2^4},
	\end{align}
	where $\nu = |\mathsf{R}||\mathsf{S}|$.
	
	Moreover, for sufficiently large $n\in\mathds{N}$, there exists a $(\log M, \eps_1+ \sqrt{2\eps_2} )$-code for $\mathcal{N}_{AS\to B}^{\otimes n}$ such that the following holds for any density operator $\theta_{ARS}$ satisfying $\theta_S = \vartheta_S$,
	\begin{align}
		\log M \geq n\left(I(R:B)_{\mathcal{N}(\theta)} - I(R:S)_\theta \right) + \sqrt{ n V(R:B)_{\mathcal{N}(\theta)} }   \Phi^{-1}(\eps_1) + \sqrt{ n V(R:S)_\theta}   \Phi^{-1}(\eps_2) - O(\log n).
	\end{align}
\end{theo}

\begin{proof}[Proof of Theorem~\ref{theo:state_exponent}]
Given a (pure) channel state information $\vartheta_{S_A S}$ shared between Alice (holding system $S_A$) and Channel (holding system $S$),
fix a density operator $\theta_{ARS}$ satisfying $\theta_S = \vartheta_S$ as mentioned in the theorem. The coding 

\begin{itemize}	
	\item \textbf{Preparation:} Alice and Bob share $MK$ identical copies of entangled state $\theta_{R_A R}$.
	Hence, the starting state is
	\begin{align} \label{eq:tau_m_o}
		\mathring{\tau}_{S_ASR_{A_{1}}R_{1,1}\ldots R_{A_{M,K}}R_{M,K}} &:= \vartheta_{S_A S}  \bigotimes_{(m,k)\in[M]\times [K]}
		\theta_{R_{A_{m,k}}R_{m,k}},
	\end{align}
    where Alice holds $R_{A_{1,1}} \ldots R_{A_{M,K}}$ and Bob holds $R_{1,1} \ldots R_{M,K}$.
	
	\item \textbf{Encoding:} For each message $m\in[M]$, Alice applies an encoding map $\mathcal{E}_{ S_A R_{A_{m,1}} \ldots R_{A_{m,K}} \to A }$ to arrive at
	\begin{align}
		\mathring{\sigma}_{ASR_{m,1}\ldots R_{m,K}}^{(m)} := \mathcal{E}_{S_A R_{A_{m,1}}^{(m)}\ldots R_{A_{m,K}}\to A}\left(\mathring{\tau}_{S_ASR_{A_{{m,1}}}R_{m,1}\ldots R_{A_{{m,K}}}R_{{m,K}}}\right), \label{eq:true_sigma}
	\end{align}
	and sends her register $A$ via the channel $\mathcal{N}_{AS\to B}$.
	The encoding map  $\mathcal{E}$ will be specified later.

	\item \textbf{Decoding:}
	The channel output state (with his shared entanglement) at Bob	is denoted by
	\begin{align}
		\mathring{\rho}_{BR_{1,1}\ldots R_{M,K}}^{(m)} := \mathcal{N}_{AS\to B}\left( 	\mathring{\sigma}_{ASR_{m,1}\ldots R_{m,K}}\right) \bigotimes_{(\bar{m},k):  \bar{m}\neq m\, \&\, k\in[K] } \theta_{R_{\bar{m},k}}.
	\end{align}
	Then, Bob applies a POVM $\left\{ {\Pi}_{B R_{1,1}\ldots R_{M,K}}^{ (m,k) } \right\}_{(m,k)\in[M]\times [K]}$ to decode message $(m,k)$ and he reports $m$ as the sent message.
\end{itemize}

Before going to the error analysis, let us introduce some notation.
For each $m\in[M]$, we define
\begin{align}
	\tau_{SR_{m,1}\ldots R_{m,K}}^{(m)} &:= \frac1K \sum_{k\in[K]} \theta_{SR_{m,k}} \otimes \theta_{R_{m,1}} \otimes \cdots \otimes \theta_{R_{m,K}}; \label{eq:artifical_tau}\\
	\sigma_{ASR_{m,1}\ldots R_{m,K}}^{(m)} &:= \frac1K \sum_{k\in[K]} \theta_{ASR_{m,k}} \otimes \theta_{R_{m,1}} \otimes \cdots \otimes \theta_{R_{m,K}}; \label{eq:artifical_sigma} \\
	{\rho}_{BR_{1,1}\ldots R_{M,K}}^{(m)} &:= \mathcal{N}_{AS\to B}\left( 	{\sigma}_{ASR_{m,1}\ldots R_{m,K}}\right) \bigotimes_{(\bar{m},k):  \bar{m}\neq m\, \&\, k\in[K] } \theta_{R_{\bar{m},k}}. \label{eq:artifical_rho}
\end{align}
where $R_{m,k} = R$ for all $(m,k)\in [M]\times[K]$.

Following from the coding strategy give above, the error probability for each message $m\in[M]$ is
\begin{align}
	&\Tr\left[ \mathring{\rho}_{BR_{1}\ldots R_{MK}}^{(m)} \cdot \left(\sum_{ (\bar{m},k):  \bar{m}\neq m\, \&\, k\in[K]} {\Pi}_{B R_{1}\ldots R_{M,K}}^{ (\bar{m},k) } \right) \right] \notag \\
	&\overset{\textnormal{(a)}}{\leq} \Tr\left[ {\rho}_{BR_{1}\ldots R_{MK}}^{(m)} \cdot \left(\sum_{ (\bar{m},k):  \bar{m}\neq m\, \&\, k\in[K] } {\Pi}_{B R_{1}\ldots R_{M,K}}^{ (\bar{m},k) } \right) \right]
	+ \frac12\left\| {\rho}_{BR_{1}\ldots R_{MK}}^{(m)} - \mathring{\rho}_{BR_{1}\ldots R_{MK}}^{(m)} \right\|_1, \label{eq:state_12}
\end{align}
where we invoke change of measure (Fact~\ref{fact:1norm}-\ref{item:change}) in (a).
By the definition of ${\rho}^{(m)}_{BR_{1}\ldots R_{MK}}$ in \eqref{eq:artifical_rho}, the first term on the right-hand side of \eqref{eq:state_12} can be calculated as:
\begin{align}
	&\frac{1}{K} \sum_{k\in[K]} \Tr\left[  \mathcal{N}_{AS\to B}\left( {\theta}_{ASR_{m,k}} \right) \bigotimes\limits_{(\bar{m},\bar{k})\neq (m,k)} \theta_{R_{\bar{m},\bar{k}}} \cdot \left(\sum_{ (\bar{m},k):  \bar{m}\neq m\, \&\, k\in[K] } {\Pi}_{B R_{1}\ldots R_{M,K}}^{ (\bar{m},k) } \right) \right]\\
	&\leq \frac{1}{K} \sum_{k\in[K]} \Tr\left[  \mathcal{N}_{AS\to B}\left( {\theta}_{ASR_{m,k}} \right) \bigotimes\limits_{(\bar{m},\bar{k})\neq (m,k)} \theta_{R_{\bar{m},\bar{k}}} \cdot \left(\sum_{ (\bar{m},\bar{k})\neq (\bar{m},\bar{k}) } {\Pi}_{B R_{1}\ldots R_{M,K}}^{ (\bar{m},k) } \right) \right]\\
	&\overset{\textnormal{(a)}}{=} \frac{1}{K} \sum_{k\in[K]} \Delta_{MK}^\texttt{p}\left( R: B \right)_{\mathcal{N}_{AS\to B}(\theta_{ASR})}, \label{eq:state_12_1}
\end{align}
where in (a) we recall the definition of packing error in \eqref{eq:packing}.

Next, the second term in \eqref{eq:state_12} is bounded via monotone decrease of trace distance under the channel $\mathcal{N}_{AS\to B}$, i.e.~
\begin{align}
	\frac12\left\| {\rho}_{BR_{1}\ldots R_{MK}}^{(m)} - \mathring{\rho}_{BR_{1}\ldots R_{MK}}^{(m)} \right\|_1
	&\leq \frac12 \left\| \frac1K \sum_{k\in[K]} \theta_{ASR_{m,k}} \otimes \theta_{R_{m,1}} \otimes \cdots \otimes \theta_{R_{m,K}} - \mathring{\sigma}_{ASR_{m,1}\ldots R_{m,K}} \right\|_1. \label{eq:state_12_2_1}
\end{align}
Here, note that the state $\mathring{\sigma}_{ASR_{m,1}\ldots R_{m,K}}$ is the output of the encoding map $\mathcal{E}_{ S_A R_{A_{(m,1)}} \ldots R_{A_{m,K}} \to A }$ with the product state $\mathring{\tau}_{S_A SR_{m,1}\ldots R_{m,K}}$ as input.
On the other hand, if we could trace output system $A$ at the right-hand side of \eqref{eq:state_12_2_1}, we can formulate The trace distance in terms of a covering error, i.e.~
\begin{align}
	\frac12 \left\| \tau_{SR_{m,1}\ldots R_{m,K}}^{(m)} - \mathring{\tau}_{SR_{m,1}\ldots R_{m,K}} \right\|_1 = \Delta_K^\texttt{c}(R:S)_\theta, \quad \forall\, m\in[M],
\end{align}
where $\tau_{SR_{m,1}\ldots R_{m,k}}^{(m)}$ is introduced in \eqref{eq:artifical_tau}.
Hence, we choose the encoding map $\mathcal{E}_{ S_A R_{A_{(m,1)}} \ldots R_{A_{m,K}} \to A }$ such that its Stinespring dilation attains the Uhlmann's theorem in Fact~\ref{fact:Uhlmann} to obtain
\begin{align}
	\frac12 \left\| \frac1K \sum_{k\in[K]} \theta_{ASR_{m,k}} \otimes \theta_{R_{m,1}} \otimes \cdots \otimes \theta_{R_{m,K}} - \mathring{\sigma}_{ASR_{m,1}\ldots R_{m,K}} \right\|_1
	\leq f\left(  \Delta_K^\texttt{c}(R:S)_\theta \right), \quad \forall\, m\in[M]. \label{eq:state_12_2_2}
\end{align}

Combining \eqref{eq:state_12}, \eqref{eq:state_12_1}, \eqref{eq:state_12_2_1}, and \eqref{eq:state_12_2_2} proves our first claim.
The exponential bounds follow from the one-shot error exponent of convex splitting given in Theorem~\ref{theo:exponent}.
\end{proof}

\section{Conclusions and Discussions}
\label{sec:conclusions}

In this paper,
we establish a one-shot error exponent and one-shot strong converse analysis for convex splitting beyond the i.i.d.~asymptotic scenario. The exponents are additive under product (and probably non-identical) states, and hence applies to \emph{any} blocklength. 
Moreover, the positivity of the established error exponent immediately yields a meaningful \emph{achievable rate region}: $\log M > I(\rho_{AB}\,\Vert\,\tau_A)$.
Let us take the original convex split-lemma\footnote{We remark that it has been improved recently by Li and Yao \cite{LY21a, LY21b}} \cite{ADJ17} for illustration. 
The convex-splitting error under \emph{purified distance}, denoted by $\mathrm{P}_M(\rho_{AB}\,\Vert\,\tau_A)$ , was shown to be upper bounded by\footnote{More precisely, in Ref.~\cite{ADJ17}, the state on system $B$ in the second argument of $D_\infty^*$ is not optimized as compared to the one given in \eqref{eq:conclusion_1}. However, let us ignore this subtlety since such a difference can be further estimated.}:
\begin{align} \label{eq:conclusion_1}
    \mathrm{P}_M(\rho_{AB}\,\Vert\,\tau_A) \leq \mathrm{e}^{- \frac12 \left( \log M - I_\infty^*(\rho_{AB}\,\Vert\, \tau_A) \right) }.
\end{align}
The achievable exponent given in \eqref{eq:conclusion_1} is positive only if the  rate is far from the fundamental threshold, i.e.~$\log M > I_{\infty}^*(\rho_{AB}\|\tau_A) \geq I(\rho_{AB}\,\Vert\,\tau_A)$.  We remark that such a gap could be quite large in some scenarios. 

To circumvent this issue, a standard approach is to \emph{smooth} it.
However, it will incur an additional \emph{additive error penalty} such that it will very much deteriorates the error exponent.
Namely, in \cite[Corollary 1]{AJW19a} and \cite[Lemma 2]{Wil17b}, \eqref{eq:conclusion_1} was smoothed to yield
\begin{align} \label{eq:conclusion_smoothed}
    \mathrm{P}_M(\rho_{AB}\,\Vert\,\tau_A) \leq \mathrm{e}^{- \frac12 \left( \log M - I_\infty^{\eps'}(\rho_{AB}\,\Vert\, \tau_A) \right) } + \eps', \quad \forall\,\eps'>0,
\end{align}
where $I_\infty^{\eps'}$ denotes the smoothed max-relative entropy.
The additive error term $\eps'$ is effective in small deviation analysis (by choosing $\eps' = \eps + \sfrac{1}{\sqrt{n}}$ for an $\eps$-convex-splitting). However, we will loose the one-shot error-exponent expression in \eqref{eq:conclusion_smoothed} for large deviation analysis\footnote{
The philosophy behind the analysis is the following.
A polynomial \emph{multiplicative} penalty is fine for an error-exponent analysis, but an \emph{additive} penalty is hurting.
On the other hand, a \emph{multiplicative} penalty, say $2$, is damaging for small deviation analysis (because we want an $\eps$-capacity instead of a $2\eps$-capacity), but an \emph{additive} penalty of order $o(1)$ is fine for second-order asymptotics.
}.

In contrast, our one-shot strong converse bound holds for all rates outside the achievable rate region, which provides a complete one-shot analysis for all the rate regions.
{Hence, this is the first work being able to characterize the one-shot error performance at rates near the fundamental threshold $I(\rho_{AB}\,\Vert\,\tau_A)$.}
On one hand, it is tight enough to give a \emph{moderate deviation} result for the high-error regime \cite{CH17, CG22}; namely, while the rate approaches the fundamental limit from below, the trace distance converges to 1 asymptotically. 
On the other hand, it is unclear how to improve the established strong converse exponent even for the \emph{commuting case}.
To date, we do not find tighter strong converse bounds (up to a constant) even in the classical case. We leave it as an open question.

The proposed one-shot technique are used as a key building block in analyzing the \emph{covering-type} problems and establishing \text{new achievable one-shot error exponents} in several quantum information tasks mentioned above. 
We note that the applications are one-sided because convex splitting is only used in achievability analysis of quantum information theory, to our best knowledge.
This does not mean our converse analysis given in Sections~\ref{sec:converse_bound} and \ref{sec:strong_converse} are not important.
They illustrates that the derived achievability analysis for convex splitting under trace distance in this work is tight\footnote{That is to say, if a quantum achievability result is not tight, it is not the proposed convex splitting analysis to be blamed, but probably the way of applying it.}.
Here, we pose an open question: How to apply convex splitting for converse analysis in quantum information theory?

Some of the existing works on convex splitting employ relative entropy or purified distance as error criteria \cite{ADJ17, KL21, LY21a, LY21b}.
In this paper, we adopt trace distance for analysis.
We argue that both criteria have pros and cons depending on the contexts.
If the quantity of interest is already fidelity or purified distance, then probably using purified distance for convex splitting would be more appropriate because of the use of Uhlmann's theorem.
However, if the security criterion requires to be \emph{composable} \cite{Ren05, TLG+12, PR14}), then using trace distance as a figure of merit for convex splitting would be direct.
Moreover, a technique---\emph{change of measure}---is often used in classical and quantum information theory.
When applying it, the trace distance will naturally comes up (see also Fact~\ref{fact:1norm}).
As for change of measure under purified distance, a similar result was shown in \cite[Lemma 1]{AJW19a}: for any test $0\leq T\leq \mathds{1}$,
\begin{align}
    \sqrt{\Tr[\rho T ]} \leq \sqrt{\Tr[\sigma T]} + \mathrm{P}(\rho,\sigma).
\end{align}
It is not as tight as the one based on trace distance.

Lastly, we remark that the error exponent and strong converse bound derived in Section~\ref{sec:exponet} and the one-shot converse in Section~\ref{sec:converse_bound} hold for infinite-dimensional Hilbert spaces as well. However, the achievability for sample complexity in Section~\ref{sec:direct_bound} relies on pinching's inequality, where we need the finite-dimension assumption.
We leave it as a future work for removing such a technical requirement.

\section*{Acknowledgement}
We sincerely thank anonymous reviewers for giving us helpful comments, suggestions, and pointing out typos in our first version.
HC is supported by the Young Scholar Fellowship (Einstein Program) of the Ministry of Science and Technology, Taiwan (R.O.C.) under Grants No.~MOST 109-2636-E-002-001, No.~MOST 110-2636-E-002-009, No.~MOST 111-2636-E-002-001, No.~MOST 111-2119-M-007-006, and No.~MOST 111-2119-M-001-004, by the Yushan Young Scholar Program of the Ministry of Education, Taiwan (R.O.C.) under Grants No.~NTU-109V0904, No.~NTU-110V0904, and No.~NTU-111V0904 and by the research project ``Pioneering Research in Forefront Quantum Computing, Learning and Engineering'' of National Taiwan University under Grant No. NTU-CC-111L894605.''
LG is partially supported by NSF grant DMS-2154903.

\appendix
\section{An Additivity Property} \label{sec:additivity}
\begin{prop_add} 
	Let $\rho_{AB} \in \mathcal{S}(\mathsf{A}\otimes \mathsf{B})$ be a bipartite density operator.
	Then, the following hold.
	\begin{enumerate}
		\item\label{item:convexity} (Convexity) The map $(\tau_A, \sigma_B) \mapsto D_\alpha^*(\rho_{AB}\,\Vert\, \tau_A \otimes \sigma_B)$ is convex on $\mathcal{S}(\mathsf{A}) \times \mathcal{S}(\mathsf{B})$ for $\alpha\in (1,\infty)$.
		
		\item\label{item:fixed-point} (Fixed-point property) 
		Suppose the underlying Hilbert spaces are all finite-dimensional.
		For $\alpha \in (1,\infty)$, any pair  $(\tau_A^\star, \sigma_B^\star)$ attaining $D_\alpha^*(\rho_{AB}\,\Vert\, \tau_A^\star \otimes \sigma_B^\star) = I^{\downarrow \downarrow}(A:B)_\rho$
		satisfy
		\begin{align}
			\tau_A^\star &= \frac{ \Tr_B \left[ \left( (\tau_A^\star\otimes \sigma_B^\star)^{\frac{1-\alpha}{2\alpha}} \rho_{AB} (\tau_A^\star\otimes \sigma_B^\star)^{\frac{1-\alpha}{2\alpha}} \right)^\alpha \right] }{ \Tr \left[ \left( (\tau_A^\star\otimes \sigma_B^\star)^{\frac{1-\alpha}{2\alpha}} \rho_{AB} (\tau_A^\star\otimes \sigma_B^\star)^{\frac{1-\alpha}{2\alpha}} \right)^\alpha \right]  }, \label{eq:fixed_tau} \\
			\sigma_B^\star &= \frac{ \Tr_A \left[ \left( (\tau_A^\star\otimes \sigma_B^\star)^{\frac{1-\alpha}{2\alpha}} \rho_{AB} (\tau_A^\star\otimes \sigma_B^\star)^{\frac{1-\alpha}{2\alpha}} \right)^\alpha \right] }{ \Tr \left[ \left( (\tau_A^\star\otimes \sigma_B^\star)^{\frac{1-\alpha}{2\alpha}} \rho_{AB} (\tau_A^\star\otimes \sigma_B^\star)^{\frac{1-\alpha}{2\alpha}} \right)^\alpha \right]  }. \label{eq:fixed_sigma}
		\end{align}
	
		\item\label{item:additivity} (Additivity) 
		Suppose the underlying Hilbert spaces are all finite-dimensional.		
		For any density operators $\rho_{A_1 B_1} \in \mathcal{S}(\mathsf{A}_1 \otimes \mathsf{B}_1 )$ and $\omega_{A_2 B_2} \in \mathcal{S}(\mathsf{A}_2 \otimes \mathsf{B}_2 )$, then for any $\alpha\in (1,\infty)$, the information quantity defined in \eqref{eq:double} satisfies:
		\begin{align} \label{eq:additivity}
			I^{\downarrow \downarrow}(A_1 A_2:B_1 B_2)_{\rho\otimes \omega} 
			= I^{\downarrow \downarrow}(A_1:B_1)_{\rho} + I^{\downarrow \downarrow}(A_2:B_2)_{\omega} . 
		\end{align}
	\end{enumerate}
	
\end{prop_add}

\begin{proof}
	Item~(\ref{item:convexity}):
	By duality \cite[Lemma 12]{MDS+13} \cite{FL13},
	\begin{align}
		D_\alpha\left( \rho_{AB} \,\Vert\, \tau_{A} \otimes \sigma_{B} \right)
		= \frac{\alpha}{\alpha-1} \sup_{ \omega\geq 0,  \Tr[\omega]\leq 1 } \log \left\langle \left( \tau_A\otimes \sigma_B \right)^{\frac{1-\alpha}{\alpha}} , \rho_{AB}^{\frac12} \omega^{\frac{\alpha-1}{\alpha}} \rho_{AB}^{\frac12} \right\rangle.
	\end{align}
	%
	Then, the log-convexity provided in Lemma~\ref{lemm:log-convexity} with the linear functional 
	$f(\cdot ) = \left\langle \,\cdot\,, \rho_{AB}^{\frac12} \omega^{\frac{\alpha-1}{\alpha}} \rho_{AB}^{\frac12} \right\rangle$ with $s = \frac{1-\alpha}{\alpha} \in (-1,0)$ and the fact that pointwise supremum of convex functions is convex 
	yield the desired convexity.

	Item~(\ref{item:fixed-point}):
	The proof of this property closely follows from the derivation given by Tomamichel and Hayashi \cite[Appendix C]{HT14}.
	With the finite-dimension assumption, the minimizers exits by the lower semi-continuity of the sandwiched R\'enyi divergence in its second argument (see e.g.~\cite[Corollary 3.27]{MO17}) and the extreme-value theorem.
	
	For $\alpha\in(1,\infty)$, we define
	\begin{align}
		\chi_\alpha(\tau_A,\sigma_B) &:= \Tr \left[ \left( (\tau_A\otimes \sigma_B)^{\frac{1-\alpha}{2\alpha}} \rho_{AB} (\tau_A\otimes \sigma_B)^{\frac{1-\alpha}{2\alpha}} \right)^\alpha \right], \\
		\mathcal{M}_\alpha &:= \argmin_{ (\tau_{A}, \sigma_{B})  \in \mathcal{S}(\mathsf{A}_1  ) \times \mathcal{S}(\mathsf{B} )} D_\alpha^*\left( \rho_{AB} \,\Vert\, \tau_{A} \otimes \sigma_{B} \right), \\
		\mathcal{F}_\alpha &:= \left\{ (\tau_{A}, \sigma_{B})  \in \mathcal{S}(\mathsf{A} ) \times \mathcal{S}(\mathsf{B} ): \eqref{eq:fixed_tau} \text{ and } \eqref{eq:fixed_sigma} \text{ hold}  \right\}.
	\end{align}
	We claim that:
		\begin{align}
			\mathcal{M}_\alpha =		\mathcal{F}_\alpha.
		\end{align}
	Further, it is sufficient to consider a minimizer $(\tau_A, \sigma_B)$ that has full support.
	Since the map $\chi_\alpha(\tau_A,\sigma_B)$ is convex in $(\tau_A, \sigma_B)$ by Item~\ref{item:convexity},
	we take Fr\'echet derivative with direction of traceless operators $(\delta_A, \delta_B)$, i.e.~$\Tr[\delta_A] = \Tr[\delta_B] = 0$:
	\begin{align}
		&\partial_{\delta_A, \delta_B} \chi_\alpha(\tau_A, \sigma_B) \notag \\
		&= \alpha \Tr\left[ \left( \rho^{\frac12} \tau_A^{\frac{1-\alpha}{\alpha}} \otimes \sigma_B^{\frac{1-\alpha}{\alpha}} \rho^{\frac12} \right)^{\alpha-1} \cdot 
		\partial_{\delta_A, \delta_B} \left( \rho^{\frac12} \tau_A^{\frac{1-\alpha}{\alpha}} \otimes \sigma_B^{\frac{1-\alpha}{\alpha}} \rho^{\frac12} \right)
		\right] \\
		&= \alpha \Tr\left[ \tau_A^{\frac{1-\alpha}{2\alpha}} \otimes \sigma_B^{\frac{1-\alpha}{2\alpha}} \rho^{\frac12} \left( \rho^{\frac12} \tau_A^{\frac{1-\alpha}{\alpha}} \otimes \sigma_B^{\frac{1-\alpha}{\alpha}} \rho^{\frac12} \right)^{\alpha-1} \rho^{\frac12} \tau_A^{\frac{1-\alpha}{2\alpha}} \otimes \sigma_B^{\frac{1-\alpha}{2\alpha}} \right. \notag \\
		&\qquad \qquad \left.\cdot \tau_A^{-\frac{1-\alpha}{2\alpha}} \otimes \sigma_B^{-\frac{1-\alpha}{2\alpha}}
		\partial_{\delta_A, \delta_B} \left( \tau_A^{\frac{1-\alpha}{\alpha}} \otimes \sigma_B^{\frac{1-\alpha}{\alpha}}  \right) \tau_A^{-\frac{1-\alpha}{2\alpha}} \otimes \sigma_B^{-\frac{1-\alpha}{2\alpha}}
		\right] \\	
		&= \alpha \Tr\left[ \left( \tau_A^{\frac{1-\alpha}{2\alpha}} \otimes \sigma_B^{\frac{1-\alpha}{2\alpha}} \rho \tau_A^{\frac{1-\alpha}{2\alpha}} \otimes \sigma_B^{\frac{1-\alpha}{2\alpha}} \right)^\alpha \right. \notag \\
		&\qquad \qquad \left.\cdot \tau_A^{-\frac{1-\alpha}{2\alpha}} \otimes \sigma_B^{-\frac{1-\alpha}{2\alpha}}
		\partial_{\delta_A, \delta_B} \left( \tau_A^{\frac{1-\alpha}{\alpha}} \otimes \sigma_B^{\frac{1-\alpha}{\alpha}}  \right) \tau_A^{-\frac{1-\alpha}{2\alpha}} \otimes \sigma_B^{-\frac{1-\alpha}{2\alpha}}
		\right]. \label{eq:temp}
	\end{align}
	Now, we take the partial derivative with respect to $\sigma_B$ (with $\tau_A$ fixed), then \eqref{eq:temp} is
	\begin{align}
		\partial_{ \delta_B} \chi_\alpha(\tau_A, \sigma_B)
		&= \alpha \Tr\left[  \sigma_B^{-\frac12} \Tr_A\left[\left( \left( \tau_A \otimes \sigma_B \right)^{\frac{1-\alpha}{2\alpha}} \rho \left( \tau_A \otimes \sigma_B \right)^{\frac{1-\alpha}{2\alpha}} \right)^\alpha \right]  \sigma_B^{-\frac12}  
	\cdot    \sigma_B^{1-\frac{1}{2\alpha}} 
			\partial_{\delta_B} \left(  \sigma_B^{\frac{1-\alpha}{\alpha}}  \right) \sigma_B^{1-\frac{1}{2\alpha}}  
		\right]. \label{eq:partial_delta_B}
	\end{align}

	Note that the set $ \left\{\sigma_B^{1-\frac{1}{2\alpha}} 
	\partial_{\delta_B} \left(  \sigma_B^{\frac{1-\alpha}{\alpha}}  \right) \sigma_B^{1-\frac{1}{2\alpha}}\right\}_{\delta_A \in \mathcal{B}(\mathsf{A})}$ span the space of Hermitian traceless operators as pointed out in \cite[Appendix C]{HT14}.
	To demand the partial derivative $\partial_{ \delta_B} \chi_\alpha(\tau_A, \sigma_B)$ to be $0$ for all traceless $\delta_B$, we require that the
	\begin{align}
		\sigma_B^{-\frac12} \Tr_A\left[\left( \left( \tau_A \otimes \sigma_B \right)^{\frac{1-\alpha}{2\alpha}} \rho \left( \tau_A \otimes \sigma_B \right)^{\frac{1-\alpha}{2\alpha}} \right)^\alpha \right]  \sigma_B^{-\frac12}   \propto \mathds{1}_B,
	\end{align}
	which translates to \eqref{eq:fixed_tau}.
	
	On the other hand, \eqref{eq:temp} with taking partial derivative with respect to $\tau_A$ (with $\sigma_B$ fixed) can be rewritten as:
	\begin{align}
		\partial_{\delta_A} \chi_\alpha(\tau_A, \sigma_B)
		&= \alpha \Tr\left[ \tau_A^{-\frac12} \Tr_B\left[\left( \left( \tau_A \otimes \sigma_B \right)^{\frac{1-\alpha}{2\alpha}} \rho \left( \tau_A \otimes \sigma_B \right)^{\frac{1-\alpha}{2\alpha}} \right)^\alpha \right]  \tau_A^{-\frac12} \cdot   \tau_A^{1-\frac{1}{2\alpha}} 
			\partial_{\delta_A} \left( \tau_A^{\frac{1-\alpha}{\alpha}} \right)  \tau_A^{1-\frac{1}{2\alpha}} 
		\right].
	\end{align}
	Similarly, for it to be $0$ for all traceless operator $\delta_A$, we must have
	\begin{align}
		\tau_A \propto    \Tr_B\left[\left( \left( \tau_A \otimes \sigma_B \right)^{\frac{1-\alpha}{2\alpha}} \rho \left( \tau_A \otimes \sigma_B \right)^{\frac{1-\alpha}{2\alpha}} \right)^\alpha \right].
	\end{align}
	This completes the claim for Item~\ref{item:fixed-point}.

	Item~(\ref{item:additivity}): 
	Let $(\tau_{A_1}^\star, \sigma_{B_1}^\star) \in \mathcal{S}(\mathsf{A}_1) \times \mathcal{S}(\mathsf{B}_1)$ and $(\tau_{A_2}^\star, \sigma_{B_2}^\star) \in \mathcal{S}(\mathsf{A}_2) \times \mathcal{S}(\mathsf{B}_2)$ satisfy
	\begin{align}
		D_\alpha^*(\rho_{A_1 B_1}\,\Vert\, \tau_{A_1}^\star \otimes \sigma_{B_1}^\star) &= I^{\downarrow \downarrow}(A_1:B_1)_{\rho}, \\
		D_\alpha^*(\omega_{A_2 B_2}\,\Vert\, \tau_{A_2}^\star \otimes \sigma_{B_2}^\star) &= I^{\downarrow \downarrow}(A_2:B_2)_{\omega}.
	\end{align}
	Note that the ``$\leq$'' direction in \eqref{eq:additivity} is easy because by the additivity of $D_\alpha^*$,
	\begin{align}
		I^{\downarrow \downarrow}(A_1 A_2:B_1 B_2)_{\rho\otimes \omega} 
		&\leq D_\alpha^*(\rho_{A_1 B_1} \otimes \omega_{A_2 B_2} \,\Vert\, \tau_{A_1}^\star \otimes \tau_{A_2}^\star \otimes  \sigma_{B_1}^\star \otimes \sigma_{B_2}^\star) \\
		&= D_\alpha^*(\rho_{A_1 B_1}\,\Vert\, \tau_{A_1}^\star \otimes \sigma_{B_1}^\star) + D_\alpha^*(\rho_{A_2 B_2}\,\Vert\, \tau_{A_2}^\star \otimes \sigma_{B_2}^\star) \\
		&= I^{\downarrow \downarrow}(A_1:B_1)_{\rho} + I^{\downarrow \downarrow}(A_2:B_2)_{\omega}.
	\end{align}

	Next, we will show that the product states $(\tau_{A_1}^\star \otimes \tau_{A_2}^\star, \sigma_{B_1}^\star \otimes \sigma_{B_2}^\star)$ actually attain the minimum of $I^{\downarrow \downarrow}(A_1 A_2:B_1 B_2)_{\rho\otimes \omega} $.
	Indeed, by the fixed-point property given in Item~\ref{item:fixed-point}, the following four equations are satisfied:
	\begin{align}
		\tau_{A_1}^\star &= \frac{ \Tr_{B_1} \left[ \left( (\tau_{A_1}^\star\otimes \sigma_{B_1}^\star)^{\frac{1-\alpha}{2\alpha}} \rho_{A_1 B_1} (\tau_{A_1}^\star\otimes \sigma_{B_1}^\star)^{\frac{1-\alpha}{2\alpha}} \right)^\alpha \right] }{ \Tr \left[ \left( (\tau_{A_1}^\star\otimes \sigma_{B_1}^\star)^{\frac{1-\alpha}{2\alpha}} \rho_{A_1 B_1} (\tau_{A_1}^\star\otimes \sigma_{B_1}^\star)^{\frac{1-\alpha}{2\alpha}} \right)^\alpha \right]  }, \label{eq:tau_A1} \\
		\sigma_{B_1}^\star &= \frac{ \Tr_{A_1} \left[ \left( (\tau_{A_1}^\star\otimes \sigma_{B_1}^\star)^{\frac{1-\alpha}{2\alpha}} \rho_{A_1 B_1} (\tau_{A_1}^\star\otimes \sigma_{B_1}^\star)^{\frac{1-\alpha}{2\alpha}} \right)^\alpha \right] }{ \Tr \left[ \left( (\tau_{A_1}^\star\otimes \sigma_{B_1}^\star)^{\frac{1-\alpha}{2\alpha}} \rho_{A_1 B_1} (\tau_{A_1}^\star\otimes \sigma_{B_1}^\star)^{\frac{1-\alpha}{2\alpha}} \right)^\alpha \right]  }, \label{eq:sigma_B1} \\
		\tau_{A_2}^\star &= \frac{ \Tr_{B_2} \left[ \left( (\tau_{A_2}^\star\otimes \sigma_{B_2}^\star)^{\frac{1-\alpha}{2\alpha}} \omega_{A_2 B_2} (\tau_{A_2}^\star\otimes \sigma_{B_2}^\star)^{\frac{1-\alpha}{2\alpha}} \right)^\alpha \right] }{ \Tr \left[ \left( (\tau_{A_2}^\star\otimes \sigma_{B_2}^\star)^{\frac{1-\alpha}{2\alpha}} \omega_{A_2 B_2} (\tau_{A_2}^\star\otimes \sigma_{B_1}^\star)^{\frac{1-\alpha}{2\alpha}} \right)^\alpha \right]  }, \label{eq:tau_A2} \\
		\sigma_{B_2}^\star &= \frac{ \Tr_{A_2} \left[ \left( (\tau_{A_2}^\star\otimes \sigma_{B_2}^\star)^{\frac{1-\alpha}{2\alpha}} \omega_{A_2 B_2} (\tau_{A_2}^\star\otimes \sigma_{B_2}^\star)^{\frac{1-\alpha}{2\alpha}} \right)^\alpha \right] }{ \Tr \left[ \left( (\tau_{A_2}^\star\otimes \sigma_{B_2}^\star)^{\frac{1-\alpha}{2\alpha}} \omega_{A_2 B_2} (\tau_{A_2}^\star\otimes \sigma_{B_2}^\star)^{\frac{1-\alpha}{2\alpha}} \right)^\alpha \right]  }. \label{eq:sigma_B2}
	\end{align}
	By tensor product the right-hand side of \eqref{eq:tau_A1} and \eqref{eq:tau_A2},  and 
	similarly tensor product the right-hand side of \eqref{eq:sigma_B1} and \eqref{eq:sigma_B2},
	one can immediately check that both the product states $\tau_{A_1 A_2}^\star := \tau_{A_1}^\star \otimes \tau_{A_2}^\star$ and 	$\sigma_{B_1 B_2}^\star := \sigma_{B_1}^\star \otimes \sigma_{B_2}^\star$
	satisfy the following fixed-point equations:
	\begin{align}
		\tau_{A_1 A_2}^\star &= \frac{ \Tr_{B_1 B_2} \left[ \left( (\tau_{A_1 A_2}^\star\otimes \sigma_{B_1 B_2}^\star)^{\frac{1-\alpha}{2\alpha}} \rho_{A_1 B_1} \otimes \omega_{A_2 B_2} (\tau_{A_1 A_2}^\star\otimes \sigma_{B_1 B_2}^\star)^{\frac{1-\alpha}{2\alpha}} \right)^\alpha \right] }{ \Tr \left[ \left( (\tau_{A_1 A_2}^\star\otimes \sigma_{B_1 B_2}^\star)^{\frac{1-\alpha}{2\alpha}} \rho_{A_1 B_1} \otimes \omega_{A_2 B_2} (\tau_{A_1 A_2}^\star\otimes \sigma_{B_1 B_2}^\star)^{\frac{1-\alpha}{2\alpha}} \right)^\alpha \right]  }, \label{eq:tau_12}\\
		\sigma_{B_1 B_2}^\star &= \frac{ \Tr_{A_1 A_2} \left[ \left( (\tau_{A_1 A_2}^\star\otimes \sigma_{B_1 B_2}^\star)^{\frac{1-\alpha}{2\alpha}} \rho_{A_1 B_1} \otimes \omega_{A_2 B_2} (\tau_{A_1 A_2}^\star\otimes \sigma_{B_1 B_2}^\star)^{\frac{1-\alpha}{2\alpha}} \right)^\alpha \right] }{ \Tr \left[ \left( (\tau_{A_1 A_2}^\star\otimes \sigma_{B_1 B_2}^\star)^{\frac{1-\alpha}{2\alpha}} \rho_{A_1 B_1} \otimes \omega_{A_2 B_2} (\tau_{A_1 A_2}^\star\otimes \sigma_{B_1 B_2}^\star)^{\frac{1-\alpha}{2\alpha}} \right)^\alpha \right]  }. \label{eq:sigma_12}
	\end{align}
	This again with the fixed-point property (Item~\ref{item:fixed-point}) ensure that the product states $(\tau_{A_1}^\star \otimes \tau_{A_2}^\star, \sigma_{B_1}^\star \otimes \sigma_{B_2}^\star)$ must be one of the minimizers of $I^{\downarrow \downarrow}(A_1 A_2:B_1 B_2)_{\rho\otimes \omega} $. This then completes the proof.
\end{proof}

\begin{lemm}[A log-convexity property] \label{lemm:log-convexity}
	Let $f$ be a positive linear functional on $\mathcal{B}(\mathsf{A}\otimes \mathsf{B})$.
	Then, for $s\in[-1,0)$, the functional 
	\begin{align}
		(\tau_A, \sigma_B) \mapsto \log f\left( (\tau_A\otimes \sigma_B)^s \right)
	\end{align}
	is jointly convex on the interior of $\mathcal{B}(\mathsf{A})\times \mathcal{B}(\mathsf{B})$.
\end{lemm}

\begin{remark}
	Lemma~\ref{lemm:log-convexity} for a special case that one of the Hilbert space is void was proved by Ando and Hiai \cite[Proposition 1.1]{AH10}, which can be used to prove the convexity of the quantum sandwiched R'enyi divergence in its second argument; see e.g.~Ref.~\cite[Proposition 3.18]{MO17} by Mosonyi and Ogawa.
\end{remark}

\begin{proof}[Proof of Lemma~\ref{lemm:log-convexity}]
	Denote by 
	\begin{align}
		X \# Y := X^\frac12 \left(X^{-\frac12} Y X^{-\frac12} \right)^{\frac12} X^{\frac12}
	\end{align}
	be the symmetric operator geometric mean.
	We first prove an operator inequality that will be used in the proof.
	For any positive definite operators $X$ and $Y$, we have 
	\begin{align}
		\left( \frac{X+Y}{2} \right)^s &= \left(\left( \frac{X+Y}{2} \right)^{-s}\right)^{-1} \\
		&\overset{\textnormal{(a)}}{\leq} \left( \frac{X^{-s} + Y^{-s}}{2} \right)^{-1} \\
		&\overset{\textnormal{(b)}}{\leq} \left( X^{-s} \# Y^{-s} \right)^{-1} \\
		&\overset{\textnormal{(c)}}{=} X^s \# Y^s, \label{eq:AG}
	\end{align}
	where (a) follows from the operator concavity of the power function $(\cdot)^{-s}$ for $-s\in(0,1]$, i.e.~$(\frac{X+Y}{2})^{-s} \geq \frac{X^{-s}+Y^{-s}}{2}$ and that inverse is operator monotone decreasing;
	(b) follows from the operator arithmetic-geometric mean inequality, i.e.~$\frac{X^{-s}+Y^{-s}}{2} \geq X^{-s} \# Y^{-s}$ and again that inverse is operator monotone decreasing;
	and (c) is by the definition of the operator geometric mean.

	Next, let $\tau_A, \tau_A' \in \mathcal{B}(\mathsf{A})$ and $\sigma_B , \sigma_B' \in \mathcal{B}(\mathsf{B})$ be positive semi-definite.
	Then, by applying the operator inequality in \eqref{eq:AG} twice and note that tensor product $(X,Y)\mapsto X\otimes Y$ is operator monotone increasing, we get
	\begin{align}
		\left( \left(\frac{\tau_A+\tau_A'}{2}\right) \otimes \left(\frac{\sigma_B+\sigma_B'}{2}\right) \right)^s 
		&= \left(\frac{\tau_A+\tau_A'}{2}\right)^s \otimes \left(\frac{\sigma_B+\sigma_B'}{2}\right)^s \label{eq:original}\\
	&\leq (\tau_A)^s\#(\tau_A')^s \otimes (\sigma_B)^s\#(\sigma_B')^s \\
		&= \left( (\tau_A)^s \otimes  (\sigma_B)^s \right) \# \left( (\tau_A')^s \otimes  (\sigma_B')^s \right) \\
		&= (\tau_A\otimes \sigma_B)^s \# (\tau_A ' \otimes \sigma_B')^s. \label{eq:goal}
	\end{align}
	
	Then, by a standard argument, for any $t>0$,
	\begin{align}
		\left( \left(\frac{\tau_A+\tau_A'}{2}\right) \otimes \left(\frac{\sigma_B+\sigma_B'}{2}\right) \right)^s 
		&\leq (\tau_A\otimes \sigma_B)^s \# (\tau_A ' \otimes \sigma_B')^s \\
		&= t(\tau_A\otimes \sigma_B)^s \# t^{-1}(\tau_A ' \otimes \sigma_B')^s \\
		&\leq \frac{ t(\tau_A\otimes \sigma_B)^s + t^{-1}(\tau_A ' \otimes \sigma_B')^s  }{2}.
	\end{align}
	This implies that, for any $t>0$,
	\begin{align}
	f\left(\left( \left(\frac{\tau_A+\tau_A'}{2}\right) \otimes \left(\frac{\sigma_B+\sigma_B'}{2}\right) \right)^s\right) &\leq \frac{ tf\left((\tau_A\otimes \sigma_B)^s\right) + t^{-1}f\left((\tau_A ' \otimes \sigma_B')^s\right)  }{2}.
	\end{align}	
	Optimizing over $t>0$, gives
	\begin{align}
		f\left(\left( \left(\frac{\tau_A+\tau_A'}{2}\right) \otimes \left(\frac{\sigma_B+\sigma_B'}{2}\right) \right)^s\right) \leq \sqrt{ f\left( (\tau_A\otimes \sigma_B)^s\right) f\left( (\tau_A'\otimes \sigma_B')^s\right) }.
	\end{align}
	After taking logarithm, we complete the proof via the mid-point convexity.
\end{proof}


{\larger
\bibliographystyle{myIEEEtran}
\bibliography{reference.bib}
}

\end{document}